\tikzstyle{vertex}=[circle, draw, inner sep=0pt, minimum size=6pt]
\newcommand{\vertex}{\node[vertex]}
\numberwithin{equation}{section}
\theoremstyle{plain} %default (text in italic)
\newtheorem{theorem}{Theorem}[section]
\newtheorem{lemma}[theorem]{Lemma}
\newtheorem{proposition}[theorem]{Proposition}
\newtheorem{corollary}[theorem]{Corollary}
\theoremstyle{remark}
\newtheorem{definition}[theorem]{Definition}
\newtheorem{dfn}[theorem]{Definition}
\newtheorem{example}[theorem]{Example}
\newtheorem{remark}[theorem]{Remark}
\newcommand{\bthe}{\begin{theorem}}
\newcommand{\ethe}{\end{theorem}}
\newcommand{\ben}{\begin{enumerate}}
\newcommand{\een}{\end{enumerate}}
\newcommand{\bit}{\begin{itemize}}
\newcommand{\eit}{\end{itemize}}
\newcommand{\beq}{\begin{equation}}
\newcommand{\eeq}{\end{equation}}
\newcommand{\ble}{\begin{lemma}}
\newcommand{\ele}{\end{lemma}}
\newcommand{\DASe}{\begin{definition}\rm}
\newcommand{\ede}{\halmos\end{definition}}
\newcommand{\bco}{\begin{corollary}}
\newcommand{\eco}{\end{corollary}}
\newcommand{\bpr}{\begin{proposition}}
\newcommand{\epr}{\end{proposition}}
\newcommand{\brem}{\begin{remark}\rm}
\newcommand{\erem}{\end{remark}}
\newcommand{\bproof}{\begin{proof}}
\newcommand{\eproof}{\end{proof}}
\newcommand{\bexam}{\begin{example}\rm}
\newcommand{\eexam}{\end{example}}
\newcommand{\bfi}{\begin{fig}}
\newcommand{\efi}{\end{fig}}
\newcommand{\btab}{\begin{tab}}
\newcommand{\etab}{\end{tab}}
\newcommand{\beao}{\begin{eqnarray*}}
\newcommand{\eeao}{\end{eqnarray*}\noindent}
\newcommand{\balo}{\begin{align*}}
\newcommand{\ealo}{\end{align*}}
\newcommand{\balm}{\begin{align}}
\newcommand{\ealm}{\end{align}\noindent}
\newcommand{\beam}{\begin{eqnarray}}
\newcommand{\eeam}{\end{eqnarray}\noindent}
\newcommand{\barr}{\begin{array}}
\newcommand{\earr}{\end{array}}
\newcommand{\E}{\mathbb{E}}
\newcommand{\M}{\mathbb{M}}
\renewcommand\P{\mathbb{P}}
\newcommand{\R}{\mathbb{R}}
\newcommand{\MO}{\mathrm{MO}}
\newcommand{\VF}[1]{{\color{blue} #1}}
\def\MRV{\mathcal{MRV}}
\def\RV{\mathcal{RV}}
\def\PGC{\mathrm{P\text{-}GC}}
\def\PMOC{\mathrm{P\text{-}MOC}}
\def\binfty{\boldsymbol \infty}
\def\cB{\mathcal{B}}
\def\bzero{\boldsymbol 0}
\def\bone{\boldsymbol 1}
\def\bA{\boldsymbol A}
\def\bX{\boldsymbol X}
\def\bY{\boldsymbol Y}
\def\bZ{\boldsymbol Z}
\def\bx{\boldsymbol x}
\def\bz{\boldsymbol z}
\def\bv{\boldsymbol v}
\def\be{\boldsymbol e}
\def\bl{\boldsymbol l}
\def\bE{\mathbf E}
\newcommand{\eqd}{\stackrel{\mathrm{d}}{=}}
\newcommand{\tto}{{t\to\infty}}
\newcommand{\Var}{\text{VaR}}
\newcommand{\VaR}{\text{VaR}}
\newcommand{\CoVaR}{\text{CoVaR}}
\newcommand{\ECI}{\text{ECI}}
\newcommand{\rhom}{\rho^{\vee}}%{\rho_{\max}}
\newcommand{\ov}{\overline}
\newcommand{\vague}{\stackrel{\lower0.2ex\hbox{$\scriptscriptstyle
                    \it{v} $}}{\rightarrow}}
\newcommand{\weak}{\stackrel{\lower0.2ex\hbox{$\scriptscriptstyle
                    \it{w} $}}{\rightarrow}}
\newcommand{\what}{\stackrel{\lower0.2ex\hbox{$\scriptscriptstyle
                    \it{\hat{w}} $}}{\rightarrow}}
\newcommand{\eqdis}{\stackrel{\lower0.2ex\hbox{$\scriptscriptstyle
                    \mathrm{d}$}}{=}}
\newcommand{\distr}{\stackrel{\lower0.2ex\hbox{$\scriptscriptstyle
                    \it{d} $}}{\rightarrow}}
\definecolor{darkgreen}{RGB}{0,139,0}
\newcommand{\DAS}[1]{{\color{red} #1}}
\begin{document}

\begin{frontmatter}
\title{Measuring risk contagion in \vspace*{0.2cm} \\  financial networks with CoVaR}% %using a bipartite graph of overlapping portfolios}%: \
%\title{Measuring risk contagion in financial networks: \\ The effects of asymptotic independence}
%\title{Measuring risk contagion in financial networks: \\ The effect of asymptotically independent objects}
%\title{Systemic risks with asymptotic independence in bipartite networks}
%\title{Extremal events in financial networks: \\ The effects of asymptotic independence}
%\title{Financial risk measures in complex networks:\\ $\mbox{}$ \vspace*{-0.2cm} \\ The effect of asymptotic independence}
%\title{Risk contagion in financial networks: \\ The effects of asymptotic independence}
%\title{Systemic risk in financial networks: \\ heavy tails and asymptotic independence}
\runtitle{Risk contagion in networks}

\begin{aug}
  \author[A]{\fnms{Bikramjit} \snm{Das}\ead[label=e1]{bikram@sutd.edu.sg}\orcid{0000-0002-6172-8228}}
    \and
   \author[B]{\fnms{Vicky} \snm{Fasen-Hartmann}\ead[label=e2]{vicky.fasen@kit.edu}\orcid{0000-0002-5758-1999}}
%\thanksref{t1}\thankstext{t1}
 \address[A]{Engineering Systems and Design, Singapore University of Technology and Design  \printead[presep={,\ }]{e1}}

   \address[B]{Institute of Stochastics, Karlsruhe Institute of Technology\printead[presep={,\ }]{e2}}

%  \thanksref{T1}
%  \thankstext{T1}{}

  \runauthor{B. Das and  V. Fasen-Hartmann}
\end{aug}

\begin{abstract}
The stability of a complex financial system may be assessed by measuring risk contagion between various financial institutions with relatively high exposure. We consider a financial network model using a bipartite graph of financial institutions (e.g., banks, investment companies, insurance firms) on one side and financial assets on the other. Following empirical evidence, returns from such risky assets are modeled by heavy-tailed distributions, whereas their joint dependence is characterized by copula models exhibiting a variety of tail dependence behavior. We consider CoVaR, a popular measure of risk contagion and study its asymptotic behavior under broad model assumptions. We further propose the \emph{Extreme CoVaR Index} (ECI) for capturing the strength of risk contagion between risk entities in such networks, which is particularly useful for models exhibiting asymptotic independence. The results are illustrated by providing precise expressions of CoVaR and ECI when the dependence of the assets is modeled using two well-known multivariate dependence structures: the Gaussian copula and the Marshall-Olkin copula.

\end{abstract}

\begin{keyword}[class=AMS]
\kwd[Primary ]{62G32}  % Statistics of extreme values; tail inference
\kwd{91G45} %Financial networks (including contagion, systemic risk, regulation)
\kwd{91G70}  %Statistical methods; risk measures
\kwd[; Secondary ]{60G70} %Extreme value theory; extremal stochastic processes
\kwd{62H05} %Characterization and structure theory for multivariate probability distributions; copulas
\end{keyword}

%\begin{keyword}[class=JEL]
%\kwd{\emph{JEL subject classifications: } }
%\kwd[Primary ]{D81} %Criteria for Decision-Making under Risk and Uncertainty
%\kwd{C14} % semiparametric and non-parametric modeling
%\kwd[; Secondary ]{C61} % OPtimization Techniques; Programming Models; Dynamic Analysis
%\kwd{C63} % Computational Techniques
%\kwd{D81} %Criteria for Decision-Making under Risk and Uncertainty
%\kwd{G11} % Portfolio choice
%\kwd{G22} % insurance
%\end{keyword}

\begin{keyword}
%\kwd{bipartite graphs}
%\kwd{asymptotic independence}
\kwd{bipartite graph}
\kwd{copula models}
\kwd{CoVaR}
\kwd{financial network}
\kwd{heavy tails}
\kwd{Gaussian copula}
%\kwd{Marshall-Olkin copula}
%\kwd{multivariate regular variation}
%\kwd{mutual asymptotic independence}
%\kwd{risk measures}
\kwd{risk contagion}
%\kwd{regular variation}
%\kwd{networks}
\end{keyword}

\end{frontmatter}
%\maketitlehttps://arxiv.org/abs/math.PR/0000000

%{\rmfamily \tableofcontents}

\section{Introduction}\label{sec:intro} 

The global financial crisis of 2007-2009 brought into immediacy the need to understand risk contagion in order to assess the stability of a financial system. The high level of inter-connectivity between various financial institutions has been argued to be one of few key contributors to such systemic financial instability, see \cite{allen:gale:2000,eisenberg:noe:2001,acemoglu:2015systemic,gai:kapadia:2010,feinstein:rudloff:weber:2017,benoit:colliard:hurlin:2016} for some compelling arguments on such network effects.  Among the various frameworks proposed to capture risk contagion in financial systems, a popular one has been to model the financial system as a bipartite graph of financial institutions (e.g., banks, investment companies, insurance firms) on one side and overlapping financial assets where the banks invest, on the other. We will use this bipartite network as our model which has been observed in many financial markets; %(\citet{caccioli:shrestha:moore:farmer:2014,gao:2022systemic,pichler:2021systemic,poledna:etal:2021,fan:2024}); 
see the Chinese and Mexican financial system in \Cref{fig:netex}. 
%Following these examples % of the structures seen in various financial systems (see \Cref{fig:netex}), 
%we propose a bank-asset bipartite graphical network model of risk-sharing. 
This framework has been used not only for modeling bank-asset risk sharing (\cite{caccioli:shrestha:moore:farmer:2014,poledna:etal:2021,fan:2024}) 
but also bank-firm credit network (\cite{marotta:etal:2015}), claims in insurance markets  (\cite{kley:kluppelberg:reinert:2016,kley:kluppelberg:reinert:2017}), etc.

% The a bipartite bank-asset  network model has also been used   to study financial contagion (\citet{caccioli:shrestha:moore:farmer:2014, kley:kluppelberg:reinert:2016}) which we will further elaborate in Section \ref{subsec:network}.

\begin{figure}[ht]
\includegraphics[width=0.44\linewidth]{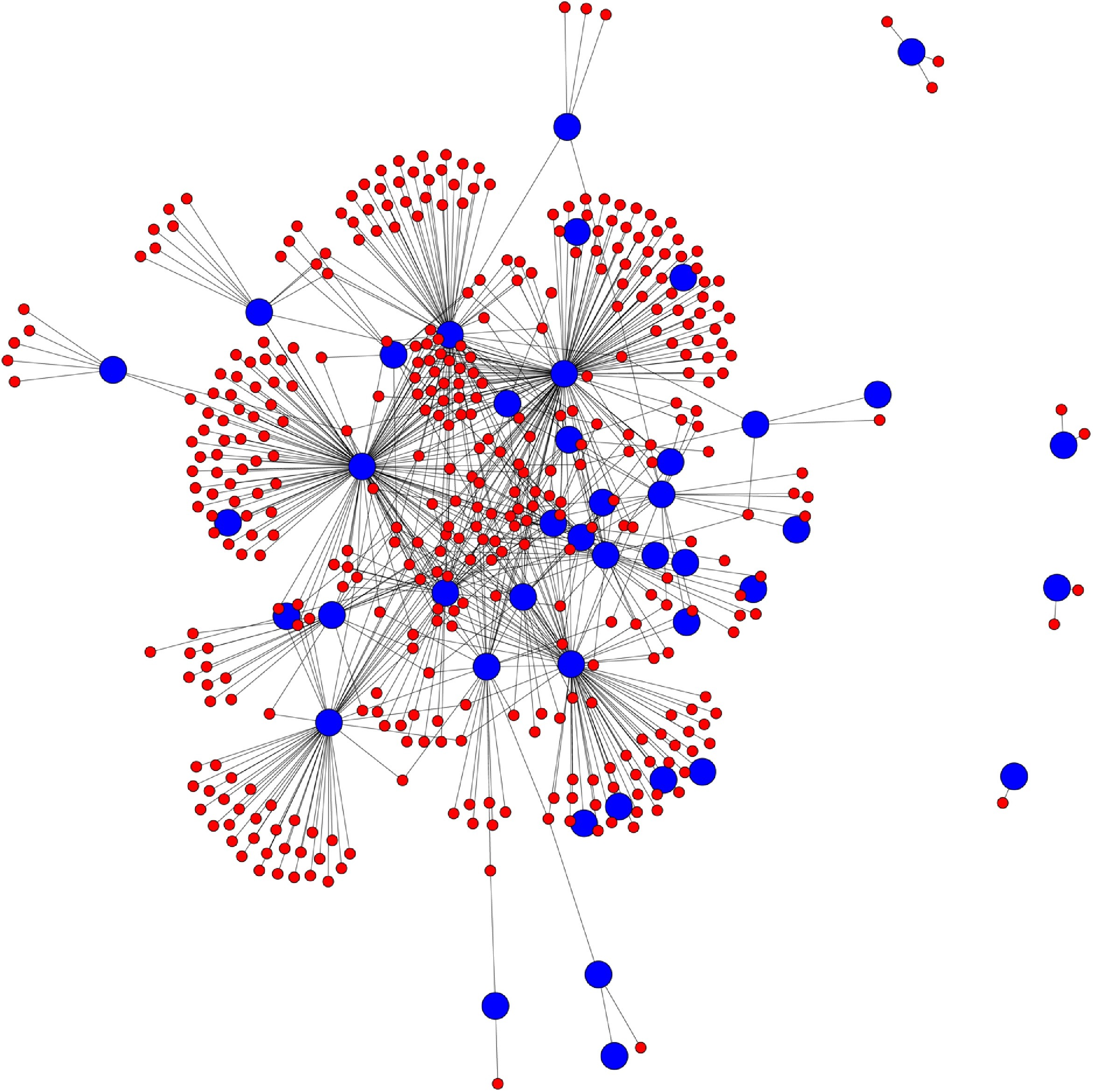} \;\;\;\quad\quad\quad \includegraphics[width=0.44\linewidth]{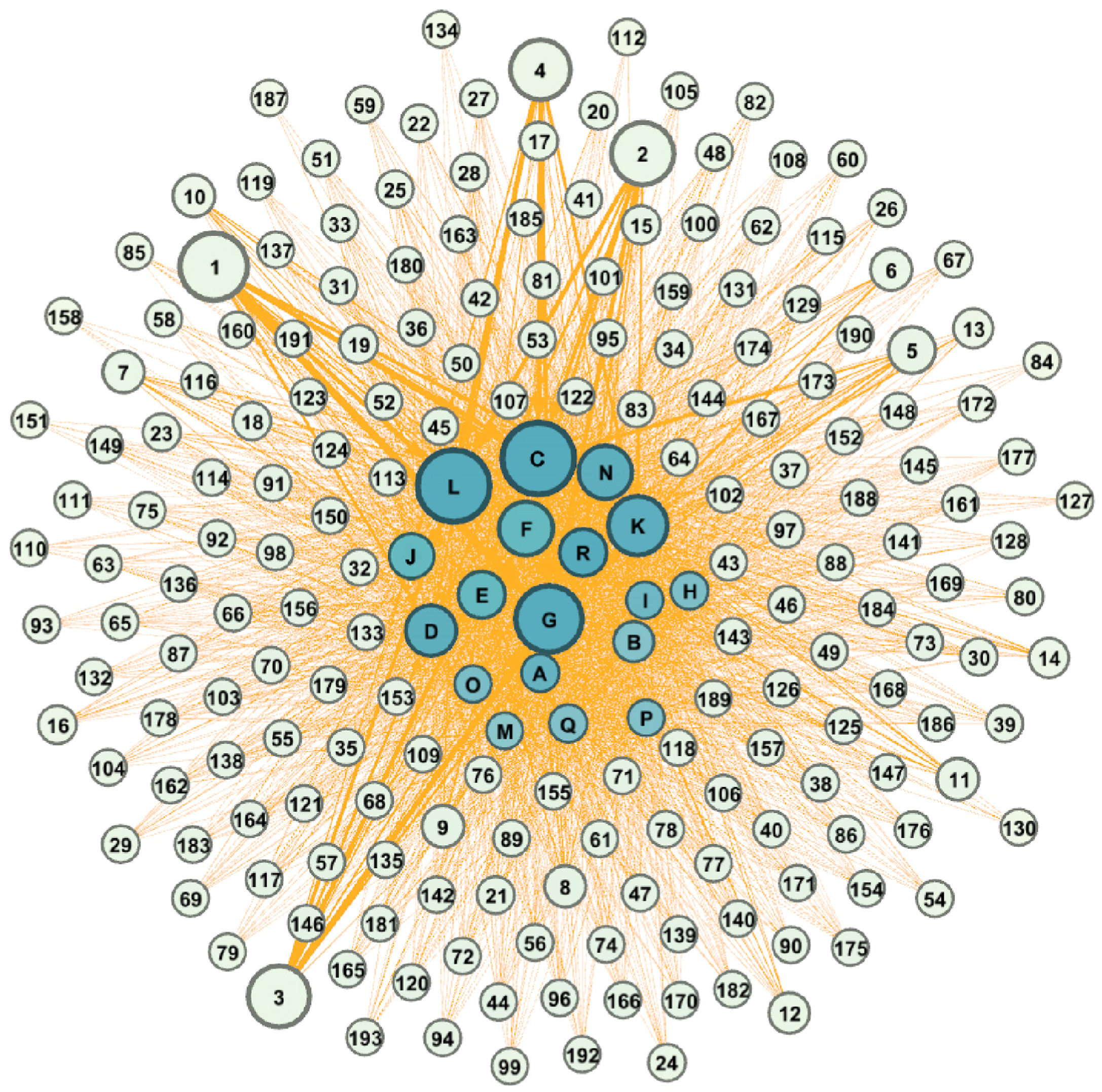}
\caption{(1) Left network: Bank-asset bipartite network of the Mexican financial system on a particular day. Nodes in the network represent banks (blue) and assets (red). Links between an asset and a bank exist if the bank holds the asset in its portfolio. (Courtesy: \citet{poledna:etal:2021}); (2) Right network: Bipartite network structure of common assets held by Chinese banks in 2021. The numbers and letters in the circles respectively represent identifiers
for banks and industries (Courtesy: \citet{fan:2024})}\label{fig:netex}
\end{figure} 

 To fix notations, consider a vertex set of banks/agents $\mathcal{A}=\mathbb{I}_q=\{1,\ldots,q\}$  and a vertex
set of assets/objects  $\mathcal{O}=\mathbb{I}_d=\{1,\ldots, d\}$. Denote by $Z_{j}$  the risk attributed to the $j$-th object and $\bZ=(Z_{1},\ldots, Z_{d})^{\top}$ forms the risk vector. 
%Assume that the graph creation process is independent of $\bZ$.
Each bank/agent $k \in \mathcal{A} $ is connected to a number of  assets/objects $j \in \mathcal{O}$
and these connections may be modeled in a stochastic manner following some probability distribution; see \Cref{fig:bipnet} for a representative example of such a bipartite network.
 A basic model assumes that $k$ and $j$ connect with probability
 $$\P(k\sim j) = p_{k j} \in [0,1], \quad k \in \mathcal{A},\, j \in \mathcal{O}.$$
The proportion of loss of asset/object $j$ affecting bank/agent $k$ is denoted by
\begin{align}\label{eq:wij}
 f_{k}(Z_{j}) = \mathds{1}{(k\sim j)}W_{k j} Z_{j},
\end{align}
where $W_{k j}>0$ denotes the proportional effect of the $j$-th object on the $k$-th agent.
Defining the $q \times d$ adjacency matrix $\bA$ by
\begin{align}\label{Aarbitrary}
    A_{k j} &= \mathds{1}{(k \sim j)}W_{k j},
\end{align}
the total exposure of the banks/agents (think of negative log-returns on market equity value) is given by $\bX=(X_{1}, \ldots, X_{q})^{\top}$, where $X_{k} = \sum_{j=1}^{d} f_{k}(Z_{j})$ can be represented as
\begin{align}\label{eqn:x=az}
\bX=\bA\bZ.
\end{align}
We may assume that the graph creation process is independent of $\bZ$, i.e., $\bA$ and $\bZ$ are independent. Now the behavior of $\bX$ will be governed by both the network represented by $\bA$ and the underlying distribution of $\bZ$, the risk related to the assets/objects.  The goal of this paper is to study risk contagion in terms of the extremal behavior of $\bX$ under reasonable assumptions on $\bZ$ and $\bA$.
%Our key interest will be in copulas exhibiting asymptotic independence in some form, and it turns out that the Gaussian copula and the Marshall-Olkin copula are well-suited for this purpose especially for general dimensions.

%
%the CAPM with risk factor with tails finer than  \citet{Zhou:2010b}\citet{Oordt:Zhou} \citet{Huang:Vries:2002} \citet{bradley:taqqu:2003} \citet{MAINIK2015} \citet{Huschens2000}}

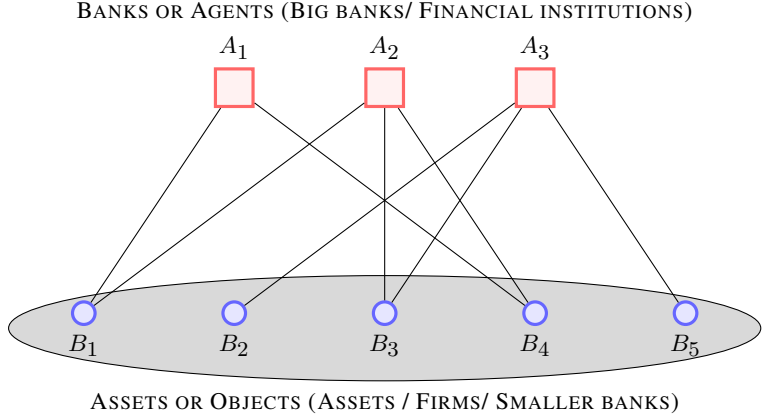
\begin{figure}[th]
\[\begin{tikzpicture}[
squarednode/.style={rectangle, draw=red!60, fill=red!5, very thick, minimum size=5mm},
roundnode/.style={circle, draw=blue!60, fill=blue!10, very thick, minimum size=3mm},
]
    \node at (4,3) {\sc{Banks or Agents (Big banks/ Financial institutions)}};
    \node at (4,-2.2) {\sc{Assets or Objects (Assets / Firms/ Smaller banks)}};
    \draw[fill=gray!30!white](4,-1.2)ellipse(5cm and .7cm);
	\vertex[squarednode] (A1) at (2,2) [label=above:$A_1$] {};
	\vertex[squarednode] (A2) at (4,2) [label=above:$A_2$] {};
	\vertex[squarednode] (A3) at (6,2) [label=above:$A_3$] {};
	\vertex[roundnode] (B1) at (0,-1) [label=below:$B_1$] {};
	\vertex[roundnode] (B2) at (2,-1) [label=below:$B_2$] {};
	\vertex[roundnode] (B3) at (4,-1) [label=below:$B_3$] {};
	\vertex[roundnode] (B4) at (6,-1) [label=below:$B_4$] {};
     \vertex[roundnode] (B5) at (8,-1) [label=below:$B_5$] {};
	\path
		(A1) edge (B1)
		(A1) edge (B4)
		(A2) edge (B1)
		(A2) edge (B3)
		(A3) edge (B2)
		(A3) edge (B3)
		(A2) edge (B4)
        (A3) edge (B5)
        ;
       % (B1) edge[dashed] (B2)
       % (B1) edge[dashed] (B3)	;
\end{tikzpicture}\]
\caption{A bipartite network with $q=3$ banks (or agents) and $d=5$ assets (or objects). We model inter-dependencies via a multivariate model for the assets and the bipartite network (which may be assumed to be random or fixed) connecting assets to banks.}\label{fig:bipnet}
\end{figure}

%\subsection{Modeling the underlying distribution}\label{subsec:multiheavy}
%\marginpar{\VF{mention new paper \citet{das:fasen:2024b}}}
{In order to model $\bZ$, the returns from the asset, we note that} financial returns have often been empirically observed to be heavy-tailed; see \cite{ibragimov:prokhorov:2017,resnickbook:2007, embrechts:kluppelberg:mikosch:1997,Adler:Taqqu,Rachev:2003}. For this paper, we will model the distribution of $\bZ$ to be heavy-tailed using the paradigm of multivariate regular variation (\cite{bingham:goldie:teugels:1989,embrechts:kluppelberg:mikosch:1997,resnickbook:2007}); further details are given in \Cref{sec:mrvwithasyind}.  To model the dependence between the components of $\bZ$, we resort to a few popular multivariate dependence structures, namely,  the asymptotically strongly dependent case, the Gaussian dependence, an erstwhile popular model for dependence in many domains including finance (\cite{fisher_etal:2009,malevergne:sornette:2003}) and the Marshall-Olkin dependence often used in reliability/failure modeling in large systems (\cite{yuge:maruyama:yanagi:2016,marshall:olkin:1967}). Our key results are more generally applicable, and we only provide explicit expressions for the above examples.
 Interestingly the latter dependence structures, i.e., the Gaussian copula and the Marshall-Olkin copula possess the property of (pairwise) asymptotic independence in the tails, i.e., extreme values are less likely to occur simultaneously; see \cite{das:fasen:2024b} for details. This property of asymptotic independence has also been empirically observed in international equity markets (\cite{Bradley:Taqqu:2004,poon:rockinger:tawn:2001, poon:rockinger:tawn:2003}). 

We note as well that an affine linear dependence structure similar to \eqref{eqn:x=az} is also found in certain factor models. For example, in the popular single-index factor model, namely the Capital Asset Pricing Model (CAPM),  the rate of return of an asset is divided into a market return and an idiosyncratic risk of the return, which are assumed to be independent and the weights are the $\beta$'s introduced in the CAPM (see \cite{Zhou:2010b,Huang:Vries:2002,Huschens2000} for details). If the idiosyncratic risks have heavier tails than the market returns of the assets, then the assets turn out to be asymptotically independent, which is again a modeling component of the present paper.

%It is well-known that tail risks between two such return values from different financial objects tend to exhibit asymptotic independence, i.e.,  \DAS{Include literature \citet{malevergne:sornette:2003}} extreme values are less likely to occur simultaneously. 
%Surprisingly, asymptotic independence in dimensions larger than two has received little attention in the literature. A key contribution of our work is to propose a notion of \emph{asymptotic independence} in a truly multivariate setting; see \Cref{def:mai}. We observe that such a definition allows us to understand multivariate tail dependence in a more nuanced manner, for example, we are able to capture the true dimensionality of extreme tail events of various order. This new definition becomes useful as well for modeling the underlying risks while computing systemic risk in networked systems as portrayed in \Cref{fig:bipnet}. 

{Finally, our key goal is to assess risk contagion in financial systems.} Regulatory bodies like Basel Committee on Banking Supervision, Solvency II and the Swiss Solvency Test for insurance regulations have regularly recommended monitoring of \emph{individual measures of risk} exposure like Value-at-Risk (\cite{jorion:2006}) and  Expected Shortfall (\cite{Artzner1999}) for financial and insurance institutions so that adequate capital is reserved to avoid catastrophic losses. It is apparent that for capturing \emph{risk contagion} we need to assess the potential loss for entities/institutions of interest conditioning on an extreme loss event pertaining to one or more other financial institution(s). In this regard,  conditional risk measures like CoVaR (\cite{adrian:brunnermeier:2016}, \cite{girardi:ergun:2013}), Marginal Expected Shortfall (MES) (\cite{acharya:petersen:philippon:richardson:2017}), Marginal Mean Excess (MME) (\cite{das:fasen:2018,das:fasen:2019}), Systemic risk (SRISK) (\cite{brownlees:engle:2016}) have become extremely popular in the years following the 2007-2009 crisis. In this paper, we focus on one particular measure of risk contagion, namely, CoVaR.  Naturally, computing such risks requires an appropriate modeling of joint tail risks which captures the marginal risk behavior as well as inter-dependencies between financial entities. 
%In the rest of this section, we define the risk measure of interest CoVaR, then set-up the bipartite network to model the financial system  and discuss the assumptions on distribution and dependence behavior of the underlying risk variables.

%\subsection{Contagion risk measure: CoVaR}\label{subsec:covar}
We properly define this particular measure of risk contagion, CoVaR, in the following.
For a   random variable $Y$, the \emph{Value-at-Risk} or $\VaR$ at level $1-\gamma\in (0,1)$ is defined as 
\begin{eqnarray}\label{def:var}
\VaR_{\gamma}(Y) := \inf\{y\in\R: \P(Y>y)\le \gamma\}=\inf\{y\in\R: \P(Y\le y) \geq 1-\gamma\}.
\end{eqnarray}
Note that we chose $\gamma$ as the subscript instead of the usual $1-\gamma$ for brevity and notational convenience. Now, given two   random variables $Y_1, Y_2,$ and  constant levels $1-\gamma_1,1-\gamma_2\in(0,1)$, we define the \emph{Co}nditional-VaR or \emph{Co}ntagion-VaR (CoVaR) of $Y_1$ (at level $1-\gamma_1$) given $Y_2$ (at level $1-\gamma_2$) as
    \begin{align}\label{def:covar}
           \text{CoVaR}_{\gamma_1| \gamma_2}(Y_1|Y_2)=\inf\{y\in\R:\P(Y_1>y|Y_2>\VaR_{\gamma_2}(Y_2))\leq \gamma_1\}.
    \end{align}
Here, $\text{CoVaR}_{\gamma_1| \gamma_2}(Y_1|Y_2)$ represents the VaR at level $1-\gamma_1\in(0,1)$ of $Y_1$ given that $Y_2$ is above its own  VaR at level $1-\gamma_2\in(0,1)$. Although we concentrate on CoVaR throughout the paper, it is clear that other conditional risk measures like MES/MME (\cite{acharya:petersen:philippon:richardson:2017,das:fasen:2018}) and SRISK (\cite{brownlees:engle:2016}) are viable options. %as well and may be explored in future work.

%\marginpar{\DAS{I am wondering if we should keep the other CoVaR examples here, i.e., with sum and max. But I agree it makes the intro bigger. Just thinking} }
 The risk measure CoVaR was introduced to capture risk contagion, as well as systemic risk by \cite{adrian:brunnermeier:2016} where they used the conditioning event to be $Y_2=\VaR_{\gamma_2}(Y_2)$; this was later modified by \cite{girardi:ergun:2013} to $Y_2>\VaR_{\gamma_2}(Y_2)$ with the restriction that $\gamma_1=\gamma_2$; this latter definition has been shown to have nicer properties for dependence modeling and is used widely for modeling and estimation (\cite{nolde:zhou:zhou:2022,Mainik:Schaaning,hardle:wang:yu:2016}). We will use the more general definition given in \eqref{def:covar} following \cite{girardi:ergun:2013}  but allowing $\gamma_1$ and $\gamma_2$ to not necessarily to be equal; see \cite{reboredo:ugolini:2015,kley:kluppelberg:reinert:2017,bianchietal:2023} for related work using this definition.

The computation of  CoVaR under a bipartite network framework has been addressed in \cite{kley:kluppelberg:reinert:2017} where the authors have particularly concentrated on distributions of risk exposures that are either asymptotically co-monotone or asymptotically independent. In addition, some of their proofs in the asymptotic independent case are based on the i.i.d. assumption, and more importantly, may often lead to null estimates for the relevant measures.
%The computation of  CoVaR under a bipartite network framework has been addressed in \cite{kley:kluppelberg:reinert:2017}; their work deals with underlying distributions of risk exposures which are either asymptotically comonotone or asymptotically independent but when addressed, the proofs are occasionally based on the i.i.d. assumption, and more importantly often leading to null estimates for the relevant measures.
%\marginpar{\DAS{I toned down the statement a bit, I think :)).}}
%Note that a computation of CoVaR under a bipartite set-up has been  addressed in \citet{kley:kluppelberg:reinert:2017} where the results assume asymptotically (fully) dependent risk exposures which for example a Gaussian dependence structure or i.i.d. variables would not satisfy; thus applied to an asymptotically independent framework the results on  \citet{kley:kluppelberg:reinert:2017} often lead to null estimates for the relevant measures. %Results on asymptotically independent models are not a key focus in \citet{kley:kluppelberg:reinert:2017}, and when addressed, the resulting conclusions are occasionally based on the i.i.d. assumption, and more importantly often leading to null estimates for the relevant measures.  %Thus, some related results in our paper capture a similar flavor as those of \citep{kley:kluppelberg:reinert:2017}, but 
In this paper, we provide an enhanced characterization of risk contagion using CoVaR, not only addressing the asymptotically strong dependent case (which includes the co-monotone case), but especially emphasizing on models with asymptotic independence which are natural and popular, and have not been particularly addressed before; here we use the characterization for tail asymptotics of random linear functions of regularly varying vectors developed in \cite{das:fasen:kluppelberg:2022}. 
%In our paper, we fill this gap and allow for a variety of strengths of dependence for modeling the underlying vectors with the emphasis on the asymptotic independent case. 
Our results show that the asymptotic behavior of CoVaR is affected by both the structure of the bipartite network, as well as the strength of dependence of the underlying distribution of risk factors.  %Next, we model the underlying distribution of $\bZ$ more precisely.
%Note that we also include the asymptotically fully dependent case in our results.

%\marginpar{\DAS{We removed def of copula ... will it be ok? }\VF{I think it is not necessary because we define in Sec 2.2 Gaussian copula...}}
\begin{comment}
While modeling the marginals of risk of assets/objects via heavy-tailed distributions, the dependence of the marginals is often modeled using various copulas, which can range from independent to dependent and various forms of asymptotic independence in between. %; we take this approach with more details given in \Cref{sec:asyind}.
%Of course, copulas provide a natural framework for defining dependence among random variables. 
For a random vector $\bZ=(Z_1, \ldots, Z_d)^\top\sim F$ with  continuous marginal distributions $F_1,\ldots, F_d$, the copula $C:[0,1]^d \to [0,1]$ and the survival copula $\widehat{C}:[0,1]^d \to [0,1]$ are {functions} such that %$F(\bx)=\P(\bX \le \bx) =C(F_1(x_1), \ldots, F_d(x_d)),$ and $\overline{F}(\bx)=\P(\bX > \bx) =\widehat{C}(\overline{F}_1(x_1), \ldots, \overline{F}_d(x_d)),$ respectively,
for $\bz= (z_1,\ldots,z_d)^\top\in \R^d,$
\begin{align*}
    F(z_1,\ldots,z_d):=\P(Z_1\le z_1,\ldots, Z_d \le z_d) & = C(F_1(z_1), \ldots, F_d(z_d)),\\
   \overline{F}(z_1,\ldots,z_d):= \P(Z_1 > z_1,\ldots, Z_d > z_d) & = \widehat{C}(\overline{F}_1(z_1), \ldots, \overline{F}_d(z_d)),
\end{align*}
where $\overline{F}_j=1-F_j,\; \forall\, j\in \{1,\ldots,d\}.$ Moreover, for any non-empty sets $S\subseteq\mathbb{I}_d$, the copula and survival copula of the corresponding $|S|$-dimensional marginal is denoted by $C_S$ and $\widehat{C}_S$ respectively; moreover, if $d=1$ we define $C_S(u):=\widehat{C}_S(u):=u$ for $0\le u \le 1$.
\end{comment}

%\subsection{Structure of the paper}\label{subsec:structure}

The paper is structured as follows. %In \Cref{sec:asyind}, we discuss the idea of \emph{asymptotic (tail) independence} and its effects on multivariate extremes. Since such a feature is often empirically observed, at least in a bivariate sense, we formalize a multivariate notion for this concept and use two parametric models for elaborating on the presence of this phenomenon, namely the Gaussian copula and the Marshall-Olkin copula. 
In \Cref{sec:mrvwithasyind}, we characterize multivariate heavy-tailed models and in particular investigate Gaussian  and Marshall-Olkin copula models with Pareto-type margins.
In \Cref{sec:systemicrisk}, we characterize the asymptotic behavior of $\CoVaR_{\gamma g(\gamma)|\gamma}(Y_1|Y_2)$ 
as $\gamma\downarrow 0$ for some function $g(\gamma)$ where $Y_1$ is heavy-tailed (note that both margins need not be heavy-tailed). We also introduce here the \emph{Extreme CoVaR Index} (ECI), which is a measure of the strength of risk contagion from $Y_1$ to $Y_2$ and is particularly useful when they are asymptotically independent. The bipartite network with asymptotically strongly dependent underlying assets is addressed in an example in this section as well.
%One particular example in this section is the asymptotic behavior of CoVaR for the bipartite network $\bX=\bA\bZ$ when the underlying assets are asymptotically strongly dependent. 
For the remainder of the paper, we investigate the more challenging case of asymptotically independent assets. %After having found tail probabilities of the underlying assets $\bZ$ exhibiting asymptotic independence in \Cref{sec:mrvwithasyind}, and the asymptotics of CoVaR in \Cref{sec:systemicrisk},  
In \Cref{sec:bipnet}, we compute joint and conditional probabilities for extreme events pertaining to the banks/agents given by $\bX=\bA\bZ$ and the asymptotic behavior of CoVaR as well as the ECI. This requires an appropriate understanding of transformations of various kinds of sets in the presence of the bipartite network model.  We conclude in \Cref{sec:concl} with indications for future work. All proofs of relevant results are relegated to the appendix.

%\marginpar{\VF{My point was more: the emphasis of random \emph{vector}. That we have to include tail equivalence for random \emph{variables} is clear.}}

\subsection*{Notations}\label{subsec:notations}
 The following notations are used throughout the paper. We denote by \linebreak $\mathbb{I}_d=\{1,\ldots,d\}$ an index set with $d$ elements, the subscript is dropped when  evident from the context.  
%For sequences $\bw_t=(w_{t,1},\ldots,w_{t,d}),\bz_t=(z_{t,1},\ldots,z_{t,d})\in\R^d$, $t>0$, the notation $\bz_t=\bw_t+o(t)$ as $t\to\infty$ means that $z_{t,j}-w_{t,j}=o(t)$ as $t\to\infty$ for every $j\in\mathbb{I}_d$.
%\marginpar{\VF{Do we use the notation that a random vector is tail equivalent?} \DAS{We have in Theorem 4.1.}
 For a given vector $\bz\in\R^d$ and $S\subseteq \mathbb{I}_d$, we denote by  $\bz^{\top}$ the transpose of $\bz$ and by $\bz_S\in\R^{|S|}$ the vector obtained by deleting the components of $\bz$ in $\mathbb{I}_d\backslash S$. Similarly, for non-empty $I,J\subseteq\mathbb{I}_d$, $\Sigma_{IJ}$ denotes the appropriate sub-matrix of a given matrix $\Sigma\in\R^{d\times d}$; and we write $\Sigma_I$ for $\Sigma_{II}$.
Vector operations are  understood component-wise, e.g., for vectors $\bv=(v_1,\ldots,v_d)^\top$ and $\bz=(z_1,\ldots,z_d)^\top$, $\bv\le \bz$ means $v_j\le z_j \forall j$. %The order statistics of $\bz=(z_1,\ldots,z_d)$ are written as $z_{(1)}\geq z_{(1)}\geq\ldots\geq z_{(d)}$.
We also have the following notations for vectors on $\R^d$: $\bzero=(0,\ldots,0)^\top, \bone=(1,\ldots,1)^\top,$ $\binfty=(\infty,\ldots,\infty)^\top$ and $e_j =(0,\ldots,1,\ldots,0)^{\top}$, $j\in\mathbb{I}_d$, where $e_j$ has only one non-zero entry 1 at the $j$-th co-ordinate. 

For a random vector  $\bZ=(Z_1,\ldots,Z_d)^\top$, we write $\bZ\sim F$ if $\bZ$ has distribution function $F$; moreover, we understand that marginally $Z_j\sim F_j$ for $j\in \mathbb{I}_d$. We call the random vector $\bZ\sim F$ to be \emph{tail equivalent} if for all $j,\ell \in \mathbb{I}_d, j\neq \ell$, we have $\lim_{t\to\infty} (1-F_j(t))/(1-F_\ell(t))=c_{j\ell}$ for some $c_{j\ell}>0$; moreover,  if $c_{j\ell}=1,  \forall\, j,\ell \in \mathbb{I}_d,$ then we say $\bZ$ is \emph{completely tail equivalent}. {Analogously we call the associated random variables $Z_j,Z_\ell$ tail equivalent or completely tail equivalent.} For functions $f,g$, we write $f(t)\sim g(t)$ as $t\to\infty$ if $\lim_{t\to\infty} f(t)/g(t)=1$. 
 The cardinality of a set $S\subseteq \mathbb{I}_d$ is denoted by $|S|$.
  The indicator function of an event $A$ is denoted by $\mathds{1}_{A}$ and for a constant $t>0$ and a set $A\subseteq \R_+^{d}$, we denote by $tA:= \{t\bz: \bz\in A\}$. 

\section{Multivariate heavy tails}\label{sec:mrvwithasyind}
%Financial returns are often empirically observed to be heavy-tailed in nature, 
%{Heavy tailed models in finance are widespread, see \citet{Adler:Taqqu,Rachev:2003}  for an overview.}
In this paper, the framework that we use for modeling heavy tails is that of multivariate regular variation.
 % which is quite amenable to the computation of tail probabilities in the presence of asymptotic independence. 
We start with a brief primer on multivariate regular variation, and then explicitly derive necessary model parameters, limit measures and their supports for our two primary model examples, the Gaussian and the Marshall-Olkin copula model.

%\subsection{Preliminaries} We recall key concepts from multivariate regular variation and copulas next.
\subsection{Preliminaries: multivariate regular variation}\label{subsec:mrv}
Regular variation is a popular theoretical framework for modeling heavy-tailed distributions; for the models in this paper, we assume this property for all our marginal risk variables.

A measurable function $f: \R_{+} \to \R_{+}$ is \emph{regularly varying} (at $+\infty$) with some fixed $\beta\in \R$ if $\lim_{t\to\infty} f(tz)/f(t) =z^{\beta}, \forall\,z>0$.  We write $f\in\RV_{\beta}$ and if $\beta=0$, then we call $f$ a \emph{slowly varying} function. A real-valued random variable $Z\sim F$  is regularly varying (at $+\infty$) if the tail   $\overline{F}:= 1-F \in \RV_{-\alpha}$ for some $\alpha> 0$. Alternatively, for $Z\sim F$, $\overline{F}\in \RV_{-\alpha}$ is equivalent to the existence of a measurable function $b:\R_+\to\R_+$ with $b(t)\to \infty$ as $t\to\infty$ such that
\[t\,\P(Z>b(t)z) = t\,\overline{F}(b(t)z) \xrightarrow{t \to \infty} z^{-\alpha}, \quad \forall\, z>0.\]
Consequently, $b\in \RV_{1/\alpha}$ and a canonical choice for $b$ is
$b(t)= F^{\leftarrow}(1-1/t)=\overline{F}^{\leftarrow}(1/t)$ where $F^{\leftarrow}(z)= \inf\{y\in\R:F(y)\ge z\}$ is the generalized inverse of $F$. Well-known distributions like Pareto, L\'evy, Fr\'ech\'et, Student's $t$ which are used to model heavy-tailed data, are all in fact regularly varying; see \cite{embrechts:kluppelberg:mikosch:1997,bingham:goldie:teugels:1989,resnickbook:2007,resnickbook:2008} for further details.

For multivariate risks, our interest is in high (positive) risk events, and hence we concentrate on characterizing tail behavior on the positive quadrant $\R^d_+:=[0,\infty)^d$; risk events in any other quadrant can be treated similarly. Multivariate regular variation on such \emph{cones} and their subsets are discussed in detail in \cite{hult:lindskog:2006a,das:mitra:resnick:2013,lindskog:resnick:roy:2014,das:fasen:2023}. We briefly introduce the ideas, notations and tools here. We focus particularly on subcones of $\R^d_+$ of the following form:
\begin{align*}
%\E^{(1)}_{d}  := &\;\R^{d}_+ \setminus \CA_d^{(0)} =  \R_+^{d} \setminus \{\bzero\} =   \{\bz\in \R^d_+: z_{(1)}>0\}, \label{def:ed1}\\
\E^{(i)}_{d}  := &\;  \R^d_+ \setminus \{\bz\in \R^d_+: z_{(i)}=0\} = \; \{\bz\in \R^d_+: z_{(i)}>0\}, \quad i\in\mathbb{I}_d, \label{def:edd}
\end{align*}
where $z_{(1)}\geq z_{(2)}\geq \ldots\geq z_{(d)}$ is the decreasing order statistic of $z_1,\ldots,z_d$. Here $\E_d^{(i)}$ represents the subspace of $\R^d_+$ with all $(i-1)$-dimensional co-ordinate hyperplanes removed; the $0$-dimensional hyperplane is the point $\{\bzero\}$.
 %We call the subsets $\E_d^{(i)}$ \emph{co-ordinate subcones} since they are cones obtained from $\R_+^d$ by removing particular co-ordinate hyperplanes.
% Here $\E^{(1)}_{d}$ is the {positive quadrant} with $\{\bzero\}$ removed, $\E^{(2)}_{d}$ is the {positive quadrant} with all one-dimensional co-ordinate axes removed, $\E^{(3)}_{d}$ is the {positive quadrant} with all two-dimensional co-ordinate hyperplanes removed, and so on.
Clearly,
\begin{equation} \label{eq2.1}
\E^{(1)}_{d} \supset \E^{(2)}_{d} \supset \ldots \supset \E^{(d)}_{d}.
\end{equation}
%Note that for any $1\le i \le d$, the set $\{\bz\in \R^d_+: z_{(i)}=0\}\subseteq \R_+^d$
%is a closed cone in $\R_+^d$ containing $\bzero$.
The type of convergence we use here is called $\M$-convergence of measures (\cite{hult:lindskog:2006a,das:mitra:resnick:2013}),
%. It can be discussed in the more general context of multivariate regular variation on subcones \linebreak $\R_+^d\setminus\C_0$ of $\R_+^d$, where $\C_0$ is a closed cone containing $\bzero$.
of which multivariate regular variation on $\E_d^{(i)}$ is only a special example; see \cite{lindskog:resnick:roy:2014, das:fasen:2023}.  For a subspace $\E_d^{(i)}$, let $\cB(\E_d^{(i)})$ denote the collection of Borel sets contained $\E_d^{(i)}$.  % and let $\mathbb{I}_d=\{1,\ldots,d\}$.
%More details can be found in \citet{hult:lindskog:2006a,das:mitra:resnick:2013,lindskog:resnick:roy:2014,das:fasen:2023}.
%Moreover, for our particular sets of interest of the form $A$ as  in \eqref{eq:tailset}, we have $A\subset \E_d^{(i)}$.

\begin{dfn}\label{def:mrv}
Let $i\in\mathbb{I}_d$. A random vector $\bZ \in \R^{d}$ is \emph{multivariate regularly varying} (MRV) on $\E^{(i)}_{d}$ if there exists a regularly varying function
$b_i \in \RV_{1/\alpha_i}$, $\alpha_i >0$,  and a non-null (Borel) measure $\mu_i$ which is finite on Borel sets bounded away from $\{\bz\in \R^d_+: z_{(i)}=0\}$
{such that}
\begin{equation}\label{eq:RegVarMeasmui}
    \lim_{t\to\infty}t\,\P\left( \frac{\bZ}{b_i(t)} \in B \right)= \mu_i(B)
\end{equation}
for all sets $B \in \cB(\E^{(i)}_{d})$ which are {\it bounded away from\/} $\{\bz\in \R^d_+: z_{(i)}=0\}$ with $\mu_i(\partial B)=0$. We write $\bZ \in \MRV(\alpha_i, b_i, \mu_i, \E^{(i)}_{d})$, and some parameters may be  dropped for convenience.
\end{dfn}
The limit measure $\mu_i$ defined in \eqref{eq:RegVarMeasmui} turns out to be  homogeneous of order $-\alpha_i$, i.e., $\mu_i(\lambda B)=\lambda^{-\alpha_i}\mu_i(B)$ for all $\lambda>0$. Moreover, if   $\bZ \in \MRV(\alpha_i, b_i, \mu_i, \E^{(i)}_{d})$, $\forall\, i\in\mathbb{I}_d$,
then a direct conclusion from \eqref{eq2.1} is that
\beam
    \alpha_1\leq \alpha_2\leq \cdots\leq\alpha_d
\eeam
implying that as $i\in \mathbb{I}_d$ increases, the rate of decay of probabilities of tail sets in $\E_d^{(i)}$ either remains the same or becomes faster.

\subsection{Regular variation  under Gaussian copula}\label{subsec:gcrv}
Probabilities of tail subsets of $\R^d_+$ for heavy-tailed models using the popular Gaussian dependence structure have been discussed in detail in \cite{das:fasen:2024}. Here, we concentrate on such models where the tails are asymptotically power-law (Pareto-like), this allows for explicit computations and insights into this model. %especially with respect to asymptotic independence behavior. 
First, to fix notations, if $\Phi_\Sigma$ denotes the distribution function of a $d$-variate  normal distribution with all marginal means zero, variances one and positive semi-definite correlation matrix $\Sigma\in\R^{d\times d}$, and $\Phi$ denotes a standard normal distribution function, then $$C_{\Sigma}(u_1,\ldots,u_d)= \Phi_\Sigma(\Phi^{-1}(u_1),\ldots, \Phi^{-1}(u_d)), \quad \quad 0< u_1,\ldots, u_d < 1,$$
denotes the Gaussian copula with correlation matrix $\Sigma$. Next, we define the following model analogous to the RVGC model defined in \cite{das:fasen:2024} {and analyze it's MRV behavior}.
\begin{definition}\label{def:pgc}
An $\R^d$-valued random vector $\bZ=(Z_1,\ldots,Z_d)^\top \sim F$ follows a \emph{Pareto-tailed distribution with Gaussian copula}, index $\alpha>0$, scaling parameter $\theta>0$ and correlation matrix $\Sigma$, if the following holds:
\begin{enumerate}[(i)]
    \item The  marginal distributions $F_j$ of $Z_j$ are continuous and strictly increasing  with tail $ \ov F_j(t):=1-F_j(t) \sim \theta t^{-\alpha}$,  $\forall\,j\in\mathbb{I}_d$ and some $\theta,\alpha>0$. 
  %  \item  The function $b:\R_+\to\R_+$ is defined as $b(t)=(\theta t)^{\frac{1}{\alpha}}$.
    \item The joint distribution function $F$ of $\bZ$ is given by  $$F(\bz) = C_{\Sigma}(F_1(z_1),\ldots,F_d(z_d)), \quad \bz=(z_1,\ldots,z_d)^\top\in\R^d,$$
      where $C_{\Sigma}$ is the Gaussian copula with positive-definite correlation matrix  $\Sigma\in\R^{d\times d}$.
\end{enumerate}
We write $\bZ \in \PGC(\alpha,\theta,\Sigma)$   where some parameters may be dropped for convenience.
\end{definition}

\begin{remark} The structure assumed allows for a wide variety of distributional behavior.
\begin{itemize}
    \item[(a)] The tails of Pareto, L\'evy, Student's $t$ and Fr\'ech\'et distributions satisfy $\ov F_j(t)\sim \theta t^{-\alpha}$   (\cite{embrechts:kluppelberg:mikosch:1997, nair:wierman:zwart:2015}) thus, allowing a few popular heavy-tailed marginals to be chosen from.
    \item[(b)] A special case here is when the correlation matrix $\Sigma=I_d$ and all marginals $Z_1,\ldots,Z_d$ are identically Pareto distributed, thus covering the case with i.i.d. marginals as well.
\item[(c)]   If the Gaussian dependence is defined by an \emph{equicorrelation} matrix given by
\begin{align}\label{def:sigmarho}
\Sigma_\rho & :=\begin{pmatrix}
1 & \;\;\; \rho  & \;\;\; \ldots & \;\;\; \rho\\
\rho & \;\;\; 1  & \;\;\; \ldots & \;\;\; \rho\\
\vdots & \;\;\; \vdots & \;\;\; \vdots & \;\;\; \vdots \\
 \rho & \;\;\; \ldots & \;\;\; \ldots & \;\;\;  1
\end{pmatrix}
\end{align}
with $-\frac{1}{d-1}<\rho <1$ (making $\Sigma_\rho$ positive definite),  we write $\bZ\in \PGC(\alpha, \theta,  \Sigma_{\rho})$. This correlation matrix is naturally the only choice if $d=2$.
\item[(d)]  If $\bZ \in \PGC(\alpha,\theta,\Sigma)$   then any sub-vector $\bZ_S $, $S\subseteq \mathbb{I}_d$, satisfies $\bZ_S \in \PGC(\alpha,\theta,\Sigma_S)$. %\marginpar{\VF{I deleted the proof of this comment. I think it should be clear}}
\end{itemize}
\end{remark}
For the $\PGC$ model, although we have only assumed a Pareto-like tail for the marginal distributions, it turns out that along with the Gaussian dependence, this suffices for the random vectors to admit \emph{multivariate regular variation} on the various subcones $\E_d^{(i)} \subset \R_+^d$; this was derived in further generality in \cite{das:fasen:2024}.
We recall and reformulate some of the related results. 
%and then provide exact tail asymptotics and some insights into the asymptotic independence behavior under this model.
%\subsubsection{Regular variation in P-GC models}
%\subsubsection{Pareto-tailed Gaussian copula models and regular variation on cones}
%The results in this section are adapted from \citet{das:fasen:2024}.
%e require some notation before we are able to present the results.
The following lemma is from \cite[Proposition 2.5 and Corollary 2.7]{hashorva:husler:2002} and we require that notation for the presentation of regular variation in P-GC models.
\begin{lemma}\label{lem:quadprog}
Let $\Sigma\in \R^{d\times d}$ be a positive-definite correlation matrix. Then the quadratic programming problem
\begin{align}\label{eq:quadprog}
  \mathcal{P}_{\Sigma^{-1}}: \min_{\{\bz\ge \bone\}} \bz^\top \Sigma^{-1}\bz
\end{align}
  has a unique solution $\be^*=\be^*(\Sigma)\in\R^d$ such that
    \begin{align*}
    \gamma:= \gamma(\Sigma):= \min_{\{\bz\ge \bone\}} \bz^\top \Sigma^{-1} \bz=\be^{*\top} \Sigma^{-1}\be^*>1.
\end{align*}
% If $\Sigma^{-1}\bone \geq \bzero$ then  $\be^*=\bone.$
Moreover, there exists a unique non-empty index set  $I:=I(\Sigma)\subseteq \{1,\ldots, d\}=:\mathbb{I}_d$ with \linebreak $J:=J(\Sigma):=\mathbb{I}_d\setminus I$ such that  the unique solution $\be^*$ is given by
$$\be^*_I=\bone_I, \qquad \text{ and } \qquad\be^*_J=-[\Sigma^{-1}]_{JJ}^{-1}[\Sigma^{-1}]_{JI}\bone_I=\Sigma_{JI}(\Sigma_I)^{-1}\bone_I\geq \bone_J,$$
and $ \bone_{I}\Sigma_{I}^{-1}\bone_I=\be^{*\top} \Sigma^{-1}\be^*=\gamma>1$ as well as $\bz^{\top}\Sigma^{-1}\be^*=\bz_I^{\top}\Sigma_{I}^{-1}\bone_I $ for any $\bz\in\R^d$. 
Also defining   $h_i:=h_i(\Sigma):=e_i^\top \Sigma^{-1}_{I}\bone_I>0$ for $i\in \mathbb{I}_d$, we have $h_i>0$ for $i\in I$.  If $\Sigma^{-1}\bone \geq \bzero$ then  $2\le |I| \le d$ and $\be^*=\bone.$
%\end{enumerate}
\end{lemma}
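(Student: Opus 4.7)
The plan is to derive everything from the KKT conditions of the convex quadratic program $\mathcal{P}_{\Sigma^{-1}}$. Since $\Sigma^{-1}$ is positive definite, the objective $\bz \mapsto \bz^\top \Sigma^{-1}\bz$ is strictly convex and coercive, and the feasible set $\{\bz\ge\bone\}$ is closed and convex, so a unique minimizer $\be^*$ exists. The constraints $z_i\ge 1$ are affine, so Slater's condition is automatic and the KKT conditions are both necessary and sufficient. Writing the Lagrangian as $\bz^\top\Sigma^{-1}\bz - 2\blambda^\top(\bz-\bone)$ with multipliers $\blambda\ge\bzero$, stationarity gives $\Sigma^{-1}\be^* = \blambda$ and complementary slackness gives $\lambda_i(\be^*_i-1)=0$ for every $i$.

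With the split $I:=\{i:\be^*_i=1\}$ and $J:=\mathbb{I}_d\setminus I$, complementary slackness forces $\blambda_J=\bzero$, so $[\Sigma^{-1}]_{JI}\bone_I+[\Sigma^{-1}]_{JJ}\be^*_J=\bzero$. Solving and applying the standard Schur-complement identity $-[\Sigma^{-1}]_{JJ}^{-1}[\Sigma^{-1}]_{JI}=\Sigma_{JI}\Sigma_{II}^{-1}$ yields the claimed formula $\be^*_J=\Sigma_{JI}\Sigma_{II}^{-1}\bone_I$, and the constraint $\be^*_J\ge\bone_J$ is forced by feasibility. A parallel Schur computation gives $(\Sigma^{-1}\be^*)_I=\Sigma_{II}^{-1}\bone_I$, hence $\blambda_I=\Sigma_{II}^{-1}\bone_I$ and $h_i=e_i^\top\Sigma_{II}^{-1}\bone_I=\lambda_i$ for $i\in I$. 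Substituting into the objective and using $\blambda_J=\bzero$ gives $\gamma=\be^{*\top}\blambda=\bone_I^\top\Sigma_{II}^{-1}\bone_I$, and for an arbitrary test vector $\bz$, the same reduction yields $\bz^\top\Sigma^{-1}\be^*=\bz^\top\blambda=\bz_I^\top\Sigma_{II}^{-1}\bone_I$.

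The main obstacles, and the places where positive definiteness of $\Sigma$ is essential, are the strict inequalities $h_i>0$ for $i\in I$ and $\gamma>1$. For $h_i>0$, I would argue by strict complementarity: if $h_i=\lambda_i=0$ for some $i\in I$, one could perturb $\be^*_i$ slightly upward while keeping all KKT conditions satisfied, contradicting uniqueness of the optimizer. For $\gamma>1$, I would first rule out $|I|=1$: if $I=\{k\}$, then $\be^*_J=\Sigma_{Jk}\ge\bone_J$ would force $\Sigma_{jk}=1$ for all $j\in J$ (since correlations lie in $[-1,1]$), producing a rank deficiency and contradicting positive definiteness. Hence $|I|\ge 2$, and then $\Sigma_{II}$ is a positive-definite correlation matrix with trace $|I|$; no single eigenvalue can equal the trace, so $\lambda_{\max}(\Sigma_{II})<|I|$, giving $\gamma=\bone_I^\top\Sigma_{II}^{-1}\bone_I\ge|I|/\lambda_{\max}(\Sigma_{II})>1$. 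Uniqueness of $I$ follows from uniqueness of $\be^*$ through its defining relation $I=\{i:\be^*_i=1\}$.

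Finally, the special case $\Sigma^{-1}\bone\ge\bzero$ is immediate: the choice $\bz=\bone$ with multiplier $\blambda=\Sigma^{-1}\bone\ge\bzero$ satisfies all three KKT conditions (stationarity, feasibility, complementary slackness), so by uniqueness $\be^*=\bone$, and therefore $I=\mathbb{I}_d$ with $|I|=d\ge 2$. I expect the strict-complementarity step and the $\gamma>1$ bound to be the most delicate parts; all remaining assertions reduce to bookkeeping with the Schur complement.
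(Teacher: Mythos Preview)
The paper does not give its own proof of this lemma; it simply cites \citet[Proposition 2.5 and Corollary 2.7]{hashorva:husler:2002}. Your KKT-based derivation is exactly the standard route and is essentially what that reference does, so the overall strategy is fine. The Schur-complement bookkeeping, the identity $\bz^\top\Sigma^{-1}\be^*=\bz_I^\top\Sigma_I^{-1}\bone_I$, the argument that $|I|\ge 2$ via the correlation bound, and the eigenvalue estimate for $\gamma>1$ are all correct and cleanly done.

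There is, however, a genuine gap in how you define $I$. You take $I=\{i:\be^*_i=1\}$, the set of \emph{active} constraints, and then try to argue $h_i=\lambda_i>0$ on $I$ by a perturbation. That perturbation argument does not work: shifting $\be^*_i$ upward changes $\Sigma^{-1}\be^*$ in every coordinate (since $[\Sigma^{-1}]_{ii}>0$), so stationarity is destroyed and you do not get a second KKT point. More importantly, your $I$ can be strictly larger than the set the lemma intends. Take
\[
\Sigma=\begin{pmatrix}1&0&\tfrac12\\[2pt]0&1&\tfrac12\\[2pt]\tfrac12&\tfrac12&1\end{pmatrix},
\]
which is positive definite. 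Here $\be^*=\bone$ and $\Sigma^{-1}\bone=(1,1,0)^\top$, so with your definition $I=\{1,2,3\}$ and $h_3=0$, contradicting the assertion $h_i>0$ for $i\in I$. With this same $\be^*$, both $I=\{1,2,3\}$ and $I=\{1,2\}$ satisfy the formulas $\be^*_I=\bone_I$, $\be^*_J=\Sigma_{JI}\Sigma_I^{-1}\bone_I\ge\bone_J$, so uniqueness of $I$ cannot be read off from $\be^*$ alone either.

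The fix is simple and makes the strict-complementarity step disappear: define $I:=\{i:\lambda_i>0\}=\{i:(\Sigma^{-1}\be^*)_i>0\}$, the set of \emph{strictly} active constraints. Then $h_i=\lambda_i>0$ on $I$ holds by construction, complementary slackness still gives $\be^*_I=\bone_I$, and non-emptiness follows because $\blambda=\bzero$ would force $\be^*=\bzero$. Uniqueness of $I$ is then immediate from uniqueness of $\be^*$ (hence of $\blambda=\Sigma^{-1}\be^*$). In the special case $\Sigma^{-1}\bone\ge\bzero$ you correctly get $\be^*=\bone$, but your conclusion $I=\mathbb{I}_d$ again uses the wrong definition; the example above shows $|I|$ can be strictly less than $d$ even when $\be^*=\bone$.
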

With the notations and definitions of \Cref{lem:quadprog}, we can state the main result on MRV for $\PGC$ models which summarizes \cite[Theorems 3.1 and 3.4]{das:fasen:2024}.
\begin{comment}
    \begin{lemma}\label{lem:qp}
Let $\Sigma\in \R^{d\times d}$ be a positive-definite correlation matrix. Then the quadratic programming problem
\begin{align}\label{eq:quadprog}
  \mathcal{P}_{\Sigma^{-1}}: \min_{\{\bz\ge \bone\}} \bz^\top \Sigma^{-1}\bz
\end{align}
  has a unique solution $\be^*=\be^*(\Sigma)\in\R^d$ such that
    \begin{align}
    \gamma:= \gamma(\Sigma):= \min_{\{\bz\ge \bone\}} \bz^\top \Sigma^{-1} \bz=\be^{*\top} \Sigma^{-1}\be^*>1.
\end{align}
Moreover, the following hold:
\begin{enumerate}[(a)]
\item If $\Sigma^{-1}\bone \geq \bzero$ then  $\be^*=\bone.$

\item There exists a unique non-empty index set  $I:=I(\Sigma)\subseteq \{1,\ldots, d\}=:\mathbb{I}_d$ with $J:=J(\Sigma):=\mathbb{I}_d\setminus I$ such that  the unique solution $\be^*$ is given by
$$\be^*_I=\bone_I, \qquad \text{ and } \qquad\be^*_J=-[\Sigma^{-1}]_{JJ}^{-1}[\Sigma^{-1}]_{JI}\bone_I=\Sigma_{JI}(\Sigma_I)^{-1}\bone_I\geq \bone_J,$$
and moreover,
$$ \bone_{I}\Sigma_{I}^{-1}\bone_I=\be^{*\top} \Sigma^{-1}\be^*=\gamma>1.$$
 Finally,   $h_i:=h_i(\Sigma):=e_i^\top \Sigma^{-1}_{I}\bone_I>0$ for $i\in I$ and for any $\bz\in\R^d$ the following equality holds: $$\bz^{\top}\Sigma^{-1}\be^*=\bz_I^{\top}\Sigma_{I}^{-1}\bone_I.$$
\end{enumerate}
\end{lemma}
\Cref{lem:qp}  is taken from \citet[Proposition~2.1]{hashorva:2005}  and \citet[Proposition 2.5 and Corollary 2.7]{hashorva:husler:2002}.
Note that, if  $\Sigma^{-1}\bone \geq \bzero$ then $|I|\geq 2$ but it is not necessarily that $|I|=d$.
\end{comment}

%\marginpar{\VF{I changed Proposition to Theorem}}
\begin{proposition}\label{prop:premain}
    Let  $\bZ \in \PGC(\alpha, \theta, \Sigma)$ where $\Sigma$ is positive definite. Fix a non-empty set $S\subseteq \mathbb{I}_d$ with $|S|\ge 2$.
    \begin{itemize}
        \item[(i)] Let $\gamma_S:=\gamma(\Sigma_S)$, $I_S:=I(\Sigma_S)$, $\be_S^*:=\be^*(\Sigma_S)$, and  $h_{s}^S:=h_s(\Sigma_S), s\in I_S$, be defined as in  \Cref{lem:quadprog}.
        \item[(ii)]   Let  $J_S:=\mathbb{I}_d\setminus I_S$. Define
 $\bY_{J_S}\sim \mathcal{N}(\bzero_{J_S}, \Sigma_{J_S}-\Sigma_{J_SI_S}\Sigma_{I_S}^{-1}\Sigma_{I_SJ_S})$ if $J_S\not=\emptyset,$ and \linebreak $\bY_{J_S}=\bzero_{J_S}$ if $J_S=\emptyset$.
 \item[(iii)] If $J_S\not=\emptyset$ define $\bl_{S}:=\lim_{t\to\infty}t(\bone_{J_S}-\be_{J_S}^*)$, a vector in $\R^{|J_S|}$ with components either  $0$ or $-\infty$,  and if  $J_S=\emptyset$ define $l_S:=\bzero_S$.
    \end{itemize}
 Let  $ \Gamma_{\bz_S} = \{\bv\in \R_+^d: v_s > z_s, \forall s\in S\}$ for $\bz_S=(z_s)_{s\in S}$  with $z_s>0, \forall s\in S$. Then as $t\to\infty$,
    \begin{align*} %\label{eq:mainlimitpgcAzs}
        \P(\bZ\in t \Gamma_{\bz_S}) = (1+o(1))\Upsilon_S({2\pi})^{\frac{\gamma_S}2} \theta^{\gamma_S}t^{-\alpha\gamma_S} (2\alpha\log (t))^{\frac{\gamma_S-|I_S|}{2}}  \prod_{s\in I_S} z_s^{-\alpha h_s^S},
    \end{align*}
    where $\Upsilon_S=    (2\pi)^{-|I_S|/2}| \Sigma_{I_S}|^{-1/2}\prod_{s\in S}(h_s^S)^{-1}  \P(\bY_{J_S}\geq \bl_S).$ Moreover, the following holds.
   % \begin{align*}%\label{eq:kappaSigma}
   %     \Upsilon_S=
  %      \frac{\P(\bY_{J_S}\geq \bl_S)}{(2\pi)^{|I_S|/2}|     \Sigma_{I_S}|^{1/2}\prod_{s\in S}h_s^S }.
   % \end{align*}
    % Also define $b(t)=(\theta t)^{1/\alpha}, t>0$.
    \begin{itemize}
  \item[(a)] We have $\bZ\in \MRV(\alpha_1,b_1,\mu_1,\E_d^{(1)})$ with $\alpha_1=\alpha, b_1(t)=(\theta t)^{1/\alpha}$, $t>0$ and \linebreak
     $\mu_1([\bzero,\bz]^c) = \sum_{j=1}^d z_{j}^{-\alpha},$ $\forall\; \bz\in \R_+^d. $
\item[(b)] Let $2\le i \le d$.
Define
\begin{align*}
   \mathcal{S}_i&:=\left\{ S\subseteq \mathbb{I}_d: \; |S|\geq i, \bone_{I_S}\Sigma^{-1}_{I_S}\bone_{I_S}=\min_{\widetilde S\subseteq \mathbb{I}_d, |\widetilde S|\ge i} \bone_{I_{\widetilde S}}\Sigma^{-1}_{I_{\widetilde S}}\bone_{I_{\widetilde S}}\right\},\\
    %=\{S_1,\ldots,S_{m_i}\},\\
    I_i&:=\arg \min_{S\in\mathcal{S}_i}|I_{S}|,
\end{align*}
where $I_i$ is not necessarily unique.
Then $\bZ\in \MRV(\alpha_i,b_i,\mu_i,\E_d^{(i)})$ where
 \begin{align}
    \gamma_i&=\gamma(\Sigma_{I_i})=
    \bone_{I_i}^{\top}\Sigma_{I_i}^{-1}\bone_{I_i}=\min_{S\subseteq \mathbb{I}_d, |S| \ge i} \min_{\bz_S \ge \bone_S} \bz_S^{\top} \Sigma_S^{-1} \bz_S, \nonumber\\
   \alpha_i&= %\alpha \cdot =\alpha \cdot \min_{S\subseteq \mathbb{I}_d, |S|\ge i} \bone_{I_S}^{\top} \Sigma_{I_S}^{-1} \bone_{I_S} =
   \alpha\gamma_i, \quad\quad   b_i^{\leftarrow}(t) = ({2\pi})^{-\frac{\gamma_i}2}\theta^{-\gamma_i}t^{\alpha\gamma_i} (2\alpha\log (t))^{\frac{|I_i|-\gamma_i}{2}}, \nonumber\\
%\end{align*}
% For sets $\Gamma_{\bz_S}=\{\bv\in \R^d_+: v_s>z_s, \forall\,s\in S\}$ with $z_s>0, \forall s\in S\subseteq \mathbb{I}_d, \,|S|\ge i$, we have
%\begin{align}
\label{def:muiAzs}
    \mu_i(\Gamma_{\bz_S})&= \begin{cases}
                   \Upsilon_{S}  \prod_{s\in I_S} z_s^{-\alpha h_s^S},
    & \text{ if }  S\in \mathcal{S}_i \text{ and } |I_S|=|I_i|, \\
                   0,  &\text{otherwise}. \end{cases}
\end{align}
\end{itemize}
\end{proposition}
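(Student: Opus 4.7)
The plan is to reduce the rectangular tail probability $\P(\bZ\in t\Gamma_{\bz_S})$ to a multivariate normal orthant probability and then invoke the sharp Gaussian tail asymptotics tied to the quadratic program $\mathcal{P}_{\Sigma_S^{-1}}$ of \Cref{lem:quadprog}. First I would exploit the Gaussian copula: set $\widetilde Z_j := \Phi^{-1}(F_j(Z_j))$, so $(\widetilde Z_1,\ldots,\widetilde Z_d)^{\top}\sim\mathcal{N}(\bzero,\Sigma)$. Then
\[
\P(\bZ\in t\Gamma_{\bz_S}) \;=\; \P\bigl(\widetilde Z_s>c_s(t),\,\forall\,s\in S\bigr), \qquad c_s(t):=\Phi^{-1}(F_s(tz_s)).
\]
Using $\overline F_s(t)\sim\theta t^{-\alpha}$ and the Mills ratio expansion $\Phi^{-1}(1-u)=\sqrt{-2\log u}(1+o(1))$ (with the next-order correction involving $\log\log$ and $\log(2\pi)$), I expand $c_s(t)=\sqrt{2\alpha\log t}\,(1+\eps_s(t))$ with an explicit $\eps_s(t)$ that records the $\log z_s$ and $\log\theta$ shifts and an additive $(\log\log t)/(\log t)$ term common to all $s$.

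Second, I would apply the standard asymptotic for the Gaussian tail $\P(\bY_S>\bc_S(t))$ with $\bY_S\sim\mathcal{N}(\bzero_S,\Sigma_S)$ and $\bc_S(t)\to\infty$ along a diagonal direction; this is the technical engine and is essentially the Hashorva--H\"usler/Savage-type asymptotic exploited in \citet{das:fasen:2023a}. By \Cref{lem:quadprog}, the exponent $\tfrac12\bc^{\top}\Sigma_S^{-1}\bc$ is minimized at $\bc$ proportional to $\be_S^*$ with active index set $I_S$, giving the leading exponential rate $\gamma_S$; the active coordinates contribute a Mills-type Laplace integration producing the $\prod_{s\in I_S}(h_s^S)^{-1}$ prefactor, while the non-active coordinates contribute the orthant probability $\P(\bY_{J_S}\ge \bl_S)$ of the conditional Gaussian $\bY_{J_S}$ given $\bY_{I_S}=\bone_{I_S}$. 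Substituting the expansion of $c_s(t)$ into $\exp(-\tfrac12 \bc^{\top}\Sigma_S^{-1}\bc)$ and using $\bone_{I_S}^{\top}\Sigma_{I_S}^{-1}e_s = h_s^S$ produces the $t^{-\alpha\gamma_S}(2\alpha\log t)^{(\gamma_S-|I_S|)/2}\prod_{s\in I_S}z_s^{-\alpha h_s^S}$ factor and the constant $\Upsilon_S$. This yields the stated asymptotic for $\P(\bZ\in t\Gamma_{\bz_S})$.

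With the rectangular-set asymptotic in hand, part (a) is immediate: taking $S=\{j\}$ gives $\P(Z_j>tz)\sim\theta(tz)^{-\alpha}$ and by the pairwise asymptotic independence of the Gaussian copula (Sibuya-type) the mass in the limit concentrates on the axes, so $\mu_1([\bzero,\bz]^c)=\sum_j z_j^{-\alpha}$ with $b_1(t)=(\theta t)^{1/\alpha}$. For part (b), I would use the derived formula to rank decay rates over all $S$ with $|S|\ge i$: the slowest polynomial rate is $t^{-\alpha\gamma_i}$ where $\gamma_i=\min\{\gamma_S:|S|\ge i\}$, and within the minimizers $\mathcal{S}_i$ the largest logarithmic factor comes from the smallest $|I_S|$, which isolates the collection $I_i$. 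Any set $S$ with $|S|\ge i$ that is either outside $\mathcal{S}_i$ or has $|I_S|>|I_i|$ then satisfies $\P(\bZ\in t\Gamma_{\bz_S})=o(1/b_i^{\leftarrow}(t))$, forcing $\mu_i(\Gamma_{\bz_S})=0$; the remaining sets give the stated formula for $\mu_i(\Gamma_{\bz_S})$. Passing from rectangular continuity sets to full $\M$-convergence on $\E_d^{(i)}$ is then standard (cf.\ \cite{lindskog:resnick:roy:2014,das:fasen:2023}).

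The main obstacle is the sharp Gaussian tail step: tracking the subpolynomial corrections $(\log t)^{(\gamma_S-|I_S|)/2}$ and the exact constant $\Upsilon_S$ requires carefully matching the $O((\log\log t)/\sqrt{\log t})$ terms in the expansion of $c_s(t)$ with the curvature of the Gaussian density at $\be_S^*$, and handling the case $J_S\neq\emptyset$ where some components of $\be_S^*$ strictly exceed $\bone$, so that the pre-limit shift $t(\bone_{J_S}-\be_{J_S}^*)$ must be shown to converge to $\bl_S\in\{0,-\infty\}^{|J_S|}$ before one can identify the limiting orthant probability $\P(\bY_{J_S}\ge \bl_S)$. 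This is precisely the delicate step already carried out in \citet[Theorems 3.1 and 3.4]{das:fasen:2023a}, which I would cite and adapt.
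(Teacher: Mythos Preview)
Your proposal is correct and coincides with the paper's treatment: the paper does not give an independent proof but states that the proposition ``summarizes \citet[Theorems 3.1 and 3.4]{das:fasen:2023a},'' and your outline (Gaussian-copula transform $\widetilde Z_j=\Phi^{-1}(F_j(Z_j))$, Mills-ratio expansion of the thresholds, Hashorva--H\"usler orthant asymptotics driven by the quadratic program $\mathcal{P}_{\Sigma_S^{-1}}$, then ranking of $\gamma_S$ and $|I_S|$ to identify $\mu_i$) is exactly the argument carried out there. You have also correctly located the only genuinely delicate point, namely the matching of the $(\log t)^{(\gamma_S-|I_S|)/2}$ correction and the identification of $\P(\bY_{J_S}\ge \bl_S)$ when $J_S\neq\emptyset$.
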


%\begin{remark}
%    In \citet{das:fasen:2024}, conclusions similar to \Cref{prop:premain} hav been derived with weaker assumptions on the marginal random variables; in this paper we restrict to Pareto-like tails for explicit computations and cleaner exposition for characterizing asymptotic independence.
%\end{remark}
%\citet{das:fasen:2024} proved the theorem in the more general context where $Z_j$ is allowed to have a regularly varying tail with weak assumption on the slowly varying function. For that the reason the results of the present paper hold as well under their setup. However, for the ease of notation we restrict to the P-GC model.

First, we note that in this model, the parameters or indices of regular variation in each subspace $\E_d^{(i)}$ are indeed all different. The proof of this result and the following results of this section are given in \Cref{se2:proofs}.

\begin{proposition} \label{prop:bi}
  Let $\bZ\in \PGC(\alpha, \theta, \Sigma)$ where $\Sigma$ is positive definite and the assumptions and notations of \Cref{prop:premain} hold.
    Then, with $\bZ\in \MRV(\alpha_i,b_i,\mu_i,\E_d^{(i)}), \,\forall\, i\in \mathbb{I}_d$, we have $$\alpha_1 < \alpha_2 < \cdots < \alpha_d$$ and in particular, $b_i(t)/b_{i+1}(t)\to \infty$ as $t\to\infty$ for $i=1,\ldots,d-1$.
\end{proposition}

Thus, the rate of convergence of tail sets in $\E_d^{(i)}$ for different $i\in\mathbb{I}_d$ are different.
Next, we investigate the support of the limit measure $\mu_i$. %\DAS{The support of the measure $\mu_i$ is not necessarily $\E_d^{(i)}$, as we see in the next example, and this may have an effect on the regularly varying parameter on the \VF{cones}, see \citet{das:fasen:2024}.} \VF{I am not sure about this comment: Which cone? $\E_d^{(i)}$ is not fitting in my opinion. \DAS{ok, I change }}

\begin{example}~ \label{Example 2.7}
 For any positive-definite correlation matrix $\Sigma = (\rho_{j\ell})$, we have $$\bone_{\{j,\ell\}}^{\top}\Sigma^{-1}_{\{j,\ell\}}\bone_{\{j,\ell\}}=\frac{2}{1+\rho_{j \ell }}, \quad j\neq \ell \in \mathbb{I}_d,$$
 and the right-hand side above will not have the same value for all $j\not=\ell\in \mathbb{I}_d$ unless $\Sigma$ is an equicorrelation matrix. Hence, if $\Sigma$ is not an equicorrelation matrix, using notation from \Cref{prop:premain}, there exists a set $S=\{j^*, \ell^*\}\notin \mathcal{S}_2$, and, $$\mu_{2}(\{\bz\in\R_+^d:z_j>0\,\forall\, j\in  S, z_\ell=0\,\forall\, \ell\in\mathbb{I}_d\backslash  S\})=0.$$   
%see as well \Cref{rem:nullgauss} for an example.
\end{example}

In the following, we present a general result characterizing the support of the limit measure on each of the Euclidean subcones for the $\PGC$ model.
%Finally, we are able to characterize the support of $\mu_i$.

\begin{proposition}\label{prop:supportpgc}
  Let the assumptions and notations of \Cref{prop:premain} hold and fix some $i\in \mathbb{I}_d$.
\begin{itemize}
    \item[(a)] Suppose that for all $S\subseteq \mathbb{I}_d$ with $|S|=i$ we have $\Sigma^{-1}_S\bone_S >\bzero_S$ and $\bone_S^{\top}\Sigma^{-1}_S\bone_S=\gamma_i$. Then the support of the limit measure $\mu_i$ on $\E_d^{(i)}$ as defined in \eqref{def:muiAzs} is
    $$\bigcup_{\genfrac{}{}{0pt}{}{S\subset\mathbb{I}_d}{|S|=i}}\{\bz\in\R_+^d:z_j>0\,\forall\, j\in S, z_\ell=0\,\forall\, \ell\in\mathbb{I}_d\backslash S\}.$$ 
    %and, in fact, $\mu_i(\{\bz\in\R_+^d:z_j>0\,\forall\, j\in S, z_l=0\,\forall\, l\in\mathbb{I}_d\backslash S\})>0$ for all $S\subseteq \mathbb{I}_d$ with $|S|=i$.
    \item[(b)] Suppose that for  some $S\subseteq \mathbb{I}_d$ with $|S|=i$,  $\Sigma^{-1}_S\bone_S >\bzero_S$ and $\gamma_i\not=\bone_S^{\top}\Sigma^{-1}_S\bone_S$.   Then  $$\mu_{i}(\{\bz\in\R_+^d:z_j>0\,\forall\, j\in  S, z_\ell=0\,\forall\, \ell\in\mathbb{I}_d\backslash  S\})=0.$$
 \end{itemize}
% In particular, in both cases 
% $\alpha_1<\alpha_2<\cdots<\alpha_d$.
\end{proposition}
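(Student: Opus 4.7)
The plan is to exploit the explicit formula \eqref{def:muiAzs} for $\mu_i$ on rectangular sets $\Gamma_{\bz_S}$ together with \Cref{lem:quadprog}, and then pass from rectangles to general Borel sets in order to characterize the support.

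Starting with part (b), the hypothesis $\Sigma_S^{-1}\bone_S > \bzero_S$ combined with \Cref{lem:quadprog} yields $I_S = S$ and $\gamma_S = \bone_S^\top \Sigma_S^{-1}\bone_S$. Since $\gamma_i = \min_{|\tilde S|\geq i}\gamma_{\tilde S} \leq \gamma_S$ and $\gamma_i \neq \gamma_S$ by hypothesis, I must have $\gamma_S > \gamma_i$, so $S \notin \mathcal{S}_i$ and \eqref{def:muiAzs} gives $\mu_i(\Gamma_{\bz_S}) = 0$ for every $\bz_S > \bzero_S$. To translate this to the face $F_S := \{\bz \in \R_+^d : z_j > 0 \, \forall j \in S,\; z_\ell = 0 \, \forall \ell \notin S\}$, I would cover $F_S$ by the countable family of sets of the form $\{a < v_s < b\,\forall s\in S;\; v_\ell < \epsilon\,\forall \ell \notin S\}$ with rational $0<a<b$ and $\epsilon>0$; each lies inside $\Gamma_{a\bone_S}$ and hence has $\mu_i$-measure zero, so countable subadditivity yields $\mu_i(F_S) = 0$.

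For part (a), applying \Cref{lem:quadprog} to each $|S|=i$ first gives $I_S = S$ and $\gamma_S = \gamma_i$, so $S \in \mathcal{S}_i$ with $|I_S| = i$; moreover $|I_i| = i$, since any $T \in \mathcal{S}_i$ with $|I_T| < i$ would force $\gamma_{I_T} = \gamma_i$ for a set of size less than $i$, contradicting the definition of $\gamma_i$ as the minimum over sets of size at least $i$. The central step is then the claim that $\mu_i(\Gamma_{\bz_T}) = 0$ for every $|T| > i$ and $\bz_T > \bzero$. By \eqref{def:muiAzs} this amounts to showing that either $\gamma_T > \gamma_i$ or $|I_T| \neq |I_i| = i$. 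The main obstacle of the whole proof is to rule out the residual case in which $\gamma_T = \gamma_i$ and $|I_T| = i$ hold simultaneously under assumption (a): by \Cref{lem:quadprog} the optimal QP solution on $\Sigma_T$ would then satisfy $\Sigma_{T\setminus I_T,\,I_T}\Sigma_{I_T}^{-1}\bone_{I_T} > \bone_{T\setminus I_T}$ strictly, while (a) simultaneously forces $\gamma_{S'} = \gamma_i$ and $\Sigma_{S'}^{-1}\bone_{S'} > \bzero$ for every $i$-subset $S'$ of $T$. My plan is to derive a contradiction by pairing $I_T$ with another $i$-subset obtained by swapping one index of $I_T$ with one of $T\setminus I_T$ and comparing the two QP values; in the equicorrelated benchmark this reduces to the impossibility of $i\rho/(1+(i-1)\rho) > 1$ for $\rho < 1$.

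With the central claim in hand, both inclusions follow. For $\bigcup_{|S|=i}F_S \subseteq \mathrm{supp}(\mu_i)$, the claim implies that $\mu_i$ restricted to $\Gamma_{\bz_{S^*}}$ is concentrated on $\{v_\ell = 0,\,\forall \ell \notin S^*\}$ and factorizes as a product measure in the $S^*$-coordinates; for $\bz^* \in F_{S^*}$, the neighborhood $V_\epsilon := \prod_{j\in S^*}(z_j^* - \epsilon, z_j^* + \epsilon)\times\prod_{\ell \notin S^*}[0,\epsilon)$ therefore satisfies
$$\mu_i(V_\epsilon) = \Upsilon_{S^*}\prod_{s\in S^*}\bigl[(z_s^* - \epsilon)^{-\alpha h_s^{S^*}} - (z_s^* + \epsilon)^{-\alpha h_s^{S^*}}\bigr] > 0$$
for all sufficiently small $\epsilon$, so $\bz^*$ lies in the support. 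For the reverse inclusion $\mathrm{supp}(\mu_i) \subseteq \bigcup_{|S|=i}F_S$, any $\bz^* \in \E_d^{(i)}\setminus\bigcup_{|S|=i}F_S$ has at least $i+1$ strictly positive coordinates indexed by some $T^*$ with $|T^*| \geq i+1$, so a sufficiently small neighborhood of $\bz^*$ lies inside $\Gamma_{(\bz^*/2)_{T^*}}$ and hence has $\mu_i$-measure zero by the claim, completing the characterization.
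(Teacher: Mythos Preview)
Your overall architecture is sound and aligns with the paper: from the hypothesis of (a) you read off $I_S=S$ and $\gamma_S=\gamma_i$ for every $|S|=i$, you try to pin down $\mathcal S_i$ and $|I_i|$, and then let formula \eqref{def:muiAzs} carry the support statement. Part (b) is correct, and the passage at the end of (a)---from $\mu_i(\Gamma_{\bz_T})=0$ for $|T|>i$ to concentration on the $i$-faces, and from the product formula to $\mu_i(V_\epsilon)>0$---is more explicit than the paper's one-line ``follows directly from \Cref{prop:premain}'' and is fine.

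The genuine gap is in your argument that $|I_i|=i$. You claim that a set $T\in\mathcal S_i$ with $|I_T|<i$ would ``force $\gamma_{I_T}=\gamma_i$ for a set of size less than $i$, contradicting the definition of $\gamma_i$ as the minimum over sets of size at least $i$.'' That is not a contradiction: $\gamma_i=\min_{|S|\ge i}\gamma_S$ says nothing about the values $\gamma_{S'}$ for $|S'|<i$, and in general a smaller set may well share the same QP value. The way to rule this out is to \emph{use the hypothesis of (a)}, not merely the definition of $\gamma_i$: if $|I_T|<i$, pick $S$ with $I_T\subsetneq S\subseteq T$ and $|S|=i$; by (a) we have $\Sigma_S^{-1}\bone_S>\bzero_S$, so \Cref{lem:aux1}(a) applied to $\Sigma_S$ yields $\gamma_S>\gamma_{I_T}=\gamma_T=\gamma_i$, contradicting $\gamma_S=\gamma_i$.

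The same lemma dispatches your ``main obstacle'' case $|I_T|=i$, $|T|>i$, $\gamma_T=\gamma_i$: take any $S'\subset T$ with $|S'|=i$ and $S'\neq I_T$; then $(T\setminus S')\cap I_T\neq\emptyset$, so \Cref{lem:aux1}(b) gives $\gamma_T>\gamma_{S'}=\gamma_i$, again a contradiction. This is precisely the swapping idea you sketch---it is already packaged as \Cref{lem:aux1}, and the paper's terse ``Thus, $\mathcal S_i=\{S\subseteq\mathbb I_d:|S|=i\}$'' is silently relying on it. Once $\mathcal S_i=\{|S|=i\}$ is established, both $|I_i|=i$ and $\mu_i(\Gamma_{\bz_T})=0$ for $|T|>i$ are immediate, and your support calculations go through.
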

%\marginpar{\VF{I have changed the proposition, but I am also fine to skip it}}

%The proof is given in \Cref{se2:proofs}.

\subsection{Regular variation under Marshall-Olkin copula}

Another important example of multivariate tail risk modeling is the Marshall-Olkin copula. The Marshall-Olkin distribution is often used in reliability theory to capture the dependence between the failure of subsystems in an entire system and hence, is a candidate model for measuring systemic risk. We consider a particular type of Marshall-Olkin survival copula; cf. \cite{lin:li:2014} and \cite[Example 2.14]{das:fasen:2023}. Assume that for every non-empty set $S\subseteq \mathbb{I}_d$  there exists a parameter $\lambda_S>0$ and $\Lambda:=\{\lambda_S: \emptyset\neq S\subseteq\mathbb{I}_d\}$. Then the generalized Marshall-Olkin survival copula with rate parameter $\Lambda$ is given by
\begin{align}\label{eq:scop:mo}
   \widehat{C}^{\MO}_{\Lambda}(u_1,\ldots,u_d) = \prod_{i=1}^d \prod_{|S|=i} \bigwedge_{j\in S} u_j^{\eta_j^S}, \quad 0 < u_j <1,
\end{align}
where
\begin{align}\label{eq:MO:eta}
    \eta_j^S = \frac{\lambda_S}{\sum\limits_{J \supseteq \{j\} } \lambda_J}, \quad j\in S \subseteq \mathbb{I}_d.
\end{align}
Similar to the $\PGC$ model, we define a Pareto-Marshall-Olkin copula ($\PMOC$) model next.

\begin{definition}\label{def:pmoc}
An $\R^d$-valued random vector $\bZ=(Z_1,\ldots,Z_d)^\top \sim F$ follows a \emph{Pareto-tailed distribution with Marshall-Olkin copula} with index $\alpha>0$, scaling parameter $\theta>0$ and rate parameters $\Lambda= \{\lambda_S: \emptyset \neq S\subseteq\mathbb{I}_d\}$, if the following holds:
\begin{enumerate}[(i)]
    \item The  marginal distributions $F_j$ of $Z_j$ are continuous and strictly increasing  with tail $ \ov F_j(t):=1-F_j(t) \sim \theta t^{-\alpha}$,  $\forall\,j\in\mathbb{I}_d$ and some $\theta,\alpha>0$.
  %  \item  The function $b:\R_+\to\R_+$ is defined as $b(t)=(\theta t)^{\frac{1}{\alpha}}$.
   % \item The joint distribution function $F$ of $\bZ$ is given by  $$F(\bz) = C^{\MO}_{\Lambda}(F_1(z_1),\ldots,F_d(z_d)), \quad \bz=(z_1,\ldots,z_d)^\top\in\R^d,$$  where $C^{\MO}_{\Lambda}$ denotes the Marshall-Olkin copula with rate parameter set $\Lambda$.
     \item {The joint survival distribution function $\ov{F}$ of $\bZ$ is given by $$\overline{F}(\bz) = \P(Z_1>z_,\ldots,Z_d>z_d)= \widehat{C}^{\MO}_{\Lambda}(\ov{F}_1(z_1),\ldots,\ov{F}_d(z_d)), \quad \bz=(z_1,\ldots,z_d)^\top\in\R^d,$$
      where $\widehat{C}^{\MO}_{\Lambda}$ denotes the  Marshall-Olkin survival copula with rate parameter set $\Lambda$.}
\end{enumerate}
We write $\bZ \in \PMOC(\alpha,\theta,\Lambda)$   where some parameters may be dropped for convenience.
\end{definition}

It is also possible to show multivariate regular variation for any $\bZ \in \PMOC(\alpha,\theta,\Lambda)$  on the cones $\E_d^{(i)}$, but finding the exact parameters and limit measures requires some involved combinatorial computations; hence, we concentrate on two specific choices of $\Lambda$, cf. \cite[Example 2.14]{das:fasen:2023}:
\begin{enumerate}[(a)]
    \item Equal parameter for all sets: Here, $\lambda_S=\lambda$ for all non-empty sets $S\subseteq\mathbb{I}_d$ where $\lambda>0$. We denote this model by $\PMOC(\alpha,\theta,\lambda^{=})$.
    \item Parameters proportional to the cardinality of the sets: Here, $\lambda_S=|S|\lambda$ for all non-empty sets $S\subseteq\mathbb{I}_d$ where $\lambda>0$.  We denote this model by $\PMOC(\alpha,\theta,\lambda^{\propto})$.
    \end{enumerate}
    
    Note that in both of the cases the Marshall-Olkin copula and hence, the $\PMOC$ model do not depend on the value of $\lambda$.
\begin{remark}
If $\bZ\in \PMOC(\alpha, \theta, \lambda^{=})$ then any sub-vector $\bZ_S$ with $S\subseteq\mathbb{I}_d$ also satisfies  $\bZ_S\in \PMOC(\alpha, \theta, \lambda^{=})$ implying a nested structure across dimensions. However, in the case of a Marshall-Olkin copula with proportional parameters, i.e., $\bZ\in \PMOC(\alpha, \theta, \lambda^{\propto})$,   such a nested property does not hold anymore.
\end{remark}
    
    In each of these cases, we can explicitly compute all the relevant parameters of the multivariate regular variation, and in fact, the regular variation limit measures have positive mass on all feasible support regions in these cases. The result given next is adapted from \cite[Example 2.14]{das:fasen:2023} and characterizes multivariate regular variation for the $\PMOC$ models for the choices of \emph{equal rate parameters} and \emph{proportional rate parameters}. These are also known as Caudras-Auge copulas \cite{cuadras:auge:1981} and have been used in L\'evy frailty models for survival analysis.

\begin{proposition}\label{prop:pmocmrv} The following statements hold.
\begin{enumerate}[(i)]
    \item Let $\bZ\in \PMOC(\alpha, \theta, \lambda^{=})$. Then $\bZ\in \MRV(\alpha_i, b_i, \mu_i, \E_d^{(i)})$ for $i\in \mathbb{I}_d$ where
    \begin{align*}
  & \alpha_i =(2-2^{-(i-1)})\alpha, \quad \quad \quad {b_i(t) = \theta^{\frac{1}{\alpha}} t^{\frac{1}{\alpha_i}}}, 
  \intertext{and for sets $\Gamma_{\bz_S}^{(d)}=\{\bv\in \R^d_+: v_s>z_s, \forall\,s\in S\}$ with $z_s>0, \forall s\in S\subseteq \mathbb{I}_d, \,|S|\ge i$, we have}
    &\mu_i(\Gamma_{\bz_S}^{(d)})  = \begin{cases}
                  \prod\limits_{j=1}^i \left({{z}_{(j)}}\right)^{-\alpha2^{-(j-1)}},
    & \text{ if }  |S|=i, \\
                   0,  &\text{otherwise}, \end{cases}
\end{align*}
where ${z}_{(1)}\ge \ldots \ge {z}_{(i)}$ denote the decreasing order statistic of $(z_j)_{j\in S}$.
 \item Let $\bZ\in \PMOC(\alpha, \theta, \lambda^{\propto})$. Then $\bZ\in \MRV(\alpha_i, b_i, \mu_i, \E_d^{(i)})$ for $i\in \mathbb{I}_d$ where 
    \begin{align*}
  & \alpha_i =\frac{\alpha}{d+1}\left(2d-\frac{d-i}{2^{i-1}}\right),%\alpha_i \frac d{d+1} +   \frac{i\alpha}{(d+1)2^{{i-1}}}},
  \quad  \quad \quad {b_i(t) = \theta^{\frac{1}{\alpha}} t^{\frac{1}{\alpha_i}}}, \quad 
  \intertext{and for sets $\Gamma_{\bz_S}^{(d)}=\{\bv\in \R^d_+: v_s>z_s, \forall\,s\in S\}$ with $z_s>0, \forall s\in S\subseteq \mathbb{I}_d, \,|S|\ge i$, we have}
    &\mu_i(\Gamma_{\bz_S}^{(d)})  = \begin{cases}
                   \prod\limits_{j=1}^i \left({z}_{(j)}\right)^{-\alpha\left(1-\frac{j-1}{d+1}\right)2^{-(j-1)}},
    & \text{ if }  |S|=i, \\
                   0,  &\text{otherwise}. \end{cases}
\end{align*}
%where ${z}_{(1)}\ge \ldots \ge {z}_{(i)}$ denote the decreasing order statistic of $(z_j)_{j\in S}$.
\end{enumerate}
{In both cases we have $\alpha_1 < \alpha_2 < \cdots < \alpha_d$ and in particular, $b_i(t)/b_{i+1}(t)\to \infty$ as $t\to\infty$ for $i=1,\ldots,d-1$.}
\end{proposition}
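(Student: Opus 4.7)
The plan is to compute the marginal survival copula of $\widehat{C}^{\MO}_{\Lambda}$ on any $S\subseteq\mathbb{I}_d$ with $|S|\ge i$, insert the Pareto marginal tails $\overline{F}_s(tz_s)\sim\theta(tz_s)^{-\alpha}$, and extract the power-law asymptotic of $\P(\bZ\in t\Gamma_{\bz_S}^{(d)})$, from which the index $\alpha_i$, the normalizing function $b_i$, and the limit measure $\mu_i$ on rectangular sets can be read off directly. Multivariate regular variation on $\E_d^{(i)}$ will then follow from the rectangular-set characterization of $\M$-convergence, since we handle $\Gamma_{\bz_S}^{(d)}$ for every $S$ with $|S|\ge i$.

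First I would restrict $\widehat{C}^{\MO}_{\Lambda}$ to coordinates in $S$ by setting $u_j=1$ for $j\notin S$ in the product formula \eqref{eq:scop:mo}. Each factor $\bigwedge_{j\in T}u_j^{\eta_j^T}$ collapses to $1$ when $T\cap S=\emptyset$ and to $\bigwedge_{j\in T\cap S}u_j^{\eta_j^T}$ otherwise, so grouping $T$ by $R:=T\cap S$ yields
\[
\widehat{C}^{\MO}_{\Lambda,S}(u_s,s\in S)=\prod_{\emptyset\neq R\subseteq S}\;\prod_{T:\,T\cap S=R}\;\bigwedge_{j\in R}u_j^{\eta_j^T}.
\]
The crucial observation is that in both of the models under consideration $\eta_j^T$ depends on $T$ only, not on $j\in T$: direct evaluation of \eqref{eq:MO:eta} gives $\eta_j^T=2^{-(d-1)}$ in case (i) and $\eta_j^T=|T|/(2^{d-2}(d+1))$ in case (ii), using $\sum_{J\ni j}|J|=2^{d-2}(d+1)$ for the latter. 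Consequently, after ordering $u_{(1)}\le\cdots\le u_{(|S|)}$ (equivalently $z_{(1)}\ge\cdots\ge z_{(|S|)}$), every inner minimum simplifies to $u_{(\tau(R))}^{\eta^T}$ with $\tau(R):=\min\{j:(j)\in R\}$.

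Next I would collect the total exponent of each $u_{(j)}$ by summing over all pairs $(T,R)$ with $T\cap S=R$ and $\tau(R)=j$. The bookkeeping splits into two independent counts: the sets $R$ with $\tau(R)=j$ are precisely $\{(j)\}\cup R'$ with $R'\subseteq\{(j+1),\dots,(|S|)\}$, giving $2^{|S|-j}$ sets with $\sum|R|=2^{|S|-j-1}(|S|-j+2)$; while the extensions satisfy $\sum_{T:T\cap S=R}|T|=|R|\,2^{d-|S|}+(d-|S|)2^{d-|S|-1}$. Case (i) is essentially set counting and produces exponent $2^{-(j-1)}$ on $u_{(j)}$ summing to $2-2^{-(|S|-1)}$. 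Case (ii) is the main combinatorial hurdle: combining the two sums and simplifying with the closed form $\sum_{j=1}^{i}j\,2^{-j}=2-(i+2)2^{-i}$ yields exponent $2^{-(j-1)}\bigl(1-(j-1)/(d+1)\bigr)$ on $u_{(j)}$, with total $\bigl(2d-(d-|S|)2^{-(|S|-1)}\bigr)/(d+1)$, matching $\alpha_i/\alpha$ when $|S|=i$.

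Finally, substituting $u_{(j)}\sim\theta(tz_{(j)})^{-\alpha}$ and choosing $b_i(t)=\theta^{1/\alpha}t^{1/\alpha_i}$, I obtain
\[
t\cdot\P\bigl(\bZ\in b_i(t)\Gamma_{\bz_S}^{(d)}\bigr)\longrightarrow\prod_{j=1}^{|S|}z_{(j)}^{-\alpha c_{i,j}},
\]
where the $c_{i,j}$ are exactly the per-coordinate exponents from the previous step; this matches the stated formula for $\mu_i(\Gamma_{\bz_S}^{(d)})$ whenever $|S|=i$. For $|S|=k>i$ the identical calculation produces a total $t$-exponent of $-\alpha_k<-\alpha_i$, so after $b_i$-normalization the limit vanishes, giving $\mu_i(\Gamma_{\bz_S}^{(d)})=0$. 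Because $\M$-convergence on $\E_d^{(i)}$ is determined by convergence on the rectangular sets $\Gamma_{\bz_S}^{(d)}$ with $|S|\ge i$ (which are $\mu_i$-continuity sets for any $\bz_S>\bzero_S$), this completes the identification $\bZ\in\MRV(\alpha_i,b_i,\mu_i,\E_d^{(i)})$ in both parts.
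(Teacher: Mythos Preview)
The paper does not actually prove this proposition: it states only that the result is ``adapted from \cite[Example 2.14]{das:fasen:2023}'' and gives no argument. Your direct computation is therefore not comparable to a paper proof, but it is correct and is essentially what a self-contained verification must do.

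Your key observation that $\eta_j^T$ is independent of $j$ in both parameter regimes (equal to $2^{-(d-1)}$ in (i) and $|T|/(2^{d-2}(d+1))$ in (ii)) is the crucial simplification, and the combinatorial bookkeeping that follows is accurate: I checked the counts $2^{|S|-j}$ for $\{R:\tau(R)=j\}$, the weighted sum $\sum|R|=2^{|S|-j-1}(|S|-j+2)$, the extension count $\sum_{T\cap S=R}|T|=|R|2^{d-|S|}+(d-|S|)2^{d-|S|-1}$, and the resulting per-coordinate exponents in both cases. The final appeal to the rectangular-set characterization of $\M$-convergence on $\E_d^{(i)}$ (Proposition~2.7 of \cite{das:fasen:2023}) is legitimate and is the tool the paper itself relies on throughout. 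Two small technical points you leave implicit but which hold: when substituting $u_s=\overline{F}_s(tz_s)\sim\theta(tz_s)^{-\alpha}$ inside a finite product of minima, you need $\min_j a_j(t)\sim\min_j A_j(t)$ whenever each $a_j(t)/A_j(t)\to 1$, which is immediate here since the index set is finite; and the vanishing for $|S|=k>i$ uses $\alpha_k>\alpha_i$, which is easily checked from the explicit formulas (both sequences are strictly increasing in $i$).
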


It is easy to check that if $\bZ\in\PMOC(\alpha,\theta,\Lambda)$, where the rate parameters are either \emph{all equal} or are \emph{proportional} (to the size of the sets) as in \Cref{prop:pmocmrv}, and the marginals are identically distributed, then $\bZ$ is an exchangeable random vector (\cite{durrett:2010}). This justifies the fact that in these two cases $\mu_i$ actually puts positive mass on $$\{\bz\in\R_+^d:z_j>0\,\forall\, j\in S, z_\ell=0\,\,\forall\,\, \ell\in\mathbb{I}_d\backslash S\}$$
for all $S\subseteq\mathbb{I}_d$ with $|S|=i$ for any fixed $i\in\mathbb{I}_d$.

%\DAS{(Another section commenting on possible multivariate models? Archimedean copula?)}
%\marginpar{\DAS{Do we need to write more?}\VF{I think this is sufficient}}

\section{Measuring CoVaR}\label{sec:systemicrisk}

{Risk contagion} is often assessed using conditional measures of risk, and in this regard, CoVaR has turned out to be both reasonable and popular; cf. \Cref{sec:intro}, also see \cite{kley:kluppelberg:reinert:2017,bianchietal:2023} for computations of CoVaR in various set-ups. By definition it measures the effect of severe stress of one risk factor, say $Y_2$, on the risk behavior of another factor, say $Y_1$. To facilitate the computation of CoVaR in the bipartite network set-up as described previously, first, we provide a result on its asymptotic behavior for a bivariate random vector assuming appropriate multivariate tail behavior, i.e., multivariate regular variation. Proofs of the results in this section are given in \Cref{sec:proofofsec5}.

\begin{theorem} \label{thm:bivCoVaRmain}
 {Let   $\bY=(Y_1, Y_2)^\top$ be a bivariate random vector with margins $F_1$ and $F_2$, respectively. Suppose
$\bY':=(Y_1,F_1^{\leftarrow}\circ F_2(Y_2))^\top\in\MRV(\alpha_i,b_i,\mu_i',\E_2^{(i)})$ for  $i=1,2$. }  Define the functions $h, h_{\gamma}:(0,\infty)\to(0,\infty)$ as
\begin{align*}
 h(y) & :=\mu_2'(\left(y,\infty\right)\times(1,\infty)),\\
    h_{\gamma}(y) &:= \gamma b_2^{\leftarrow}(\VaR_{\gamma}(Y_1))\P\left(Y_1>y\VaR_{\gamma}(Y_1)|Y_2>\VaR_{\gamma}(Y_1))\right)
\end{align*}
for fixed $0<\gamma<1$.  By the MRV assumptions  we have $h_{\gamma}(y)\to h(y)$  as $\gamma\downarrow 0$ for $y\in (0,\infty)$ and $h$ is decreasing with $\lim_{y\uparrow \infty}h(y)=0$. Further, assume the following:
\begin{enumerate}[(i)]
    \item Let $\lim_{y\downarrow 0}h(y)=r\in\left(0,\infty\right]$ and for some $l\geq 0$, we have  $h:(l,\infty)\to(0,r)$
is strictly decreasing and continuous with inverse $h^{-1}$.
\item Let $g:(0,1)\to(0,\infty)$ be a measurable function with $g(\gamma)\gamma b_2^{\leftarrow}(\VaR_{\gamma}(Y_1))\in(0,r)$.
\end{enumerate}
%Let $\lim_{y\downarrow 0}h(y)=r\in\left(0,\infty\right]$ and suppose $h:(l,\infty)\to(0,r)$
%is strictly decreasing and continuous with inverse $h^{-1}$ for some $l\geq 0$. Finally, let $g:(0,1)\to(0,1)$ be a decreasing map
%
%Suppose $h$ is strictly decreasing and continuous  with $\lim_{y\downarrow 0}h(y)=a\in\left(0,\infty\right]$ \marginpar{\VF{$a$ is not necessarily $\infty$, e.g. strongly dependent case}} and 
%$\lim_{y\downarrow \infty}h(y)=0$  such that $h:(0,\infty)\to(0,a)$
%is bijective with inverse $h^{-1}$. 
% Let $g:(0,\infty)\to(0,\infty)$ be a map satisfying 
%with $g(\gamma)\gamma b_2^{\leftarrow}(b_1(\gamma^{-1}))\in(0,r)$. 
Now let one of the following conditions hold:
\begin{itemize}
    \item[(a)] Either,
        \begin{align}\label{cond:bivcova}
            0<\liminf_{\gamma\downarrow 0}g(\gamma)\gamma b_2^{\leftarrow}(\VaR_{\gamma}(Y_1))\leq \limsup_{\gamma\downarrow 0}g(\gamma)\gamma b_2^{\leftarrow}(\VaR_{\gamma}(Y_1))< r,
        \end{align}
       
    \item[(b)] or, 
        $$\lim_{\gamma\downarrow 0}g(\gamma)\gamma b_2^{\leftarrow}(\VaR_{\gamma}(Y_1))=0,$$ and $h_{\gamma}^{\leftarrow}(v)/h^{-1}(v)\to 1$ uniformly on $\left(0,R\right]$ as $\gamma\downarrow 0$ for some $0<R<r$,
    \item[(c)] or,  
        $$\lim_{\gamma\downarrow 0}g(\gamma)\gamma b_2^{\leftarrow}(\VaR_{\gamma}(Y_1))=r,$$ and $h_{\gamma}^{\leftarrow}(v)/h^{-1}(v)\to 1$ uniformly on $\left[L,r\right)$ as $\gamma\downarrow 0$ for some $0<L<r$. 
       
 %       and in particular, if $\alpha_2<2\alpha_1$
 %       \begin{eqnarray*}
 %           \text{CoVaR}_{\upsilon \gamma,\gamma}(Y_1|Y_2)\sim  b_1(\gamma^{-1})h^{\leftarrow}(\upsilon \gamma^2 b_2^{\leftarrow}(b_1(\gamma^{-1}))), \quad \text{as }\gamma\downarrow 0.
 %       \end{eqnarray*}
\end{itemize}
    %we additionally assume that for any $a>0$,
%\begin{eqnarray}\label{eq:hggbyghunif}
%   \lim_{\gamma\downarrow 0} \sup_{x\ge a} \Bigg| \frac{h_{\gamma}(x)}{h(x)} - 1 \Bigg| = 0. 
%\end{eqnarray}
% Also assume that $h$ is continuous with inverse given by $h^{\leftarrow}$ and
%\begin{eqnarray}\label{star}
%    \lim_{\delta\downarrow 0}\limsup_{u\to 0}
%    \frac{h^{\leftarrow}(u)}{h^{\leftarrow}(u(1+\delta))}\leq 1.
%\end{eqnarray}
%and suppose $f^{\leftarrow}\in\mathcal{R}_{0,-\beta}$ for some $\beta >0$. \VF{In $0$ regularly varying} 

%Moreover, $\lim_{\gamma\downarrow 0} \text{CoVaR}_{\upsilon\gamma,\gamma}(Y_1|Y_2)=\infty$.
 Then for any $0<\upsilon<1$,
        \begin{eqnarray*}
            \text{CoVaR}_{\upsilon g(\gamma) |\gamma}(Y_1|Y_2)\sim  \VaR_{\gamma}(Y_1)h^{-1}(\upsilon g(\gamma)\gamma b_2^{\leftarrow}(\VaR_{\gamma}(Y_1))), \quad \gamma\downarrow 0.
        \end{eqnarray*}
%and equivalently,
%    \VF{\begin{eqnarray*}
  %          \text{CoVaR}_{\upsilon g(\gamma) |\gamma}(Y_1|Y_2)\sim  \VaR_{\gamma}(Y_1)\frac{\VaR_\gamma(Y_2)}{\VaR_\gamma(Y_1)}h^{-1}(\upsilon g(\gamma)\gamma b_2^{\leftarrow}(\VaR_{\gamma}(Y_1))), \quad \gamma\downarrow 0.
  %      \end{eqnarray*}
%\VF{In the case that $Y_1$ and $Y_2$ are tail-equivalent this reduces to
%\begin{eqnarray} \label{eq 4.2}
%            \text{CoVaR}_{\upsilon g(\gamma) |\gamma}%(Y_1|Y_2)\sim  \VaR_{\gamma}(Y_1)
 %           \left(\frac{\mu_1(\left[1,\infty\right)\times\R_+)}{\mu_1(\R_+\times \left[1,\infty\right))}\right)^{\frac{1}{\alpha}}
  %          h^{-1}(\upsilon g(\gamma)\gamma b_2^{\leftarrow}(\VaR_{\gamma}(Y_1)))  \nonumber\\
   %     \end{eqnarray}
    %    as $\gamma\downarrow 0$.}
        
\end{theorem}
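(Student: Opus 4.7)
The strategy is to invert the defining equation of CoVaR so that it reads as $h_\gamma^{-1}(v_\gamma)$ evaluated at a specific argument $v_\gamma$, and then to transfer $h_\gamma^{-1}$ to $h^{-1}$ using the MRV-driven convergence $h_\gamma \to h$.

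First, I would exploit the continuity of the marginals (inherent in the MRV assumption through the definition of $b_2^{\leftarrow}$ applied to $\VaR_\gamma(Y_2)$) to argue that for small enough $\gamma$, the conditional survival function $y\mapsto \P(Y_1>y\mid Y_2>\VaR_\gamma(Y_2))$ is continuous and strictly decreasing on the relevant range, so that the infimum defining
\[
\CoVaR_{\upsilon g(\gamma)\mid\gamma}(Y_1\mid Y_2) \;=\; \inf\{y:\P(Y_1>y\mid Y_2>\VaR_\gamma(Y_2))\le \upsilon g(\gamma)\}
\]
is attained with equality. Writing $y=y'\VaR_\gamma(Y_2)$ and using the very definition of $h_\gamma$, this equation becomes
\[
h_\gamma(y') \;=\; \upsilon g(\gamma)\,\gamma\, b_2^{\leftarrow}(\VaR_\gamma(Y_2)) \;=:\; v_\gamma,
\]
so that
\[
\CoVaR_{\upsilon g(\gamma)\mid\gamma}(Y_1\mid Y_2) \;=\; \VaR_\gamma(Y_2)\,h_\gamma^{-1}(v_\gamma).
\]
Hence the claim reduces to $h_\gamma^{-1}(v_\gamma)/h^{-1}(v_\gamma)\to 1$ as $\gamma\downarrow 0$.

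Next, I would justify the pointwise convergence $h_\gamma(y)\to h(y)$ for every $y>0$ using MRV on $\E_2^{(2)}$: writing $s_\gamma=\VaR_\gamma(Y_2)$ and $t_\gamma=b_2^{\leftarrow}(s_\gamma)$, so that $s_\gamma=b_2(t_\gamma)(1+o(1))$ and $\P(Y_2>s_\gamma)=\gamma$, one has
\[
h_\gamma(y) \;=\; \frac{\gamma\, t_\gamma}{\P(Y_2>s_\gamma)}\,\P\!\left(\tfrac{Y_1}{b_2(t_\gamma)}>y\,\tfrac{s_\gamma}{b_2(t_\gamma)},\;\tfrac{Y_2}{b_2(t_\gamma)}>\tfrac{s_\gamma}{b_2(t_\gamma)}\right)\cdot t_\gamma\bigg/t_\gamma \;\longrightarrow\; h(y),
\]
by \eqref{eq:RegVarMeasmui} applied to the continuity set $(y,\infty)\times(1,\infty)$ (using tail equivalence and Potter-type bounds to handle the mild perturbation $s_\gamma/b_2(t_\gamma)\to 1$). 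The hypotheses already record that $h$ is strictly decreasing and continuous with $h(y)\to 0$ as $y\uparrow\infty$ and $h(y)\to r$ as $y\downarrow l$, so $h^{-1}\colon (0,r)\to (l,\infty)$ is well-defined and continuous.

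Finally, I would treat the three regimes on $v_\gamma$ separately. In case (a), $v_\gamma$ lies in a compact subinterval $[a,b]\subset (0,r)$ for small $\gamma$, so $h^{-1}(v_\gamma)$ lies in a compact subinterval of $(l,\infty)$; on such compact sets the monotone pointwise convergence $h_\gamma\to h$ to a continuous strictly decreasing limit upgrades to uniform convergence (a standard Polya-Dini argument for monotone functions), and uniform convergence of monotone functions with continuous strictly monotone limit transfers to their inverses, yielding $h_\gamma^{-1}(v_\gamma)-h^{-1}(v_\gamma)\to 0$ with $h^{-1}(v_\gamma)$ bounded away from $0$ and $\infty$, hence the desired ratio tends to $1$. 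In cases (b) and (c), the uniformity of $h_\gamma^{-1}/h^{-1}\to 1$ on $(0,R]$ respectively $[L,r)$ is part of the hypothesis, and the assumed limit for $v_\gamma$ places it eventually inside the specified set, so the conclusion is immediate.

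\textbf{Main obstacle.} The delicate step is the passage from pointwise convergence $h_\gamma\to h$ to a \emph{ratio} statement for the inverses in the three boundary regimes. Case (a) requires care because we need uniform convergence on a compact interval even though the functions $h_\gamma$ are not assumed equi-continuous; monotonicity plus continuity of the limit is what saves us. The boundary cases (b) and (c) genuinely cannot be established from MRV alone, which is why the theorem imposes them as extra uniformity hypotheses; the proof must verify that the additional condition on $g(\gamma)\gamma b_2^{\leftarrow}(\VaR_\gamma(Y_2))$ places $v_\gamma$ in the region where those uniformity assumptions apply.
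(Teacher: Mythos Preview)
Your proposal is correct and follows essentially the same route as the paper: rewrite $\CoVaR_{\upsilon g(\gamma)\mid\gamma}(Y_1\mid Y_2)=\VaR_\gamma(Y_2)\,h_\gamma^{\leftarrow}(v_\gamma)$ with $v_\gamma=\upsilon g(\gamma)\gamma b_2^{\leftarrow}(\VaR_\gamma(Y_2))$, then show $h_\gamma^{\leftarrow}(v_\gamma)/h^{-1}(v_\gamma)\to 1$, handling case~(a) via uniform convergence on compacts (the paper packages this as a separate Lemma, proved by a Polya argument) and cases~(b),~(c) directly from the assumed uniformity.

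One small technical point: you argue that the conditional survival function is continuous and strictly decreasing so the infimum is attained with equality, and therefore write $h_\gamma^{-1}$. Nothing in the hypotheses guarantees this for finite $\gamma$; the paper avoids the issue by using the generalized inverse $h_\gamma^{\leftarrow}$ throughout, which makes the identity $\CoVaR=\VaR_\gamma(Y_2)h_\gamma^{\leftarrow}(v_\gamma)$ hold by definition, and then proves the Polya-type step directly for $h_\gamma^{\leftarrow}$ (normalising $w\mapsto h_\gamma^{\leftarrow}(w^{-1})/h_\gamma^{\leftarrow}(a_1)$ to a distribution function and invoking Polya's theorem). Your two-stage plan---upgrade $h_\gamma\to h$ to uniform, then transfer to inverses---works too, but only after you drop the unnecessary continuity claim on $h_\gamma$ and work with $h_\gamma^{\leftarrow}$.
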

\begin{comment}
\begin{remark}
    \begin{itemize}
        \item[(a)] (Strongly dependent case) Suppose $b_1=b_2$ and $\mu_1=\mu_2$. Then $a=\sup h(y)=1$ and $h_{\gamma}^{\leftarrow}(y)\to h^{-1}(y)$ uniformly on $(0,1)$.  Thus for $g(\gamma)\eqd 1$ and $g(\gamma)=\gamma$, the statement of 
        \Cref{thm:bivCoVaRmain} is valid.
        \item[(b)] (Asymptotic independent case) Suppose $g(\gamma)\eqd 1$ and $\alpha_2>\alpha_1$. Then $a=\infty$ and
        $g(\gamma)\gamma b_2^{\leftarrow}(b_1(\gamma^{-1}))\to \infty$ as $\gamma\downarrow 0$ such that we can apply  \Cref{thm:bivCoVaRmain}(a).
        \item[(c)] (Independent case) Suppose $Y_1,Y_2$ are independent and $Y_1$ is Pareto($\alpha$). Then $a=\infty$ and $h_\gamma^{\leftarrow}(x)=h^{-1}(x)$ such that for any function $g$ the assumption of \Cref{thm:bivCoVaRmain} are satisfied.
        \item[(d)] Mashall-Olkin: I think $h_\gamma(x)/h(x)=h_\gamma(1)/h(1)$ is independent of $x$ such that $h_\gamma(x)=h(x)h_\gamma(1)/h(1)$ is converging uniformly.
    \end{itemize}
\end{remark}
\end{comment}

\begin{remark}~ \label{Remark 3.2} A few remarks below may aid in understanding the assumptions and consequences of the above result.
\begin{enumerate}[(a)]
    \item If $F_2$ is continuous, then $F_2(Y_2)$ is uniform  on $(0,1)$. Hence the tail behavior of $Y_2$ has no influence on the asymptotic behavior of $\text{CoVaR}_{\upsilon g(\gamma) |\gamma}(Y_1|Y_2)$, as is expected.
    \item %\VF{If $Y_1$ and $Y_2$ are not tail-equivalent either $\mu_1(\R_+\times \left[1,\infty\right))$ or $\mu_1(\left[1,\infty\right)\times \R_+)$ are zero and hence, in this case, the asymptotic behavior \eqref{eq 4.2} is not well-defined.}
    If $\ov F_1\in\mathcal{RV}_{-\alpha}$, then the condition $(Y_1,F_1^{\leftarrow}\circ F_2(Y_2))^\top\in\MRV(\alpha_i,b_i,\mu_i',\E_2^{(i)}), i=1,2,$ can be formulated as a condition on the survival copula $\widehat C$ of $(Y_1,Y_2)$. Necessary and sufficient conditions on $\widehat C$ are given in \cite[Theorem 3.11 and Theorem 3.12]{das:fasen:2019}.
    \item Suppose $(Y_1,Y_2)^\top\in\MRV(\alpha_i,b_i,\mu_i,\E_2^{(i)})$ for  $i=1,2$ and $\P(Y_2>t)\sim K\P(Y_1>t)$ as $t\to\infty$ for some $K>0$.  Then
    $\bY':=(Y_1,F_1^{\leftarrow}\circ F_2(Y_2))^\top\in\MRV(\alpha_i,b_i,\mu_i',\E_2^{(i)})$ for $i=1,2$ and
    \begin{align} \label{def: h}
      h(y)  =\mu_2'(\left(y,\infty\right)\times(1,\infty))=\mu_2(\left(y,\infty\right)\times(K^{1/\alpha},\infty)).
     \end{align} 
      
    \item  The assumption $ 0<\upsilon<1$ is only sufficient and not necessary. Indeed, if $r=\infty$ and $g(\gamma)\to 0$ as $\gamma\downarrow 0$  (which is the standard case) then 
    for any $\upsilon\in(0,\infty)$ we have $\upsilon g(\gamma)\in(0,1)$ for small $\gamma$ and thus,  $\upsilon\in(0,\infty)$
    is allowed as well.
    \item   In general, we cannot guarantee that as $\gamma\downarrow 0$, we have  $h_\gamma^{\leftarrow}(v)/h^{-1}(v)$ converging uniformly to $1$ on bounded, yet non-compact intervals; nevertheless, such uniform convergence does hold for compact intervals; cf. \Cref{Lemma C.1}. The need for assuming uniform convergence on non-compact intervals is evident from the proof of \Cref{corollary 4.4}(a)(i), thus providing a justification for the additional assumptions in \Cref{thm:bivCoVaRmain}(b,c).
%On compact intervals  the uniform convergence of $h_{\gamma}^{\leftarrow}(v)/h^{-1}(v)$ holds as the following lemma shows. However, for non compact intervals this is in general not true (which we see nicely from the proof of \Cref{corollary 4.4}) although $h_\gamma^{\leftarrow}(v)/h^{-1}(v)$ converges uniformly to $1$  on half open intervals $\left[L,r\right)$. Therefore, the additional assumptions in \Cref{thm:bivCoVaRmain}(b,c) are necessary.
\end{enumerate}
\end{remark}

\begin{lemma} \label{Lemma C.1}
Let the assumptions of \Cref{thm:bivCoVaRmain} hold. Then for any closed interval $[a_1,a_2]\subset(0,r)$ we have as $\gamma\downarrow 0$,
\begin{eqnarray*}
    \sup_{v\in[a_1,a_2]}\left|\frac{h_\gamma^{\leftarrow }(v)}{h^{-1}(v)}-1\right|
    \to 0. 
\end{eqnarray*}
\end{lemma}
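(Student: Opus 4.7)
The strategy is a three-step argument: (i) establish pointwise convergence of the generalized inverses $h_\gamma^{\leftarrow} \to h^{-1}$ on $(0,r)$, (ii) upgrade to uniform convergence on compact subintervals via a P\'olya-type argument exploiting monotonicity, and (iii) convert difference-type convergence into the required ratio-type convergence by bounding $h^{-1}$ below by a positive constant on $[a_1,a_2]$.

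For step (i), I would fix $v\in(0,r)$ and use that $h$ is strictly decreasing and continuous on $(l,\infty)$, so $h^{-1}(v)\in(l,\infty)$ is the unique root of $h(y)=v$. Pick $\varepsilon>0$ small enough that $h^{-1}(v)-\varepsilon>l$. Strict monotonicity gives
\begin{eqnarray*}
    h(h^{-1}(v)-\varepsilon)\;>\;v\;>\;h(h^{-1}(v)+\varepsilon).
\end{eqnarray*}
Since $h_\gamma$ is decreasing in $y$ (being a constant multiple of a conditional tail probability $\P(Y_1>y\,\VaR_\gamma(Y_2)\mid Y_2>\VaR_\gamma(Y_2))$) and $h_\gamma\to h$ pointwise by the MRV assumption, for all sufficiently small $\gamma$ the same strict inequalities hold with $h_\gamma$ in place of $h$ at the two test points. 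Monotonicity of $h_\gamma$ then forces $h_\gamma^{\leftarrow}(v)\in[h^{-1}(v)-\varepsilon, h^{-1}(v)+\varepsilon]$, yielding $h_\gamma^{\leftarrow}(v)\to h^{-1}(v)$.

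For step (ii), each $h_\gamma^{\leftarrow}$ is a monotone (decreasing) function on $(0,r)$, and the pointwise limit $h^{-1}$ is continuous and strictly decreasing on $(0,r)$. The classical P\'olya-type theorem for monotone functions (the same tool used to upgrade pointwise convergence of distribution functions to uniform convergence away from discontinuities) then gives uniform convergence of $h_\gamma^{\leftarrow}$ to $h^{-1}$ on any compact subinterval of $(0,r)$, in particular on $[a_1,a_2]$.

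For step (iii), since $h^{-1}$ is continuous and strictly decreasing on $[a_1,a_2]\subset(0,r)$ with $a_2<r$, we have $\inf_{v\in[a_1,a_2]} h^{-1}(v)=h^{-1}(a_2)>l\ge 0$, which is strictly positive. Consequently,
\begin{eqnarray*}
    \sup_{v\in[a_1,a_2]}\left|\frac{h_\gamma^{\leftarrow}(v)}{h^{-1}(v)}-1\right|
    \;\le\;\frac{1}{h^{-1}(a_2)}\sup_{v\in[a_1,a_2]}\bigl|h_\gamma^{\leftarrow}(v)-h^{-1}(v)\bigr|
    \;\longrightarrow\;0,
\end{eqnarray*}
by step (ii). The main obstacle is step (ii): care is needed to verify the hypotheses of the P\'olya-type uniformization (strict monotonicity and continuity of the limit, together with monotonicity of the approximants)\textemdash once those are in place the remainder of the argument is mechanical.
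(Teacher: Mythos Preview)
Your proposal is correct and follows essentially the same three-step architecture as the paper: pointwise convergence of the inverses, P\'olya-type uniformization, and conversion from difference to ratio via the positive lower bound $h^{-1}(a_2)>l\ge 0$. The only notable difference is in the execution of step~(ii): where you invoke a P\'olya-type result for monotone functions directly, the paper explicitly reparametrizes $v=w^{-1}$ and normalizes by $h_\gamma^{\leftarrow}(a_1)$ to construct bona fide distribution functions $F_\gamma(w)=h_\gamma^{\leftarrow}(w^{-1})/h_\gamma^{\leftarrow}(a_1)$ on $[a_2^{-1},a_1^{-1}]$, then applies the classical P\'olya theorem for CDFs and recovers the difference bound via a triangle inequality. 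Your direct route is slightly cleaner; the paper's construction is slightly more self-contained in that it reduces to the textbook statement of P\'olya's theorem rather than a monotone-function variant.
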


\begin{example}\label{ex:ggammastrind}
Prior to discussing further implications of \Cref{thm:bivCoVaRmain}, it is instructive to note the behavior of CoVaR when $Y_1=Y_2$ a.s. and when $Y_1, Y_2$ are independent. 
For convenience, assume that $Y_1,Y_2$ are identically   Pareto($\alpha_1$) distributed and $g:(0,1)\to(0,1)$ is a measurable function. Using the notations defined in \Cref{thm:bivCoVaRmain} we have the following.
\begin{itemize}
    \item[(a)] If {$Y_1=Y_2$ a.s.} then $(\alpha_1,b_1,\mu_1)=(\alpha_2,b_2,\mu_2)$ and   $h_\gamma(y)=h(y)=\max(y,1)^{-\alpha_1}$. For any 
   $0<\upsilon<1$ we have
    \begin{eqnarray*}
        \text{CoVaR}_{\upsilon g(\gamma)|\gamma}(Y_1|Y_2)= %(\upsilon\gamma)^{-\frac{1}{\alpha_1}} b_1(\gamma^{-1}) \sim 
        (\upsilon g(\gamma))^{-\frac{1}{\alpha_1}} \VaR_{\gamma}(Y_1)= (\upsilon \gamma g(\gamma))^{-\frac{1}{\alpha_1}}, \quad 0<\gamma<1.  
    \end{eqnarray*}
    In particular, if $g(\gamma)=\gamma$, then
    \begin{align*}\
       \text{CoVaR}_{\upsilon \gamma |\gamma}(Y_1|Y_2) = \Var_{\upsilon \gamma^2}(Y_1),  \quad 0<\gamma<1,
    \end{align*} 
    and if $g(\gamma)=1$ then 
    \begin{align}\label{eq:ggammaind}
       \text{CoVaR}_{\upsilon |\gamma}(Y_1|Y_2) = \Var_{\upsilon \gamma}(Y_1) = \upsilon^{-1/\alpha_1}\Var_{\gamma}(Y_1), \quad 0<\gamma<1.
    \end{align} 
    %Now if $Y_1, Y_2$ follow standard Pareto$(\alpha)$ then $\text{CoVaR}_{\upsilon\gamma,\gamma}(Y_1|Y_2) \sim (\upsilon\gamma^2)^{-1/\alpha}$ as $\gamma\downarrow 0$.
    \item[(b)] If $Y_1,Y_2$ are independent then $\alpha_2=2\alpha_1$, and a canonical choice is $b_2^{\leftarrow}(t)=(b_1^{\leftarrow}(t))^2$ making $h_\gamma(y)=h(y)=y^{-\alpha_1}$. Hence, for $0<\upsilon<1$,
    \begin{eqnarray*}
         \text{CoVaR}_{\upsilon g(\gamma) |\gamma}(Y_1|Y_2) = \Var_{\upsilon g(\gamma)}(Y_1),
        \quad 0<\gamma<1. 
    \end{eqnarray*}
    In particular, if $g(\gamma)=\gamma$, then
    \begin{align}\label{eq:ggammadep}
       \text{CoVaR}_{\upsilon \gamma |\gamma}(Y_1|Y_2) = \Var_{\upsilon \gamma}(Y_1)= \upsilon^{-1/\alpha_1}\Var_{\gamma}(Y_1), \quad 0<\gamma<1.
    \end{align} 
\end{itemize}
\end{example}
%Traditionally,  $\text{CoVaR}_{\gamma_1 |\gamma_2} (Y_1|Y_2)$  computations are implemented with either one or both of  $\gamma_1, \gamma_2$ fixed (\cite{girardi:ergun:2013, reboredo:ugolini:2015}) or $\gamma_1=O(\gamma_2)$ for $\gamma_1\downarrow 0$ (\cite{kley:kluppelberg:reinert:2017}). 
Comparing \eqref{eq:ggammaind} and \eqref{eq:ggammadep} we observe that for different levels of strength of dependence between $Y_1$ and $Y_2$ there exist different choices of $g(\gamma)$ allowing {for the following asymptotic behavior}: 
\begin{align*}
\text{CoVaR}_{\upsilon g(\gamma) |\gamma}(Y_1|Y_2) = \upsilon^{-1/\alpha_1}\VaR_{\gamma}(Y_1), \quad \gamma\downarrow0.
\end{align*}
Such a characterization may be obtained for a variety of dependence behavior and hence, we define the following quantity.
%When $Y_1, Y_2$ are asymptotically dependent, such choices seem natural; and assuming $\gamma_1=\gamma_2=\gamma)$ for $\gamma_1\downarrow 0$, we may obtain\[\text{CoVaR}_{\upsilon\gamma |\gamma}(Y_1|Y_2) = O (\VaR_{\upsilon\gamma^2}(Y_2)),\]
%see \citep{kley:kluppelberg:reinert:2017} for pertinent results under a bipartite network setting. In presence of asymptotic independence between $Y_1, Y_2$ such asymptotics may not hold true, \Cref{thm:bivCoVaRmain} aims to capture the asymptotic behavior of $\text{CoVaR}_{\gamma_1 |\gamma_2} (Y_1|Y_2)$ in terms of $\Var_{\gamma_2}(Y_2)$ for small values of $\gamma_1, \gamma_2$. 

 \begin{definition}\label{def:eci}
     Let   $\bY=(Y_1, Y_2)^\top$  be a bivariate random vector. Suppose for some measurable function $g:(0,1)\to(0,1)$, where $g(t^{-1})\in \RV_{-\beta}, \beta \ge0$, we have for $0<\upsilon<1$,
     \begin{align}\label{eqdef:ggamma}
\text{CoVaR}_{\upsilon g(\gamma) |\gamma}(Y_1|Y_2) = O(\VaR_{\gamma}(Y_1)), \quad \gamma\downarrow0.
\end{align}
Then we call $\beta^{-1}$ the \emph{Extreme CoVaR Index} of $Y_1$ given $Y_2$, or, $\text{ECI}(Y_1|Y_2)$ in short. 
 \end{definition}

\begin{remark}~ \label{Remark 3.6(a)} $\text{ECI}(Y_1|Y_2)$ provides a value to assess the strength of risk contagion from $Y_2$ to $Y_1$ and may take any value between 0 and $\infty$, including both values.
  \begin{enumerate}[(a)]  
    \item In \Cref{ex:ggammastrind}, we observed that for $Y_1$ and $Y_2$ strongly dependent,  the level required for $\CoVaR_{\upsilon g(\gamma)|\gamma}(Y_1|Y_2) = O(\Var_{\gamma}(Y_1))$ is in fact $g(\gamma)=\gamma^0=1$ and hence,    
    the ECI is $1/0=\infty$. On the other hand, if $Y_1$ and $Y_2$ are independent the required rate for having $\CoVaR_{\upsilon g(\gamma)|\gamma}(Y_1|Y_2) = O(\Var_{\upsilon \gamma}(Y_1))$ is $g(\gamma)=\gamma^1$ and hence, the $\text{ECI}(Y_1|Y_2)=1$.
  \item The ECI of $Y_1$ given $Y_2$ provides a measure of risk contagion between $Y_1$ and $Y_2$, in particular for systemic risks. Given $Y_2$ has a value larger than its $\VaR$ at level $\gamma$, ECI allows us to compute the level  $g(\gamma)=O(\gamma^{1/\text{ECI}})$ for $Y_1$ which will make the $\CoVaR_{\upsilon g(\gamma)|\gamma}(Y_1|Y_2)$ to be of the same order as $\VaR_{\gamma}(Y_1)$. 
  Lower values of ECI reflect lower values of $\CoVaR_{\gamma|\gamma}(Y_1|Y_2)$.

\end{enumerate}
\end{remark}

 If   $\bY':=(Y_1,F_1^{\leftarrow}\circ F_2(Y_2))^\top\in\MRV(\alpha_i,b_i,\mu_i,\E_2^{(i)})$ for  $i=1,2$,   the choice of a function $g(\cdot)$ satisfying \eqref{cond:bivcova}, must also satisfy $g(t^{-1})\in \RV_{-\beta}$  with $\beta=\alpha_2/\alpha_1-1$ and hence, 
\eqref{eqdef:ggamma} is satisfied. Let us summarize this result.

\begin{proposition}\label{Lemma 4.6}
    Let   $\bY=(Y_1, Y_2)^\top$ be  bivariate random vector with margins $F_1$ and $F_2$, respectively. Suppose
$\bY':=(Y_1,F_1^{\leftarrow}\circ F_2(Y_2))^\top\in\MRV(\alpha_i,b_i,\mu_i,\E_2^{(i)})$ for  $i=1,2$.
Then  $$\ECI(Y_1|Y_2)=\frac{\alpha_1}{\alpha_2-\alpha_1}$$ where $\alpha_1/0:=\infty$.
\end{proposition} 

\begin{remark}
  { Note that for $(Y_1,Y_2)^\top\in\MRV(\alpha_i,b_i,\mu_i,\E_2^{(i)})$, $i=1,2$, $\alpha_2$ is often called the parameter of hidden regular variation in bivariate regularly varying models (\cite{resnick:2002, maulik:resnick:2005}), and is closely related to the coefficient of tail dependence $\eta$,   which is defined in \cite{ledford:tawn:1996} under the assumption $\alpha_1=1$. If $\alpha_1=1$ and $Y_1,Y_2$ are tail-equivalent then we have the relation $\eta=1/\alpha_2=\ECI(Y_1|Y_2)/(1+\ECI(Y_1|Y_2))$. An important difference is as well that we only require the MRV of $(Y_1,F_1^{\leftarrow}\circ F_2(Y_2))^\top$ on $\E_2^{(i)}$ for $i=1,2$ and not of $(Y_1,Y_2)^\top$.} %The ECI value obtained in \Cref{Lemma 4.6} uses values of both $\alpha_1$ and $\alpha_2$.} % and thus we have a one-one correspondence between $\eta$ and ECI under $\MRV$ assumptions on appropriate cones}. %The ECI value obtained in \Cref{Lemma 4.6} uses the value of both $\alpha_1$, the primary regularly varying parameter of the (heaviest) marginal and $\alpha_2$.
\end{remark}
\begin{remark}
 Suppose $(Y_1^\perp,Y_2^\perp)^\top$ is a bivariate random vector {with independent components but with the same marginal distribution as $(Y_1,Y_2)^\top$, which satisfies the assumptions of \Cref{Lemma 4.6}.} Then by the definition of the ECI and the function $g(\cdot)$ in \Cref{def:eci},  
 and \Cref{Remark 3.6(a)}(a), we have
 \begin{eqnarray*}
    \CoVaR_{g(\gamma)|\gamma}(Y_1|Y_2)=O(\VaR_\gamma(Y_1))
    =O(\CoVaR_{\gamma|\gamma}(Y_1^\perp|Y_2^\perp)), \quad \gamma \downarrow 0.
 \end{eqnarray*}
  Thus the value of $g(\gamma)$ helps in assessing the probability or confidence with which 
     $\CoVaR(Y_1|Y_2)$ will have the same asymptotic behavior as the CoVaR of the independent model, i.e., $\CoVaR(Y_1^\perp|Y_2^\perp)$. In other words, higher values of ECI (and hence, $g(\gamma)$) imply a higher  risk contagion of $Y_2$  on $Y_1$; {by definition this also} means that with probability $1-g(\gamma)$
    the value of $ \CoVaR_{g(\gamma)|\gamma}(Y_1|Y_2)$ is sufficient to cover the losses of $Y_1$ if the value of $Y_2$ is already above $\VaR_\gamma(Y_2)$. Clearly, larger values of $\alpha_2$ result in weaker tail dependence between $Y_1$ and $Y_2$, and a smaller ECI. 
    
    A similar phenomenon may be observed also for the popular systemic risk measures Marginal-Mean-Excess (MME)  and Marginal-Expected-Shortfall (MES) (\cite{acharya:petersen:philippon:richardson:2017,das:fasen:2018}) which are defined as     
    \begin{align*}
        \text{MME}_\gamma(Y_1|Y_2)&:=\E((Y_1-\VaR_\gamma(Y_2))_+|Y_2>\VaR_\gamma(Y_2)), 
        \intertext{and,}
       \text{MES}_\gamma(Y_1|Y_2)&:=\E(Y_1|Y_2>\VaR_\gamma(Y_2))
    \end{align*}
     respectively.
We know from \cite{das:fasen:2018,das:fasen:2019} that under some mild assumptions,
 \begin{eqnarray*}
    g(\gamma)^{-1}\text{MME}_\gamma(Y_1|Y_2)=O(\VaR_\gamma(Y_1))=O(\gamma^{-1}\text{MME}_\gamma(Y_1^\perp|Y_2^\perp)), \quad \gamma\downarrow 0.
\end{eqnarray*}
{We can check that} for example, the P-GC and the P-MOC model satisfy these assumptions.
 Moreover, under some stronger (moment) assumptions, we also have that 
\begin{eqnarray*}
    g(\gamma)^{-1}\text{MES}_\gamma(Y_1|Y_2)=O(\VaR_\gamma(Y_1))=O(\gamma^{-1}\text{MES}_\gamma(Y_1^\perp|Y_2^\perp)), \quad \gamma\downarrow 0.
\end{eqnarray*}
 These stronger assumptions are unfortunately not satisfied by the $\PGC$ model. To summarise, under some regularity conditions, for increasing values of ECI and hence $g(\cdot)$,
the MME (and MES) increase as well to attain the same confidence level $1-\gamma$ as in the independent model, implying higher risk contagion of $Y_2$ on $Y_1$. In particular, a conclusion of the asymptotic behavior of CoVaR, MES and MME is that although the underlying models exhibit asymptotic independence there is still a certain amount of dependence on the tails which will have an influence on risk contagion, and this strength of tail dependence is measured using the function $g(\cdot)$ and finally reflected in the ECI.
\end{remark}

%\begin{remark}~
% \VF{If $Y_1$ and $Y_2$ are not tail-equivalent the conclusion of \Cref{Lemma 4.6} does not hold anymore. In this case, the ECI depends on the tail behavior of $Y_1$, $Y_2$, and the properties of the function $h$ and a general assertion about the asymptotic behavior of the $\CoVaR$ is not possible anymore.}
%\end{remark}

\subsection{Measuring CoVaR under different model assumptions}

In this section, we show the direct consequences of \Cref{thm:bivCoVaRmain} for various underlying distributions discussed in this paper, in particular asymptotically dependent, Gaussian copula, and  Marshall-Olkin copula models. % which are relevant for systemic risk modeling as well. 
These models provide a flavor of expected results, although computations for complex networks are more complicated, as we will see in \Cref{sec:bipnet}.

First, we consider the asymptotically dependent case where $\alpha_1=\alpha_2$ and the result is a direct consequence of \Cref{thm:bivCoVaRmain} and \Cref{def: h}.
\begin{proposition}[Asymptotic dependence] \label{CoVar:asymptocially dependent}
Let $\bY=(Y_1,Y_2)^\top\in \R^2$ be a bivariate tail-equivalent random vector with $\bY\in\MRV(\alpha,b,\mu,\E_2^{(1)})\cap \MRV(\alpha,b,\mu,\E_2^{(2)})$ and  $h(y)={\mu}((y,\infty)\times (K^{1/\alpha},\infty))$ with
\begin{eqnarray*}
    K=\frac{{\mu}(\R_+\times(1,\infty))}{{\mu}((1,\infty)\times\R_+)} \quad\quad \text{ and } \quad\quad c={\mu}((1,\infty)\times\R_+)
\end{eqnarray*} %\marginpar{\VF{h might be zero}}
{Suppose $h:(l,\infty)\to(0,r)$ is strictly decreasing and continuous for some $r,l>0$.
Then for $0<\upsilon<\min(r/c,1)$ %, and  $h^{-1}(\upsilon c)\in(0,\infty)$ 
we have}
\begin{eqnarray*}
    \CoVaR_{\upsilon|\gamma}(Y_1|Y_2) \sim    {h^{-1}(\upsilon c)\VaR_{\gamma}(Y_1)}, \quad \gamma\downarrow 0.
\end{eqnarray*}
Moreover,  $\ECI(Y_1|Y_2)=\infty$.
    
\end{proposition}

For the bipartite network model of \Cref{sec:intro} where $\bZ$ has asymptotically dependent pairs we obtain directly the following result, some restricted versions of which have been shown in \cite{kley:kluppelberg:reinert:2016,kley:kluppelberg:reinert:2017}.

\begin{example}[Bipartite network with asymptotically dependent objects ]\label{example:bipartite:network:dependence}
Let  $\bZ\in \R_+^d$ be a random vector, $\bA \in \R_+^{2 \times d}$ be a  random matrix, and $\bX=\bA\bZ$.   Now also assume 
 $\bZ \in \MRV(\alpha,b, \mu,\E^{(1)}_d)\cap \MRV(\alpha,b, \mu,\E^{(2)}_d)$ and $\bZ$ has completely tail equivalent margins. Then due to \citet[Proposition A.1]{basrak:davis:mikosch:2002a}, which generalizes Breiman's Theorem to the multivariate setup, we have
 $\bX=\bA\bZ \in \MRV(\alpha,\overline{\mu},\E_2^{(1)})\cap \MRV(\alpha,\overline{\mu},\E_2^{(2)})$ with
  $
    \overline{\mu}(\cdot)=\E\big[ \mu(\bA^{-1}(\cdot))\big].
  $
  Define $
    h(y)= \overline{\mu}((y,\infty)\times (K^{1/\alpha},\infty))$
where 
\begin{eqnarray*}
    K=\frac{\overline{\mu}(\R_+\times(1,\infty))}{\overline{\mu}((1,\infty)\times\R_+)} \quad\quad \text{ and } \quad\quad c=\overline{\mu}((1,\infty)\times\R_+)
\end{eqnarray*}
{Suppose $h:(l,\infty)\to(0,r)$ is strictly decreasing and continuous for some $r,l>0$. %, and  $h^{-1}(\upsilon c)\in(0,\infty)$ 
%Suppose $h^{-1}(\upsilon c)\in(0,\infty)$, 
Then a conclusion of \Cref{CoVar:asymptocially dependent} is that for $0<\upsilon<\min(r/c,1)$,}
\begin{eqnarray*}
    \CoVaR_{\upsilon|\gamma}(X_1|X_2) \sim    {h^{-1}(\upsilon c)\VaR_{\gamma}(X_1)}, \quad \gamma\downarrow 0.
\end{eqnarray*}
Moreover,  $\ECI(X_1|X_2)=\infty$.

 In  \cite{kley:kluppelberg:reinert:2016,kley:kluppelberg:reinert:2017}, the authors investigate the asymptotically co-monotone case  $\mu([0,\bx]^c)=\max_{j=1,\ldots,d}(K_jx_j^{-\alpha})$ for $\bx=(x_1,\ldots,x_d)^\top\in\R_+^d$ with positive constants $K_1,\ldots,K_d>0$, our result turns out to be more general.% \DAS{addressing any asymptotically independent behavior.} \VF{ But here we are in the strong dependent case, it is not related to the asymptotically independent case. Therefore I would skip your comment.}
\end{example}

In the next two results, we investigate the asymptotically independent case  $\alpha_1<\alpha_2$, where the dependence is modeled by a Marshall-Olkin copula and a Gaussian copula, respectively.

\begin{proposition}[Marshall-Olkin copula]  \label{corollary 4.4}
Let $\bY=(Y_1,Y_2)^\top\in \R^2$ be a random vector and   $0<\upsilon<1$.
\begin{enumerate}[(a)]
 \item  Let $\bY \in \PMOC(\alpha, \theta, \lambda^{=})$ for $\lambda>0$. 
 \begin{itemize}
    \item[(i)] Suppose either (i) $\beta= 1/2$, or (ii)  $\beta >1/2$ and as $\gamma\downarrow 0$, $ v^{\alpha} \ov F_1^{\leftarrow}(v \gamma)/\ov F_1^{\leftarrow}(\gamma)$ converges uniformly to $1$  on $\left(0,1\right]$. %\marginpar{\VF{It is $F_1$ instead of $F_2$. Changed it.}} 
    Then 
 \begin{eqnarray*}%\label{eq:covarPMOCeq2}
    \CoVaR_{\upsilon\gamma^\beta|\gamma}(Y_1|Y_2) \sim  (\upsilon \gamma^\beta)^{-\frac1{\alpha}}\gamma^{\frac{1}{2\alpha}}\VaR_{\gamma}(Y_1), \quad \gamma\downarrow0.
\end{eqnarray*}
%Moreover, with $\beta= 1/2$, we have $\CoVaR_{\upsilon \gamma^{1/2}|\gamma} = O(\gamma^{-1/\alpha})$, hence  $\text{ECI}(Y_1|Y_2)=2$.
    \item[(ii)] Suppose $0\leq \beta < 1/2$ 
          and as $\gamma\downarrow 0$, $v^{\alpha}\ov F_1^{\leftarrow}(v \gamma)/\ov F_1^{\leftarrow}(\gamma)$ converges uniformly to $1$  on $\left[1,\infty\right)$. 
 Then 
 \begin{eqnarray*}%\label{eq:covarPMOCeq2}
    \CoVaR_{\upsilon\gamma^\beta|\gamma}(Y_1|Y_2) \sim (\upsilon \gamma^\beta)^{-\frac{2}{\alpha}}\gamma^{\frac{1}{\alpha}}\VaR_{\gamma}(Y_1), \quad \gamma\downarrow0.
\end{eqnarray*}
 \end{itemize}
Moreover, $\ECI(Y_1|Y_2)=2$.
 \item Let $\bY \in \PMOC(\alpha, \theta, \lambda^{\propto})$ for $\lambda>0$. 
 %\marginpar{\DAS{I changed the index (in red). Can you check?}\VF{I changed again a few things because we might have a slowly varying function and in (ii) ECI=3}}
 \begin{itemize}
    \item[(i)] Suppose either (i) $\beta= 1/3$, or  (ii) $\beta >1/3$ and as $\gamma\downarrow 0$, $v^{\alpha}\ov F_1^{\leftarrow}(v \gamma)/\ov F_1^{\leftarrow}(\gamma)$ converges uniformly to $1$  on $\left(0,1\right]$.
    Then 
 \begin{eqnarray*}%\label{eq:covarPMOCeq2}
    \CoVaR_{\upsilon\gamma^\beta|\gamma}(Y_1|Y_2) \sim  (\upsilon \gamma^\beta)^{-\frac1{\alpha}}\gamma^{\frac{1}{3\alpha}}\VaR_{\gamma}(Y_1), \quad \gamma\downarrow0.
\end{eqnarray*}
%Moreover, with $\beta= 1/3$, we have $\CoVaR_{\upsilon \gamma^{1/3}|\gamma} = O(\gamma^{-1/\alpha})$, hence  $\text{ECI}(Y_1|Y_2)=3$.
    \item[(ii)]  Suppose $0\leq \beta < 1/3$ 
          and as $\gamma\downarrow 0$, $v^{\alpha}\ov F_1^{\leftarrow}(v \gamma)/\ov F_1^{\leftarrow}(\gamma)$ converges uniformly to $1$  on $\left[1,\infty\right)$. 
 Then 
 \begin{eqnarray*}%\label{eq:covarPMOCeq2}
    \CoVaR_{\upsilon\gamma^\beta|\gamma}(Y_1|Y_2) \sim (\upsilon \gamma^\beta)^{-\frac{3}{\alpha}}\gamma^{\frac{1}{\alpha}}\VaR_{\gamma}(Y_1), \quad \gamma\downarrow0.
\end{eqnarray*}
 \end{itemize}
{Moreover, $\ECI(Y_1|Y_2)=3$.}
\end{enumerate}  
\end{proposition}

\begin{remark} $\mbox{}$
\begin{itemize}
\item[(a)]
For $\bY\in \PMOC$ as in \Cref{corollary 4.4}, the uniform convergence of $v^{\alpha}\ov F_j^{\leftarrow}(v \gamma)/\ov F_j^{\leftarrow}(\gamma)$ on some set   $I\subset (0,\infty)$ is 
necessary and sufficient for the uniform convergence of $h_\gamma^{\leftarrow}(v)/h^{-1}(v)$ on $I$.
However, we can check that even if $\ov F_j^{\leftarrow}(v \gamma)/\ov F_j^{\leftarrow}(\gamma)$ converges uniformly to $v^{-\alpha}$ on $\left[1,\infty\right)$, it may not necessarily converge uniformly on $\left(0,1\right]$ (counterexamples exist).
Nevertheless, exactly Pareto$(\alpha)$ distributed margins satisfy
$v^{\alpha}\ov F_j^{\leftarrow}(v \gamma)/\ov F_j^{\leftarrow}(\gamma)=1$ for any $v>0$ and $\gamma$ small, and hence, the asymptotic behavior of the $\CoVaR$ holds for Pareto margins and any $\beta\geq 0$ without any additional assumption.
\item[(b)]  For bivariate random vectors $\bY^{=}\in \PMOC(\alpha,\theta,\lambda^{=})$ and $\bY^{\propto}\in \PMOC(\alpha,\theta,\lambda^{\propto})$, \Cref{corollary 4.4} implies that $\text{ECI}(Y_1^{\propto}|Y_2^{\propto})>\text{ECI}(Y_1^{=}|Y_2^{=})$. 
Thus, the risk contagion in the $\PMOC(\alpha,\theta,\lambda^{=})$ model is higher than  in the $\PMOC(\alpha,\theta,\lambda^{\propto})$ model if the $\CoVaR$ is used as risk measure.
Even though both models exhibit asymptotic independence, the result shows that there is still some dependence on the tails influencing the CoVaR.

%\VF{For the Marshal-Olkin copula with  proportional parameters the ECI is higher than for the Marshal-Olkin copula with equal parameters showing that for the Marshal-Olkin copula with proportional parameters the dependence in the tails is stronger than for equal parameters although both exhibit asymptotic independence. In particular, it shows that the dependence structure has still an influence on systemic risk measures.}
\end{itemize}
\end{remark}

\begin{proposition}[Gaussian copula]  \label{CoVaR Gaussian}
Let $\bY=(Y_1,Y_2)^\top \in \PGC(\alpha, \theta, \Sigma_{\rho})$ with $\rho\in (-1,1)$ and  $g(\gamma)=\gamma^{\frac{{1-\rho}}{1+\rho}}(-\log \gamma)^{-\frac{\rho}{1+\rho}}$. Then for $0<\upsilon<1$ and as $\gamma\downarrow 0$,
\begin{eqnarray*}%\label{eq:covarPGC2}
    \CoVaR_{\upsilon g (\gamma)|\gamma}(Y_1|Y_2) \sim B^*\,\cdot \upsilon ^{-\frac{1+\rho}{\alpha}} \VaR_\gamma(Y_1)
\end{eqnarray*} 
%\marginpar{\VF{I deleted $g(\gamma)=O()$}.}

where  $B^*=B^*({\rho,\alpha}) = (4\pi)^{-\frac{\rho}{\alpha}}   (1+\rho)^{\frac{3(1+\rho)}{2\alpha}} (1-\rho)^{-\frac{1+\rho}{2\alpha}}.$ Finally, $\text{ECI}(Y_1|Y_2)= \frac{1+\rho}{1-\rho}$.
\end{proposition}

\begin{remark} $\mbox{}$
\begin{itemize}
\item[(a)]
Although the logarithm is a slowly varying function, $\log(vt)/\log ( t)$  converges uniformly only on compact intervals. Hence, for the Gaussian dependence case, it seems that $h_\gamma^{\leftarrow}(v)/h^{-1}(v)$ may not converge uniformly on intervals of the form $\left(0,R\right]$ or $\left[L,\infty\right)$ for any  $L,R>0$, thus we do not attempt verifying conditions (b) or (c) in \Cref{thm:bivCoVaRmain}.
\item[(b)] The measure $\text{ECI}(Y_1|Y_2)=\frac{2}{1-\rho}-1$ is increasing in $\rho$ suggesting, not quite surprisingly that, as the Gaussian correlation $\rho$ increases, the risk contagion measured by the CoVaR increases as well; in fact as $\rho$ increases from $-1$ to $1$, ECI increases from 0 to $\infty$.
%\VF{$\text{ECI}(Y_1|Y_2)=\frac{2}{1-\rho}-1$ is increasing in $\rho$ suggesting, not surprisingly, that for $\rho$ increasing, i.e. the correlation of the Gaussian copula increasing, the tail dependence is getting stronger and hence, higher risk reserves are necessary.}
\end{itemize}
\end{remark}

\section{Risk contagion in a bipartite network with asymptotically independent objects} \label{sec:bipnet}

Recall the bipartite network structure defined in \Cref{sec:intro} where the risk exposure of $q$ entities of a financial system given by $\bX\in \R_+^q$ is captured using the risk exposure of the underlying assets $\bZ\in\R_+^d$ and the bipartite network is defined via the matrix $\bA\in \R_+^{q\times d}$.
In this section, we derive asymptotic tail probabilities of the risk exposure $\bX=\bA\bZ$ 
where the objects in $\bZ$ are asymptotically independent,  the case of pairwise asymptotically dependent objects was already covered in \Cref{example:bipartite:network:dependence}.
We are particularly interested in tail probabilities of \emph{rectangular sets} 
 \begin{align}\label{set:Azs}
       \Gamma_{\bx_S}^{(q)} = \{\bv\in \R_+^q: v_s > x_s, \forall s\in S\}
    \end{align} 
where $x_s>0$ for all $s\in S\subseteq \mathbb{I}_q$ and $\bx_S=(x_s)_{s\in S}$, 
as these help us first in computing conditional probabilities and eventually conditional risk measures like CoVaR. In the bivariate setup, the rectangular sets are of the form $\left[\bzero,\bx\right]^c$ and $(\bx,\binfty)$, $\bx\in\R_+^2$. The proofs for the results in this section are given in \Cref{Appendix D}.

Before stating the results, we need some definitions and notations following \cite{das:fasen:kluppelberg:2022}. Recall that we denote the set of agents/entities by $\mathcal{A}=\mathbb{I}_q$ and the set of assets/objects by $\mathcal{O}=\mathbb{I}_d$.
\begin{definition}
  For $k\in\mathbb{I}_q$ and $i\in\mathbb{I}_d$ the functions $\tau_{(k,i)}:\R_+^{q\times d}\to\R_+$ are defined as
\beam\label{mtau}
\tau_{(k,i)} (\bA)  =\sup_{\bz\in\E_d^{(i)}} \frac{(\bA\bz)_{(k)}}{z_{(i)}}. % = \sup_{\bz\in \partial \aleph_d^{(i)}} \tau^{(k)} (\bA \bz).
\eeam  
\end{definition}

The functions $\tau_{(k,i)}$ are meant to be like norms for the matrices $\bA\in \R_+^{q\times d}$. % where we are concerned with computing limit measures of subsets in $\E_d^{(i)}\subset \R^d_+$ under the map $\bA$ and we seek the appropriate subspace $\E_q^{(k)}$ contained in $\R_+^q$ where we may obtain non-null limit measures under multivariate regular variation. 
Although the $\tau_{(k,i)}$'s are not necessarily norms (or even semi-norms) on the induced vector space (see \cite[Section~5.1]{horn:johnson:2013}), they do admit some useful properties; cf. \cite[Lemma 3.4]{das:fasen:kluppelberg:2022}.
%In particular, suppose $\bA \in \R_+^{q\times d}$ is a deterministic matrix with all rows non-trivial and let $\mathds{1}_{\bA} = ((\mathds{1}_{\{A_{ki}>0\}}))\in \R_+^{q\times d}$;
%then for fixed $i\in\{1,\dots,d\}$ and fixed $k\in\{1,\dots,q\}$, the following equivalences hold (cf. \citet{das:fasen:kluppelberg:2022}, Lemma 3.2 and Lemma 3.3):
%\begin{align*}
%    \emptyset\not=\bA^{-1}(\E_q^{(k)}) \subseteq \E_d^{(i)}
%    \quad \Longleftrightarrow \quad
%        0<\tau_{(k,i)} (\bA) <\infty
%    \quad \Longleftrightarrow \quad 0<\tau_{(k,i)} (\mathds{1}_{\bA}) <\infty.
%\end{align*}

\begin{definition}\label{def:decom}
Let $\bA\in \R_+^{q\times d}$ be a random matrix.
For $k\in\mathbb{I}_q$ and  $\omega\in\Omega$, define \linebreak$\bA_\omega:=\bA(\omega)$ and
 $$ i_k(\bA_\omega):= {\max}\{j\in\mathbb{I}_d: \tau_{(k,j)}(\bA_\omega)<\infty\},$$
which creates a partition  $\Omega^{(k)}(\bA)=(\Omega_i^{(k)}(\bA))_{i=1,\ldots,d}$ of $\Omega$ given by
\[\Omega_i^{(k)}: = \Omega_i^{(k)} (\bA):= \{\omega\in \Omega: i_k(\bA_{\omega})=i\}, \quad i\in\mathbb{I}_d.\]
We write $\P_i^{(k)} (\,\cdot\,):=\P(\,\cdot\,\cap\,\Omega_i^{(k)})$ and $\bE_i^{(k)}[\,\cdot\,]:=\bE[\,\cdot\,\bone_{\Omega_i^{(k)}}]$.
\end{definition}

Now we are ready to characterize the asymptotic probabilities of $\bX=\bA\bZ$ for various tail sets $C$ of $\R_+^q$; cf. \cite[Theorem 3.4 and Proposition 3.2]{das:fasen:kluppelberg:2022} and the details in \Cref{Appendix D}.

 \begin{theorem} \label{thm:main:dfk}
Let  $\bZ\in \R_+^d$ be a random vector, $\bA \in \R_+^{q \times d}$ be a  random matrix, and $\bX=\bA\bZ$. Also for fixed $k\in \mathbb{I}_q$, let $C\subset \E_q^{(k)}$
be a  Borel set bounded away from \linebreak $\{\bx\in \R^q_+: x_{(k)}=0\}$.  Now also assume the following:
\begin{enumerate}[(i)]
    \item $\bZ \in \MRV(\alpha_{i},b_{i}, \mu_{i},\E^{(i)}_d)$ for all $i\in\mathbb{I}_d$ with $\lim_{t\to\infty} b_i(t)/b_{i+1}(t)= \infty$  for \linebreak $i=1,\ldots,d-1$.
    \item $\bA$ has almost surely no trivial rows and is independent of $\bZ$.
    \item $\bE_i^{(k)}[\mu_i(\partial\bA^{-1}(C))]=0$ for all $i\in \mathbb{I}_d$. 
    \item For all $i\in \mathbb{I}_d$ we have $ \bE_i^{(k)}\big[(\tau_{(k,i)}(\bA))^{\alpha_{i}+\delta}\big] <\infty$ for some $\delta=\delta(i,k)>0$.
    \end{enumerate}
Then the following holds:
\begin{itemize}
\item[(a)]  Define
%\begin{align*}\label{eq:ik*}
$i_k^* := \arg\min\{i\in \mathbb{I}_d: \P(\Omega_i^{(k)})>0\}.$
%    \end{align*}
    Then we have
 \begin{align*}%\label{eq:constant}
    b_{i_k^*}^{\leftarrow}(t)\P(\bX \in tC) \to \bE_{i_k^*}^{(k)}\big[ \mu_{i_k^*}(\bA^{-1}(C))\big] \, =\overline{\mu}_{i_{k}^{*},k}(C)=:
 \overline{\mu}_{k}(C),\quad\tto.
  \end{align*}
    Moreover, if $\overline{\mu}_{k}$ is a non-null measure then $\bA\bZ \in \MRV(\alpha_{i_{k}^{*}},\overline{\mu}_{k},\E_q^{(k)})$.
\item[(b)] Define
\begin{align*}%\label{def:iota}
\bar{\iota}= \bar{\iota}_C:= \min\{d, \inf\{i\in\{i_k^*,\ldots,d\}: \bE_i^{(k)} [\mu_i(\bA^{-1}(C))]>0\}\}.
\end{align*}
Suppose for all $i=i^*_k,\ldots,\bar{\iota}-1$ and  $\omega\in \Omega_i^{(k)}$ that $\bA_{\omega}^{-1}(C)=\emptyset$.
 Then we have
\begin{align*}%\label{eq:limiota}
\P(\bX\in tC)= (b_{\bar{\iota}}^{\leftarrow}(t))^{-1}\bE_{\bar{\iota}}^{(k)}[\mu_{\bar{\iota}}(\bA^{-1}(C))] + o((b_{\bar{\iota}}^{\leftarrow}(t))^{-1}),\quad t\to\infty.
\end{align*}
\end{itemize}
\end{theorem}

\begin{remark}~
\begin{itemize}
    \item[(a)] For both  $\bZ \in \PGC(\alpha, \theta, \Sigma)$ with $\Sigma$ positive definite (cf. \Cref{prop:bi}) and  $\bZ\in \PMOC(\alpha, \theta, \lambda^{=})$ or $\bZ\in \PMOC(\alpha, \theta, \lambda^{\propto})$, respectively (cf. \Cref{prop:pmocmrv})
    assumption (i) of \Cref{thm:main:dfk} is satisfied. 
    \item[(b)] Assumption~(i) in \Cref{thm:main:dfk} excludes the asymptotically dependent case \linebreak $\bZ \in \MRV(\alpha,b, \mu,\E^{(i)}_d)$ for $i=1,2$, where $\mu(\E^{(2)}_d)>0$. But in this case, we can use the well-known Breiman's Theorem generalized to the multivariate setup in \cite[Proposition A.1]{basrak:davis:mikosch:2002a},  see \Cref{example:bipartite:network:dependence}.
%\marginpar{\DAS{Remark (b) was a bit confusing so I rephrased (check) } \VF{ fine}}
 %   \item[(b)] If $\Gamma_{\bx_S}^{(q)}$ is a rectangular set then the intersection of $\bA^{-1}(\Gamma_{\bx_S}^{(q)})$ with the support of  $\mu_i$ is a finite  union of rectangular sets, however it  not necessarily a rectangular set itself (see Example 3.2 in \citet{das:fasen:kluppelberg:2022}). If $d$ is large the number of sets might be quite large as well. But from \Cref{thm:main:dfk} we already know that only the sets with $i_k^*$ components of $\bZ$ are important.
  %  But still in high dimensions they might be quite expensive to calculate.
\end{itemize}

\end{remark}

\begin{remark} $\mbox{}$ If $\bA=\bA_0$ is a deterministic matrix, we have a few direct consequences.
\begin{itemize}
    \item[(a)] First, $i_k^*=i_k(\bA_0)$ and also $\Omega_j^{(k)}=\emptyset$ for all $j\not=i_k^*$.
    \item[(b)] If $C=\Gamma_{\bx_S}^{(q)}\subset \E_q^{(k)}$ is a rectangular set then the intersection of $\bA_0^{-1}(\Gamma_{\bx_S}^{(q)})$ with the support of  $\mu_{i_k^*}$ is a finite union of rectangular sets; however clearly, it is not necessarily a rectangular set itself (\cite[Example 3.2]{das:fasen:kluppelberg:2022}). When $d$ is large, the number of sets in this union might be quite large and hence, it may be computationally expensive not only to get an explicit expression for $\overline{\mu}_k(\Gamma_{\bx_S}^{(q)})$, but even to assess whether $\overline{\mu}_k(\Gamma_{\bx_S}^{(q)})>0$. Nevertheless, it turns out that there are reasonable sufficiency conditions under which we can guarantee $\overline{\mu}_k(\Gamma_{\bx_S}^{(q)})>0$ where $|S|=k$; see the next lemma which is a conclusion of \cite[Proposition~3.1]{das:fasen:kluppelberg:2022}.
\end{itemize}
\end{remark}

\begin{lemma}\label{prop:tailbipnonull}
    Suppose the assumptions of \Cref{thm:main:dfk} hold. For any rectangular set $\Gamma_{\bx_S}^{(q)}$ with $|S|=k$ we have $\overline{\mu}_k(\Gamma_{\bx_S}^{(q)})>0$ with $|S|=k$ if $\mu_{i_k^*}$ has mass on all $\binom{d}{i_k^*}$ coordinate hyperplanes comprising $\E_d^{(i^*_k)}$. 
%\begin{itemize}
  % \item[(a)] The columns of $\bA$ have the same distribution \DAS{(do we need i.i.d.(cf. the network examples))}.
  %  For a deterministic matrix this is satisfied if every column of $\bA$ have the same entries. \DAS{SKIP (a)}
\end{lemma}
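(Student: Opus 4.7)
The plan is to reduce positivity of $\overline{\mu}_k(\Gamma_{\bx_S}^{(q)})$ to a combinatorial covering property of $\bA$ and then exploit the hypothesis on $\mu_{i_k^*}$ coordinate hyperplane by coordinate hyperplane. By Theorem 5.1(a),
\[
\overline{\mu}_k(\Gamma_{\bx_S}^{(q)}) \;=\; \bE_{i_k^*}^{(k)}\!\bigl[\mu_{i_k^*}(\bA^{-1}(\Gamma_{\bx_S}^{(q)}))\bigr],
\]
so it suffices to exhibit a positive-probability subset of $\Omega_{i_k^*}^{(k)}$ on which the integrand is strictly positive. Writing $\E_d^{(i_k^*)}$ as the disjoint union of the $\binom{d}{i_k^*}$ coordinate hyperplanes $H_T := \{\bz \in \R_+^d : z_j>0\ \forall j\in T,\ z_\ell = 0\ \forall \ell\in\mathbb{I}_d\setminus T\}$ (one for each $T\subset\mathbb{I}_d$ with $|T|=i_k^*$) together with the higher-dimensional stratum $\E_d^{(i_k^*+1)}$, the hypothesis tells us each $H_T$ carries positive $\mu_{i_k^*}$-mass.

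Next, for any fixed $\omega$ and $T$ one has the description
\[
\bA_\omega^{-1}(\Gamma_{\bx_S}^{(q)}) \cap H_T \;=\; \Bigl\{\bz\in H_T \,:\, \sum_{j\in T}(A_\omega)_{sj}\,z_j > x_s \text{ for all } s\in S\Bigr\},
\]
which is relatively open in $H_T$. It is nonempty precisely when every one of the $|S|=k$ rows of the sub-matrix $((A_\omega)_{sj})_{s\in S,\,j\in T}$ contains at least one positive entry, since in that case each inequality can be made to hold by enlarging an appropriately chosen coordinate $z_j$ for $j\in T$. Whenever this covering condition is met, the intersection is a nonempty relatively open subset of $H_T$, and hence carries strictly positive $\mu_{i_k^*}$-mass by the hypothesis on the coordinate hyperplanes. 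This already reduces the lemma to producing a covering $T$ of size $i_k^*$ of the $k$ rows in $S$.

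The remaining ingredient is combinatorial: on a positive-probability subset of $\Omega_{i_k^*}^{(k)}$, some $T\subset\mathbb{I}_d$ with $|T|=i_k^*$ must cover $S$ in the above sense. This is exactly the content of \citet[Proposition 3.1]{das:fasen:kluppelberg:2022}, which translates the defining finiteness of $\tau_{(k,i_k^*)}(\bA_\omega)$ into a structural statement about the bipartite support graph of $\bA_\omega$, and combined with $|S|=k$ yields the existence of the required $T$. This combinatorial translation (rather than the measure-theoretic bookkeeping) is the main obstacle, and I would simply quote Proposition 3.1 of the cited paper to close it. Integrating over the positive-probability set on which the covering exists gives $\overline{\mu}_k(\Gamma_{\bx_S}^{(q)})>0$, as claimed.
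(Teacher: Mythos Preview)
Your proof follows the paper's approach: the paper simply states that the lemma is a conclusion of \cite[Proposition~3.1]{das:fasen:kluppelberg:2022}, and you likewise defer the crucial combinatorial covering claim to that proposition while usefully spelling out the hyperplane decomposition and the row-covering reformulation that make the reduction explicit. One small point to tighten: the inference ``nonempty relatively open subset of $H_T$, hence positive $\mu_{i_k^*}$-mass'' does not follow from $\mu_{i_k^*}(H_T)>0$ alone; instead, note that whenever the covering condition holds the intersection contains the tail rectangle $\{\bz\in H_T: z_j>M\ \forall j\in T\}$ for $M$ large, and then homogeneity of $\mu_{i_k^*}$ together with the hypothesis on $H_T$ gives positivity of that rectangle and hence of the intersection.
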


\begin{remark}
   With regard to  \Cref{prop:tailbipnonull}, the following examples will satisfy \linebreak $\overline{\mu}_k(\Gamma_{\bx_S}^{(q)})>0$ for any rectangular set  $\Gamma_{\bx_S}^{(q)}$ with $|S| =k\ge 2$ and random matrix $\bA$ with non-trivial rows.
   \begin{enumerate}[(i)]
      \item $\bZ$ is exchangeable (including independence),
            \item $\bZ\in \PGC(\alpha,\theta, \Sigma_\rho)$ where $\rho\in\left(-{1}/{(d-1)},1\right)$, %$\Sigma_{\rho}$,   is a positive-definite equicorrelation matrix.%, $\rho\in\left(-\frac{1}{(d-1)},1\right).$%; due to \Cref{prop:bi} are satisfied.
            % the assumptions of \Cref{thm:main:dfk}.
            \item $\bZ\in \PMOC(\alpha, \theta, \lambda^{=})$,
            \item $\bZ\in \PMOC(\alpha, \theta, \lambda^{\propto})$.
   \end{enumerate}
Note that the distributions of $\bZ$ in  (ii)-(iv)  are close to exchangeability (they become exchangeable if we assume that the marginals are identically distributed instead of completely tail equivalent).  
\end{remark}

For a $\PGC(\alpha,\theta,\Sigma)$ model, \Cref{prop:supportpgc}
gives sufficient criteria on the correlation matrix $\Sigma$ 
such that $\mu_{i}$ has mass on all $\binom{d}{i}$ coordinate hyperplanes comprising $\E_d^{(i)}$, { naturally indicating that there exist PG-C models which may not satisfy this criterion, see \Cref{Example 2.7}. Yet, if $\bZ\in \PGC(\alpha, \theta, \Sigma)$, we are often able to use a technique of dimension reduction, as exhibited in the proof of \Cref{prop:condprobGC} to obtain the correct tail probability rates.}

\subsection{{Risk contagion between two portfolios}}\label{subsec:bivprob}
%\subsection{Towards risk \VF{contagion} : joint tail risk of two portfolios}
In the context of risk contagion, an important task is to understand the probability of an extremely large loss for a single asset or a linear combination of assets given an extremely large loss for some other asset or a linear combination of assets. This allows us to concentrate on $\bA\in \R_+^{q\times d}$ where $q=2$. 
For a particular type of event, we would also concentrate on risk exposures of pairs of financial entities that may invest in disjoint sets of assets, yet because of the dependence of the underlying variables, the joint probability is not necessarily a product measure. First, we obtain a general result for such joint tail probabilities characterizing limit measures $\overline{\mu}_1$ and $\overline{\mu}_2$  as obtained in  \Cref{thm:main:dfk}.

\begin{proposition} \label{exchangable} 
Let  $\bZ\in \R_+^d$ be a random vector  with 
$\bZ \in \MRV(\alpha_{i},b_{i}, \mu_{i},\E^{(i)}_d)$ for all ${i=1,2}$ and
 ${b_1(t)/b_{2}(t)\to \infty}$ as $t\to\infty$. % for $i=1,\ldots,d-1$.
Let $\bA \in \R_+^{2 \times d}$ be a random matrix with almost surely no trivial rows independent of $\bZ$.  With the notations from \Cref{thm:main:dfk}, the following holds for ${\overline \mu}_1$ and  ${\overline \mu}_2$ :
 \begin{itemize}
    \item[(a)] Suppose $\E\|\bA\|^{\alpha_1+\epsilon}<\infty$
    for some $\epsilon>0$. Then  $i_1^*=1$ and for $\bx\in\R_+^2$,
    \begin{eqnarray*}
            \overline{\mu}_1([\bzero,\bx]^c)=\sum_{\ell=1 }^d \E\left[\mu_1\left(\left\{\bz\in\R_+^d:\,\max\left(\frac{a_{1\ell}}{x_1},\frac{a_{2\ell}}{x_2}\right)z_\ell>1\right\}\right)\right].
        \end{eqnarray*}
    \item[(b)]  Suppose  $\max_{\ell\in \mathbb{I}_d}\P(\min\{a_{1\ell},a_{2\ell}\}>0)>0$ and  $\E\|\bA\|^{\alpha_1+\epsilon}<\infty$ for some $\epsilon>0$. Then $i_2^*=1$ and
    for $\bx\in (\bzero,\binfty)$,
    \begin{eqnarray*}
            \overline{\mu}_2(\left(\bx,\binfty\right))
                =\sum_{\ell=1 }^d \E\left[\mu_1\left(\left\{\bz\in\R_+^d:\,\min\left(\frac{a_{1\ell}}{x_1},\frac{a_{2\ell}}{x_2}\right)z_{\ell}>1\right\}\right)\right].
        \end{eqnarray*}
    \item[(c)] Suppose  $\max_{\ell\in \mathbb{I}_d}\P(\min\{a_{1\ell},a_{2\ell}\}>0)=0$ and  $\E\|\bA\|^{\alpha_2+\epsilon}<\infty$ for some $\epsilon>0$. Then $i_2^*=2$  and for $\bx\in(\bzero,\binfty)$,
    \begin{eqnarray*}
        \overline{\mu}_2((\bx,\infty))=\sum_{\ell,j=1}^d 
        \E\left[\mu_2\left(\left\{\bz\in\R_+^d:
        \frac{a_{1\ell}}{x_1}z_{\ell}>1,
        \frac{a_{2j}}{x_2}z_{j}>1 \right\}\right)\right].
    \end{eqnarray*}
\end{itemize}
In particular, if $\bZ$ is exchangeable then each measure in (a), (b), (c) is non-zero.
\end{proposition}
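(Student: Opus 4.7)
The plan is to apply \Cref{thm:main:dfk} with $q=2$ and $C=[\bzero,\bx]^c$ for (a) and $C=(\bx,\binfty)$ for (b) and (c). In each case one identifies $i_k^*$ from \Cref{def:decom}, verifies the moment and boundary hypotheses, and then evaluates $\mu_{i_k^*}(\bA^{-1}(C))$ explicitly. The main structural ingredient used throughout is that, under the standing assumption $b_i(t)/b_{i+1}(t)\to\infty$, a scaling argument forces $\mu_i$ to vanish on $\E_d^{(i+1)}$; hence $\mu_1$ is concentrated on the union of coordinate axes and $\mu_2$ on the union of two-dimensional coordinate planes.

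To locate $i_k^*$: for (a), the row-sum bound $(\bA z)_{(1)}\le\|\bA\|_{\infty}z_{(1)}$ gives $\tau_{(1,1)}(\bA)<\infty$ a.s., so $i_1^*=1$. For (b), on the event $\{\min(a_{1\ell_0},a_{2\ell_0})>0\}$ (positive probability by hypothesis) the choice $z=e_{\ell_0}+\varepsilon e_j$, $j\neq\ell_0$, yields $(\bA z)_{(2)}\ge\min(a_{1\ell_0},a_{2\ell_0})>0$ while $z_{(2)}=\varepsilon\downarrow 0$, hence $\tau_{(2,2)}(\bA)=\infty$ on this event and $i_2^*=1$. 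For (c), no-overlap partitions the columns into disjoint sets $L_1=\{\ell:a_{1\ell}>0\}$ and $L_2=\{\ell:a_{2\ell}>0\}$; since the two row-sums attain their maxima at distinct coordinates, $\min(\max_{\ell\in L_1}z_\ell,\max_{\ell\in L_2}z_\ell)\le z_{(2)}$, so $(\bA z)_{(2)}\le\|\bA\|_{\infty}z_{(2)}$ and $\tau_{(2,2)}(\bA)<\infty$, giving $i_2^*=2$. In every case the uniform estimate $\tau_{(k,i_k^*)}(\bA)\le C\|\bA\|_{\infty}$ converts the stated moment hypothesis into the form required by \Cref{thm:main:dfk}(iv).

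Next I would evaluate $\mu_{i_k^*}(\bA^{-1}(C))$ by restricting to the support of $\mu_{i_k^*}$. For (a), on the $\ell$-th axis $R_\ell=\{z:z_m=0\ \forall m\neq\ell\}$ the constraint ``$(\bA z)_1>x_1$ or $(\bA z)_2>x_2$'' reduces to $\max(a_{1\ell}/x_1,a_{2\ell}/x_2)z_\ell>1$; summing over $\ell$ and taking expectations against the independent $\bA$ yields the claimed formula. Part (b) is identical with ``or'' replaced by ``and'', producing $\min$ in place of $\max$. For (c), the decomposition along the two-dimensional faces $R_{\ell j}=\{z:z_m=0\ \forall m\notin\{\ell,j\}\}$ combined with no-overlap collapses each row-sum to a single term (the index in $L_1$ for row $1$ and in $L_2$ for row $2$), yielding the double sum; pairs for which the two constraints cannot be simultaneously satisfied (in particular all terms with $\ell=j$) contribute zero. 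For the exchangeability claim, coordinate-permutation symmetry forces $\mu_1$ (respectively $\mu_2$) to place equal positive mass on every axis (respectively every $2$-face); together with ``no trivial rows'' (and, in (c), the no-overlap assumption, which yields distinct $\ell\in L_1$ and $j\in L_2$) this supplies at least one strictly positive summand in each of the three formulas.

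The main technical obstacle I anticipate is the continuity condition in \Cref{thm:main:dfk}(iii): one must verify that $\bA^{-1}(C)$ is a $\mu_i$-continuity set on each relevant coordinate face and that distinct face-restrictions are pairwise $\mu_i$-disjoint, so the decomposition above is a genuine sum rather than requiring inclusion-exclusion. Both reduce to absolute continuity of $\mu_i$ along each face --- an immediate consequence of the regular variation of the marginals with a strictly positive, finite tail index --- together with the fact that faces of different coordinate index sets intersect only on lower-dimensional subfaces which are already excluded from the support of $\mu_i$.
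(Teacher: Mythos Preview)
Your proposal is correct and follows essentially the same approach as the paper: identify $i_k^*$ via the $\tau_{(k,i)}$ functions, invoke \Cref{thm:main:dfk}, and evaluate $\mu_{i_k^*}(\bA^{-1}(C))$ by decomposing along the support of $\mu_{i_k^*}$ (coordinate axes for $\mu_1$, two-dimensional faces for $\mu_2$). One small step you should make explicit in part (b): the limit from \Cref{thm:main:dfk} is $\bE_1^{(2)}[\mu_1(\bA^{-1}((\bx,\binfty)))]$, i.e.\ it carries the indicator $\mathds{1}_{\Omega_1^{(2)}}$, and the paper drops this indicator by noting that on $\Omega_2^{(2)}$ every column satisfies $\min(a_{1\ell},a_{2\ell})=0$, so each summand $\mu_1(\{\min(a_{1\ell}/x_1,a_{2\ell}/x_2)z_\ell>1\})$ already vanishes there.
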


Note in this proposition  it is sufficient to assume 
$\bZ \in \MRV(\alpha_{i},b_{i}, \mu_{i},\E^{(i)}_d)$ for $i=1,2$ instead of $i=1,\ldots,d$ as in \Cref{thm:main:dfk} because $\bX$ is a bivariate random vector.

 Next, we provide sufficient conditions for the limit measures in \Cref{exchangable} to be positive so that our probability approximations for $\bX=\bA\bZ$ belonging to some extreme rectangular set are non-trivial. These approximations turn out to be sufficient for obtaining the asymptotic behavior of CoVaR under the given assumptions as well.

 %A consequence of \Cref{thm:main:dfk} is the following.

\begin{comment}
    
%\marginpar{\DAS{Is it fair to give an example with non-equivalent margin? All our practical examples are with equivalent margins. In 2 dim this is two disjoint sums. Try example from our other paper}}
\begin{remark}
Suppose $Z_1,Z_2,Z_3$ are independent random variables and
$Z_1$ is Pareto$(\alpha)$, $Z_2$ is Pareto$(\beta)$ and
$Z_3$ is Pareto$(\gamma)$ distributed with $0<\alpha<\beta<\gamma$.
Then $\bZ=(Z_1,Z_2,Z_3)\in \MRV(\alpha_{i},b_{i}, \mu_{i},\E^{(i)}_d)$, $i=1,2,3$ with $\alpha_1=\alpha$, $\alpha_2=\alpha+\beta$ and $ \alpha_2=\alpha+\beta+\gamma$ and $b_1(t)=t^{1/\alpha}$, $b_2(t)=t^{1/(\alpha+\beta)}$ and $b_3(t)=t^{1/(\alpha+\beta+\gamma)}$. The support of $\mu_1$
is $\{\bz\in\R_+^d:z_1>0,z_2=z_3=0\}$ such that for
\begin{eqnarray*}
    \bA=\left(\begin{array}{ccc}
            0 \, & \, 1 \, & \, 0 \,\\
            0 \, & \, 0 \, & \, 1
    \end{array}\right)
\end{eqnarray*}
we get
$$
\overline{\mu}_1([\bzero,\bone]^c)=\mu_1(\bA^{-1}[\bzero,\bone]^c)=\mu_1(\{\bz\in\R_+^3:z_2>1 \text{ or }z_3>1\})=0.$$
Similarly, $\mu_2$ has only mass on
$\{\bz\in\R_+^3:z_1>0z_2>0,z_3=0\}$
such that
$$
\overline{\mu}_2((\bone,\binfty))=\mu_2(\bA^{-1}(\bone,\binfty))=\mu_2(\{\bz\in\R_+^3:z_2>1,z_3>1\})=0.$$

We would like to point out here that although the zero column here might look artificial, usually we may be choosing two rows from a large set of $q$ rows where $\bA\in\R_+^{q\times d}$ for analyzing; and thus, a zero column in the bivariate model does not imply  that $\bA\in\R_+^{q\times d}$ has a zero column; see as well \Cref{sec:systemicrisk}. 
\end{remark}
Next we provide sufficient conditions for the limit measures in \Cref{exchangable} to be positive.
\end{comment}

\begin{proposition} \label{prop:condprobmain}
Let the assumptions of \Cref{exchangable} hold. Let $\alpha:=\alpha_1$. Suppose further that $\bZ$ is completely tail equivalent and $\E\|\bA\|^{\alpha+\epsilon}<\infty$
    for some $\epsilon>0$.
Then the following hold:
 \begin{itemize}
    \item[(a)] We have $\bX=(X_1,X_2)^\top\in\MRV(\alpha,b_1,\overline{\mu}_1,\E_2^{(1)})$ where
    \begin{eqnarray*}
            \overline{\mu}_1
            ([\bzero,\bx]^c)=\frac{\mu_1([\bzero,\bone]^c)}{d}\sum_{\ell=1 }^d \E\left[%\left(\frac{a_{1\ell}}{x_1}\right)^{\alpha}+\left(\frac{a_{2\ell}}{x_2}\right)^{\alpha}-
            \max\left\{\frac{a_{1\ell}}{x_1},\frac{a_{2\ell}}{x_2}\right\}^{\alpha}%\mathds{1}_{\{a_{1\ell}>0,a_{2\ell}>0\}}
            \right], \quad \bx\in\R_+^2.
        \end{eqnarray*}
    \item[(b)]  If  $\max_{\ell\in \mathbb{I}_d}\P(\min\{a_{1\ell},a_{2\ell}\}>0)>0$, then $\bX\in\MRV(\alpha,b_1,\overline{\mu}_2,\E_2^{(2)})$  where
    \begin{eqnarray*}
            \overline\mu_2(\left(\bx,\binfty\right))
                =\frac{\mu_1([\bzero,\bone]^c)}{d}\sum_{\ell=1}^d \E\left[\min\left\{\frac{a_{1\ell}}{x_1},\frac{a_{2\ell}}{x_2}\right\}^{\alpha}%\mathds{1}_{\{a_{1\ell}>0,a_{2\ell}>0\}}
                \right], \quad \bx\in(\bzero,\binfty).
        \end{eqnarray*}
        Moreover, as $t\to\infty$,
        \begin{eqnarray} \label{eq5}
        \P(X_1>tx_1|X_2>tx_2) \sim %x_2^{\alpha}\, \frac{\overline\mu_2(\left(\bx,\binfty\right))}{\overline{\mu}_1
          %  (\R\times(x_2,\infty))} =
             x_2^{\alpha}\,
           {\sum\limits_{\ell=1}^d\E\left[\min\left\{\frac{a_{1\ell}}{x_1},\frac{a_{2\ell}}{x_2}\right\}^{\alpha}\right]}\Bigg/{\sum\limits_{\ell=1}^d\E\big[ a_{2\ell}^{\alpha} \big]}.
          %  \P(X_2>tx_2|X_1>tx_1)\sim x_1^{\alpha}
         %   \frac{\sum\limits_{\ell=1}^d\E\left[\min\left\{\frac{a_{1\ell}}{x_1},\frac{a_{2\ell}}{x_2}\right\}^{\alpha}\right]}{\sum\limits_{\ell=1}^d\E\left[ a_{1\ell}^{\alpha} \right]}, \quad t\to\infty.
        \end{eqnarray}
{With 
\begin{eqnarray*}
    h(y)= \frac{\mu([\bzero,\bone]^c)}{d}\sum_{\ell=1}^d \E\left[\min\left\{\frac{a_{1\ell}}{y},\frac{a_{2\ell}}{K^{1/\alpha}}\right\}^{\alpha}%\mathds{1}_{\{a_{1\ell}>0,a_{2\ell}>0\}}
                \right]
\end{eqnarray*}
where 
\begin{eqnarray*}
    K=\sum\limits_{\ell=1}^d a_{2\ell}^\alpha\Bigg/{\sum\limits_{\ell=1}^d a_{1\ell}^\alpha} \quad\quad \text{ and } \quad\quad c=\frac{\mu([\bzero,\bone]^c)}{d}\sum\limits_{\ell=1}^d a_{1\ell}^\alpha
\end{eqnarray*}
we receive for $0<\upsilon<\min(h(0)/c,1)$, that}
\begin{eqnarray*}
    \CoVaR_{\upsilon|\gamma}(X_1|X_2) \sim    { h^{-1}(\upsilon c)\VaR_{\gamma}(X_1)}, \quad \gamma\downarrow 0.
\end{eqnarray*}
        Additionally, if the non-zero components of $\bA$ have a bounded support, bounded away from zero, then there exists  $\upsilon^*\in (0,1)$ such that for all $0<\upsilon<\upsilon^*$ we have as $\gamma\downarrow 0$,
        \begin{eqnarray} \label{ckr}
            \CoVaR_{\upsilon|\gamma}(X_1|X_2)\sim \upsilon^{-\frac{1}{\alpha}} %\left({\sum\limits_{\ell=1}^d\E\big[ a_{1\ell}^{\alpha} \big]}\Bigg/{\sum\limits_{\ell=1}^d\E\big[ a_{2\ell}^{\alpha} \big]}\right)^{\frac{1}{\alpha}}
            \VaR_{\gamma}(X_1).
        \end{eqnarray}
        Finally, $\text{ECI}(X_1|X_2)=\infty$. 
\end{itemize}
\end{proposition}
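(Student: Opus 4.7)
The plan is to specialize the general measure formulas in \Cref{exchangable} using complete tail equivalence of $\bZ$, and then to derive the conditional probability and CoVaR by algebraic inversion. For~(a), \Cref{exchangable}(a) is directly applicable since $\E\|\bA\|^{\alpha+\epsilon}<\infty$, giving $\bX\in\MRV(\alpha,b_1,\overline{\mu}_1,\E_2^{(1)})$ with $\overline{\mu}_1([\bzero,\bx]^c)=\sum_{\ell}\E[\mu_1(\{\bz:\max(a_{1\ell}/x_1,a_{2\ell}/x_2)z_\ell>1\})]$. I would rewrite the inner set as $\{\bz:z_\ell>1/\max(a_{1\ell}/x_1,a_{2\ell}/x_2)\}$ and invoke the $(-\alpha)$-homogeneity of $\mu_1$ to pull out a factor $\max(a_{1\ell}/x_1,a_{2\ell}/x_2)^{\alpha}$. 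Complete tail equivalence of $\bZ$ then forces $\mu_1(\{\bz:z_\ell>1\})$ to be independent of $\ell$, and identifying this common per-axis mass as $\mu_1([\bzero,\bone]^c)/d$ yields the claimed formula. Part~(b) is handled identically: the hypothesis $\max_\ell\P(\min\{a_{1\ell},a_{2\ell}\}>0)>0$ triggers \Cref{exchangable}(b) with $i_2^{*}=1$ (so the rate remains $b_1$), and $\max$ is everywhere replaced by $\min$.

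For the conditional probability~\eqref{eq5}, I would compute $\P(X_2>tx_2)$ as a marginal of~(a) by letting $x_1\uparrow\infty$: the maximum collapses to $a_{2\ell}/x_2$, yielding $\P(X_2>tx_2)\sim b_1^{\leftarrow}(t)^{-1}(\mu_1([\bzero,\bone]^c)/d)\,x_2^{-\alpha}\sum_{\ell}\E[a_{2\ell}^{\alpha}]$. Dividing the joint tail $\P(X_1>tx_1,X_2>tx_2)\sim b_1^{\leftarrow}(t)^{-1}\overline{\mu}_2((\bx,\binfty))$ from~(b) by this marginal and canceling the common normalization $\mu_1([\bzero,\bone]^c)/d$ produces~\eqref{eq5} directly.

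For~\eqref{ckr}, assume the nonzero entries of $\bA$ lie in $[m,M]$ with $0<m\leq M<\infty$, set $x_2=1$, and write $\CoVaR_{\upsilon|\gamma}(X_1|X_2)\sim x_1\,\VaR_\gamma(X_2)$ where $x_1$ solves the right-hand side of~\eqref{eq5} equal to $\upsilon$. For $x_1\geq M/m$ and any $\ell$ with $a_{1\ell}a_{2\ell}>0$, we have $a_{1\ell}/x_1\leq M/x_1\leq m\leq a_{2\ell}$, so $\min(a_{1\ell}/x_1,a_{2\ell})=a_{1\ell}/x_1$, while terms in which either entry vanishes contribute zero. The equation therefore reduces to $\upsilon=x_1^{-\alpha}\sum_{\ell}\E[a_{1\ell}^{\alpha}]/\sum_{\ell}\E[a_{2\ell}^{\alpha}]$, which inverts to~\eqref{ckr} for $\upsilon$ below a threshold $\upsilon^{*}$ corresponding to $x_1=M/m$. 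Rigor is supplied by \Cref{thm:bivCoVaRmain}(a) applied to $(X_1,X_2)$ with $g(\gamma)\equiv 1$: the uniform-convergence hypothesis on $h_\gamma^{\leftarrow}/h^{-1}$ holds because $h_\gamma$ and $h$ share the same polynomial form on the truncated range. Finally, since $g(t^{-1})\equiv 1\in\RV_0$ we have $\beta=0$, and hence $\text{ECI}(X_1|X_2)=\infty$. The most delicate step is rigorously justifying the per-axis normalization $\mu_1(\{\bz:z_\ell>1\})=\mu_1([\bzero,\bone]^c)/d$ under only complete tail equivalence; this implicitly exploits the asymptotic-independence structure inherent in the MRV hypothesis $b_i(t)/b_{i+1}(t)\to\infty$ on the full cone $\E_d^{(1)}$.
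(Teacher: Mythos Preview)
Your proposal is correct and follows essentially the same approach as the paper: specialize \Cref{exchangable}(a),(b) via the $(-\alpha)$-homogeneity of $\mu_1$ and the per-axis identity $\mu_1(\{z_\ell>1\})=\mu_1([\bzero,\bone]^c)/d$ (which the paper justifies, as you anticipated, by the support decomposition induced by $b_i/b_{i+1}\to\infty$), take the ratio for \eqref{eq5}, and use the bounded-support assumption to make $h(y)$ exactly $y^{-\alpha}\cdot\text{const}$ for $y\geq y_0$ before applying \Cref{thm:bivCoVaRmain}(a). One small slip: \Cref{thm:bivCoVaRmain}(a) does not carry a uniform-convergence hypothesis on $h_\gamma^{\leftarrow}/h^{-1}$---that belongs to parts (b) and (c)---so what you actually need to verify is condition~\eqref{cond:bivcova}, which holds here since $g\equiv1$ and $\gamma\,b_2^{\leftarrow}(\VaR_\gamma(X_2))$ converges to a positive finite constant (as $b_2=b_1$).
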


\begin{remark}~  
\begin{enumerate}[(a)]
\item The multivariate regular variation of $\bX$ on $\E_2^{(1)}$ is a      direct consequence of the multivariate version of Breiman's         Theorem~\cite{breiman:1965} in \citep[Proposition   
A.1]{basrak:davis:mikosch:2002a}; see as well 
    \cite[Proposition~3.1]{kley:kluppelberg:reinert:2016}. 
 \item If  $Z_1,\ldots,Z_d$ are completely tail equivalent with the marginals being exactly Pareto distributions and  $\bZ=(Z_1,\ldots, Z_d)^\top\in \MRV(\alpha_1=\alpha,b_1,\mu_1,\E_2^{(1)})$ with $\mu_1$ having only mass on the axes (which is satisfied if $\alpha_1<\alpha_2$ ),  statement \eqref{eq5} is a special case of \cite[Corollary 2.6]{kley:kluppelberg:reinert:2017}
  and \eqref{ckr} of \cite[Theorem 3.4]{kley:kluppelberg:reinert:2017}. The CoVaR results obtained in \cite{kley:kluppelberg:reinert:2017} address the specific case where  $\text{ECI}(X_1|X_2)=\infty$.
  \item[(c)]  If bank/agent 1 (with risk variable $X_1$) is connected to object $j$ (with risk variable $Z_j$), then by defining $X_2=Z_j$ we can directly apply \Cref{prop:condprobmain}
  to obtain the tail behavior of $(X_1,Z_j)$ and $(Z_j,X_1)$ respectively. Hence we can get the asymptotic behavior of $ \CoVaR_{\upsilon|\gamma}(X_1|Z_j)$ and  $\CoVaR_{\upsilon|\gamma}(Z_j|X_1)$; in other words, we are able to measure the risk contagion between agents and theirs connected objects. If additional new objects are introduced into the market (i.e., $d$ increases), this has no impact on the CoVaR behavior between the agent and the object even if it connects to this new object.
\end{enumerate}
\end{remark}
  
  Incidentally, if $\max_{\ell \in \mathbb{I}_d}\P(\min\{a_{1\ell},a_{2\ell}\}>0)=0$, reflecting that the two banks/agents are not connected to same asset/object at the same time,  the right-hand side of \eqref{eq5} turns out to be 0, $\text{ECI}(X_1|X_2)$ becomes finite leading to the asymptotically independent model and the $\CoVaR$ approximation of \Cref{prop:condprobmain} is not valid anymore. %; cf  \citep[Theorem~3.2(b)]{kley:kluppelberg:reinert:2016} for a case where $\max_{\ell\in \mathbb{I}_d}\P(\min\{a_{1\ell},a_{2\ell}\}>0)=0$ will render a similar approximation negligible.
In the next few results, we concentrate on the case where $\max_{\ell\in \mathbb{I}_d}\P(\min\{a_{1\ell},a_{2\ell}\}>0)=0$, which relates to a scenario where the aggregate returns of financial entities are represented by disjoint sets of assets.
For example, this covers as well the case where we want to understand the influence of an asset/object on a bank/agent with which it is not connected. Note that although the two banks/agents may be represented by almost surely disjoint assets, yet they are related by the dependence assumption on the underlying set of assets whose risk is given by $\bZ$. We provide explicit tail rates and eventually also CoVaR computations under such a setup, assuming different dependence structures for the underlying random vector $\bZ$. % We will also find that in these cases $\text{ECI}(X_1|X_2)<\infty$.
%Next, we will study  in detail this interesting case where  $\max_{j\in \mathbb{I}_d}\P(\min\{a_{1j},a_{2j}\}>0)=0$ and in particular, we present the explicit tail rate.

%\marginpar{\VF{How do we define tail-equivalent? Do we use again completely tail equivalent?}\DAS{Perhaps complete is better to avoid more confusion with notation}}

\begin{proposition}[i.i.d. case] \label{prop:condprobiid}
Let $\bZ\in\R_+^d$ be a random vector with i.i.d. components $Z_1,\ldots,Z_d$ with distribution function $F_\alpha$ where $\ov F_\alpha \in\RV_{-\alpha}$, $\alpha>0$, $b_1(t)=F_{\alpha}^{\leftarrow}\left(1-1/t\right)$  and $b_i^{\leftarrow}(t)={(b_1^{\leftarrow}(t))^{i}}$.
Further, let $\bA \in \R_+^{2 \times d}$ be a  random matrix  with almost surely no trivial rows, independent of $\bZ$ and
 suppose  $\max_{\ell\in \mathbb{I}_d}\P(\min\{a_{1\ell},a_{2\ell}\}>0)=0$ and  $\E\|\bA\|^{2\alpha+\epsilon}<\infty$ for some $\epsilon>0$. Then $(X_1,X_2)^\top\in\MRV(2\alpha,b_2,\overline{\mu}_2,\E_2^{(2)})$ where
    \begin{eqnarray*}
        \overline{\mu}_2((\bx,\infty))=%\frac{1}{\binom{d}{2}}
        (x_1x_2)^{-\alpha}\sum\limits_{\ell,j=1}^d  \E\big[a_{1\ell}^{\alpha}a_{2j}^{\alpha}\big], \quad \bx\in(\bzero,\binfty).
    \end{eqnarray*}
Moreover, as $t\to\infty$,
        \begin{eqnarray*}
            \P(X_1>tx_1|X_2>tx_2)\sim  (b_1^{\leftarrow}(t))^{-1}\, x_1^{-\alpha}\sum\limits_{\ell,j=1}^d  \E\big[a_{1\ell}^{\alpha}a_{2j}^{\alpha}\big]\Bigg/\sum\limits_{j=1}^d\E\left[
            a_{2 j}^{\alpha}\right].
        \end{eqnarray*}
        Additionally, for $0<\upsilon<1$ we have as $\gamma\downarrow 0$,
        \begin{eqnarray*}
            \CoVaR_{\upsilon\gamma|\gamma}(X_1|X_2)\sim \upsilon^{-\frac{1}{\alpha}} \frac{\left(\sum\limits_{\ell,j=1}^d  \E\big[a_{1\ell}^{\alpha}a_{2j}^{\alpha}\big]\right)^{\frac{1}{\alpha}}}{\left(\sum\limits_{\ell=1}^d\E\left[
            a_{1\ell}^{\alpha}\right]\sum\limits_{j=1}^d\E\left[
            a_{2j}^{\alpha}\right]\right)^{\frac{1}{\alpha}}}\VaR_{\gamma}(X_1).
        \end{eqnarray*}
       % \marginpar{\VF{I added the power $-2$}}
        Finally, $\text{ECI}(X_1|X_2)=1$. 
\end{proposition}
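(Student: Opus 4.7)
The plan is to reduce everything to Proposition 5.7(c) (which already knows how to handle the ``disjoint columns'' scenario) once the regular-variation structure of $\bZ$ is identified in the i.i.d.\ case. Since $Z_1,\ldots,Z_d$ are i.i.d.\ with $\overline F_\alpha\in\RV_{-\alpha}$, a standard tensor-product argument gives $\bZ\in\MRV(\alpha_i,b_i,\mu_i,\E_d^{(i)})$ for every $i\in\mathbb{I}_d$ with $\alpha_i=i\alpha$ and $b_i(t)=b_1(t^{1/i})$; the limit measure $\mu_i$ has mass only on coordinate $i$-faces and satisfies $\mu_i(\Gamma_{\bz_S}^{(d)})=\prod_{s\in S}z_s^{-\alpha}$ for $|S|=i$. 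In particular $b_i(t)/b_{i+1}(t)\to\infty$ since $b_1\in\RV_{1/\alpha}$, and $\mu_1([\bzero,\bone]^c)=d$ (one unit of mass on each axis).

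Next, the assumption $\max_\ell\P(\min\{a_{1\ell},a_{2\ell}\}>0)=0$ together with $\E\|\bA\|^{2\alpha+\epsilon}<\infty$ puts us in case (c) of Proposition 5.7, so $i_2^*=2$ and
\[
\overline{\mu}_2((\bx,\binfty))=\sum_{\ell,j=1}^d\E\bigl[\mu_2\bigl(\{\bz\in\R_+^d:\tfrac{a_{1\ell}}{x_1}z_\ell>1,\,\tfrac{a_{2j}}{x_2}z_j>1\}\bigr)\bigr].
\]
For $\ell=j$ the set reduces to a one-dimensional slab $\{z_\ell>\max(x_1/a_{1\ell},x_2/a_{2\ell})\}$ on which $\mu_2$ has no mass (and it is empty a.s.\ anyway by the disjoint-row condition). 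For $\ell\neq j$ the set is a two-dimensional rectangle in coordinates $\{\ell,j\}$, so by the i.i.d.\ description of $\mu_2$ its mass equals $a_{1\ell}^\alpha a_{2j}^\alpha(x_1x_2)^{-\alpha}$ when both entries are positive (and vanishes otherwise, consistent with the formula). Summing gives the stated $\overline{\mu}_2$; $\bX\in\MRV(2\alpha,b_2,\overline{\mu}_2,\E_2^{(2)})$ then follows from Theorem~5.5.

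For the conditional probability I would divide the joint tail $\P(X_1>tx_1,X_2>tx_2)\sim(b_2^\leftarrow(t))^{-1}\overline{\mu}_2((\bx,\binfty))$ by the marginal tail. The marginal is obtained either from the univariate Breiman-type theorem or, equivalently, from Proposition~5.8(a) specialized with $\mu_1([\bzero,\bone]^c)=d$, giving $\P(X_2>tx_2)\sim(b_1^\leftarrow(t))^{-1}x_2^{-\alpha}\sum_\ell\E[a_{2\ell}^\alpha]$. Using $b_2^\leftarrow(t)=(b_1^\leftarrow(t))^2$ produces the extra $(b_1^\leftarrow(t))^{-1}$ factor that appears in the stated asymptotic.

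Finally, for the CoVaR, I plan to write $y=c\,\VaR_\gamma(X_2)$, substitute into the conditional approximation with $x_1=c$, $x_2=1$, and equate to $\upsilon\gamma$. Using $\gamma\sim(b_1^\leftarrow(\VaR_\gamma(X_2)))^{-1}\sum_\ell\E[a_{2\ell}^\alpha]$, the regularly varying factors cancel and leave an algebraic equation for $c^\alpha$ whose solution is exactly the constant in the statement. The ECI equals $1$ because $g(\gamma)=\gamma\in\RV_{-1}$ (equivalently, via the lemma after Definition~5.3, $\alpha_1/(\alpha_2-\alpha_1)=\alpha/(2\alpha-\alpha)=1$). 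The only real bookkeeping obstacle is verifying the technical hypotheses of Theorem~5.5/Proposition~5.7 (moment and $\mu_i$-continuity conditions on the preimages $\bA^{-1}(C)$), but these are routine here since $\bA$ has finite $(2\alpha+\epsilon)$-moments and the sets involved are rectangles whose boundaries are $\mu_i$-null in the i.i.d.\ model.
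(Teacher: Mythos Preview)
Your proposal is correct and follows essentially the same route as the paper: identify the i.i.d.\ $\mu_2$ on two-dimensional rectangles, plug it into Proposition~\ref{exchangable}(c) to obtain $\overline\mu_2$, take the ratio with the marginal from Proposition~\ref{prop:condprobmain}(a) using $b_2^\leftarrow(t)=(b_1^\leftarrow(t))^2$, and then invert for the CoVaR. The only point where the paper is slightly more careful is that it invokes Theorem~\ref{thm:bivCoVaRmain}(a) for the CoVaR step rather than ``substitute and equate,'' since the latter tacitly assumes the pointwise conditional-tail asymptotic can be inverted at a $\gamma$-dependent threshold; in the i.i.d.\ case $g(\gamma)\gamma\, b_2^\leftarrow(\VaR_\gamma(X_2))\to(\sum_j\E[a_{2j}^\alpha])^2\in(0,\infty)$, so condition (a) of Theorem~\ref{thm:bivCoVaRmain} is exactly what makes your heuristic rigorous.
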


\begin{remark}
 The arrival of new independent assets/objects on the financial market has no influence on the ECI and {the asymptotic behavior of CoVaR. Regardless of whether the new asset/object connects to one of the banks/agents}, we still have $\CoVaR_{\upsilon\gamma,\gamma}(X_1|X_2)=O(\VaR_{\gamma}(X_1))$ as $\gamma \downarrow 0$. If at least one agent/bank happens to connect to this new object, only the finite positive limit $\lim_{\gamma\downarrow 0}\CoVaR_{\upsilon\gamma|\gamma}(X_1|X_2)/\VaR_{\gamma}(X_1)$ may change.
\end{remark}

\begin{proposition}[Marshall-Olkin dependence] \label{prop:condprobMO}
Let $\bZ \in \PMOC(\alpha,\theta,\Lambda)$
and \linebreak $\bA \in \R_+^{2 \times d}$ be a  random matrix  with almost surely no trivial rows, independent of $\bZ$ and
 $\max_{\ell\in \mathbb{I}_d}\P(\min\{a_{1\ell},a_{2\ell}\}>0)=0$.
\begin{itemize}
    \item[(a)] Suppose $\bZ\in\PMOC(\alpha,\theta,\lambda^{=})$   and  $\E\|\bA\|^{\frac{3\alpha}{2}+\epsilon}<\infty$ for some $\epsilon>0$. Then \linebreak $\bX=\bA\bZ\in\MRV(\alpha_2,b_2,\overline{\mu}_2,\E_2^{(2)})$ where $\alpha_2=
    \frac{3\alpha}{2}$, {$b_2(t)=\theta^{\frac{1}{\alpha}} t^{\frac{2}{3\alpha}}$} and
    \begin{eqnarray*}
        \overline{\mu}_2((\bx,\infty))=%\frac{1}{\binom{d}{2}}
       \sum\limits_{\ell,j=1}^d  \E\left[\min\left\{\frac{a_{1\ell}}{x_1}, \frac{a_{2j}}{x_2}\right\}^{\alpha}\max\left\{\frac{a_{1\ell}}{x_1}, \frac{a_{2j}}{x_2}\right\}^{\alpha/2}\right], \quad \bx\in(\bzero,\binfty).
    \end{eqnarray*}
    Moreover, as $t\to\infty$,
        \begin{eqnarray*}
        \P(X_1>tx_1|X_2>tx_2)\sim (\theta t^{-\alpha})^{\frac{1}{2}}x_2^{\alpha}\,\overline{\mu}_2((\bx,\infty))\,\Big(\sum\limits_{j=1}^d\E\left[
            a_{2j}^{\alpha}\right]\Big)^{-1}.
        %    \P(X_2>tx_2|X_1>tx_1)\sim (\theta t^{-\alpha})^{\frac{1}{2}}x_1^{\alpha}
        %{x_1}, \frac{a_{2l}}{x_2}\right)^{\alpha/2}\right]}{\sum\limits_{j=1}^d\E\left[%\left(\frac{a_{1j}}{x_1}\right)^{\alpha}+\left(\frac{a_{2j}}{x_2}\right)^{\alpha}-
        %     a_{1j} ^{\alpha}%\mathds{1}_{\{a_{1j}>0,a_{2j}>0\}}
        %    \right]}.
        \end{eqnarray*}
       Additionally, if the non-zero components of $\bA$ have bounded support,  bounded away from zero, then there exists a $0<\upsilon^*_1<\upsilon^*_2<\infty$ such that for all $0<\upsilon<\upsilon_1^*$ we have as $\gamma\downarrow 0$, 
        \begin{eqnarray*}
         \CoVaR_{\upsilon\gamma^{\frac{1}{2}}|\gamma}(X_1|X_2)\sim \upsilon^{-\frac{1}{\alpha}}\frac{\left(\sum\limits_{\ell,j=1}^d\E\big[ a_{1\ell}^{\alpha}  a_{2j}^{\alpha/2}\big]\right)^{\frac{1}{\alpha}}}{\left(\sum\limits_{\ell=1}^d\E\big[ a_{1\ell}^{\alpha} \big]\right)^{\frac{1}{\alpha}}\left(\sum\limits_{j=1}^d\E\big[ a_{2j}^{\alpha} \big]\right)^{\frac{1}{2\alpha}}}\VaR_{\gamma}(X_1),
            %\CoVaR_{\upsilon\gamma^{\frac{1}{2}}|\gamma}(X_1|X_2)\sim \upsilon^{-\frac{1}{\alpha}}\left({\sum\limits_{\ell=1}^d\E\big[ a_{1\ell}^{\alpha}  a_{2\ell}^{\alpha/2}\big]}{\left(\sum\limits_{j=1}^d\E\big[ a_{2j}^{\alpha} \big]\right)^{-\frac{3}{2}}}\right)^{\frac{1}{\alpha}}\VaR_{\gamma}(X_2),
        \end{eqnarray*} 
        and for all $\upsilon_2^*<\upsilon<\infty$ we have as $\gamma\downarrow 0$,
        \begin{eqnarray*}
        \CoVaR_{\upsilon\gamma^{\frac{1}{2}}|\gamma}(X_1|X_2)\sim\upsilon^{-\frac{2}{\alpha}}\frac{\left(\sum\limits_{\ell,j=1}^d \E\big[ a_{1\ell}^{\alpha/2}  a_{2j}^{\alpha}\big]\right)^{\frac{2}{\alpha}}}{\left(\sum\limits_{\ell=1}^d\E\big[ a_{1\ell}^{\alpha} \big]\right)^{\frac{1}{\alpha}}\left(\sum\limits_{j=1}^d\E\big[ a_{2j}^{\alpha} \big]\right)^{\frac{2}{\alpha}}}\VaR_{\gamma}(X_1).
           % \CoVaR_{\upsilon\gamma^{\frac{1}{2}}|\gamma}(X_1|X_2)\sim \upsilon^{-\frac{1}{2\alpha}}\left({\sum\limits_{\ell=1}^d\E\big[ a_{2\ell}^{\alpha}  a_{1\ell}^{\alpha/2}\big]}{\left(\sum\limits_{j=1}^d\E\big[ a_{2j}^{\alpha} \big]\right)^{-\frac{3}{2}}}\right)^{\frac{1}{2\alpha}}\VaR_{\gamma}(X_2).
        \end{eqnarray*}
        Finally, $\text{ECI}(X_1|X_2)=2$. 
    \item[(b)] Suppose $\bZ\in\PMOC(\alpha,\theta,\lambda^{\propto})$
    and    $\E\|\bA\|^{\alpha\frac{3d+2}{2(d+1)}+\epsilon}<\infty$ for some $\epsilon>0$. Then \linebreak $\bX=\bA\bZ\in\MRV(\alpha_2,b_2,\overline{\mu}_2,\E_2^{(2)})$ where $\alpha_2=
    \alpha\frac{3d+2}{2(d+1)}$,  {$b_2(t)=\theta^{\frac{1}{\alpha}} t^{\frac{2(d+1)}{(3d+2)\alpha}}$} and %\VF{$b_2^*=(d\theta t)^{1/\alpha_2^*}$ and}
    \begin{eqnarray*}
        \overline{\mu}_2((\bx,\infty))=%\frac{1}{\binom{d}{2}}
        \sum\limits_{\ell,j=1}^d  \E\left[\min\left\{\frac{a_{1\ell}}{x_1}, \frac{a_{2j}}{x_2}\right\}^{\alpha}\max\left\{\frac{a_{1\ell}}{x_1}, \frac{a_{2j}}{x_2}\right\}^{\alpha\frac{d}{2(d+1)}}\right], \quad \bx\in(\bzero,\binfty).
    \end{eqnarray*} 
    Moreover, as $t\to\infty$,
        \begin{eqnarray*}
           \P(X_1>tx_1|X_2>tx_2)\sim (\theta t^{-\alpha})^{\frac{d}{2(d+1)}}x_2^{\alpha}\,\overline{\mu}_2((\bx,\infty))\,\Big(\sum\limits_{j=1}^d\E\left[
            a_{2j}^{\alpha}\right]\Big)^{-1}.
           % \frac{\sum\limits_{j,l=1 }^d \E\left[\min\left(\frac{a_{1j}}{x_1}, \frac{a_{2l}}{x_2}\right)^{\alpha}\max\left(\frac{a_{1j}}{x_1}, \frac{a_{2l}}{x_2}\right)^{\alpha\frac{d}{2(d+1)}}\right]}{\sum\limits_{j=1}^d\E\left[
           % a_{1j}^{\alpha}\right]%\mathds{1}_{\{a_{1j}>0,a_{2j}>0\}}}.
        \end{eqnarray*}
         Additionally, if the non-zero components of $\bA$ have bounded support, bounded away from zero, then there exists a $0<\upsilon^*_1<\upsilon^*_2<\infty$ such that for all $0<\upsilon<\upsilon_1^*$ we have as $\gamma\downarrow 0$,
        \begin{eqnarray*}
            \CoVaR_{\upsilon\gamma^{\frac{d}{2(d+1)}}|\gamma}(X_1|X_2)\sim \upsilon^{-\frac{1}{\alpha}}  \frac{\left(\sum\limits_{\ell,j=1}^d \E\big[ a_{1\ell}^{\alpha}  a_{2j}^{\frac{d\alpha}{2(d+1)}}\big]\right)^{\frac{1}{\alpha}}}{\Big(\sum\limits_{\ell=1}^d\E\big[ a_{1\ell}^{\alpha} \big]\Big)^{\frac{1}{\alpha}}\Big(\sum\limits_{j=1}^d\E\big[ a_{2j}^{\alpha} \big]\Big)^{\frac{d}{2(d+1)\alpha}}}\VaR_{\gamma}(X_1),
        \end{eqnarray*} 
        and for all $\upsilon_2^*<\upsilon<\infty$ we have %with $\beta=\frac{d}{2(d+1)}$ 
        as $\gamma\downarrow 0$,
        \begin{eqnarray*}
            \CoVaR_{\upsilon\gamma^{\frac{d}{2(d+1)}}|\gamma}(X_1|X_2)\sim \upsilon^{-\frac{2(d+1)}{d\alpha}} \frac{\left({\sum\limits_{\ell,j=1}^d \E\big[ a_{1\ell}^{\frac{d \alpha}{2(d+1)}} a_{2j}^{\alpha}  }\big]\right)^{\frac{2(d+1)}{d\alpha}}}{\Big(\sum\limits_{\ell=1}^d\E\big[ a_{1\ell}^{\alpha} \big]\Big)^{\frac{1}{\alpha}}\Big(\sum\limits_{j=1}^d\E\big[ a_{2j}^{\alpha} \big]\Big)^{2\frac{d+1}{d\alpha}}}\VaR_{\gamma}(X_1).
        \end{eqnarray*} 
       Finally,  $\text{ECI}(X_1|X_2)=2+\frac2d$. 
   \end{itemize}
\end{proposition}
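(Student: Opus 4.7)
The plan is to combine three ingredients: the explicit MRV structure of $\PMOC$ models from \Cref{prop:pmocmrv}, the transfer of MRV through the linear map $\bZ\mapsto\bA\bZ$ given by \Cref{exchangable}(c), and the $\CoVaR$ asymptotic in \Cref{thm:bivCoVaRmain}. Parts (a) and (b) follow the same template and differ only in the value of $\alpha_2$ and in the exponent attached to the smaller order statistic in the formula for $\mu_2$.

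\textbf{Tail of $\bX$.} By \Cref{prop:pmocmrv}, $\bZ\in\MRV(\alpha_i,b_i,\mu_i,\E_d^{(i)})$ for every $i\in\mathbb{I}_d$ with strictly increasing $\alpha_i$, so $b_i(t)/b_{i+1}(t)\to\infty$. The hypothesis $\max_\ell\P(\min\{a_{1\ell},a_{2\ell}\}>0)=0$ rules out any column of $\bA$ contributing to both $X_1$ and $X_2$ simultaneously, so the pre-image $\bA^{-1}((\bx,\binfty))$ almost surely fails to meet the support of $\mu_1$ (the coordinate axes) and forces $i_2^{*}=2$. Applying \Cref{exchangable}(c) and substituting the explicit formula $\mu_2(\{z_\ell>u_1,\,z_j>u_2\})=\max(u_1,u_2)^{-\alpha}\min(u_1,u_2)^{-\alpha/2}$ in case (a) (with exponent $-\alpha d/(2(d+1))$ on the smaller argument in case (b)) yields the displayed formula for $\overline{\mu}_2$, where only pairs $\ell\neq j$ contribute. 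The moment hypothesis on $\|\bA\|$ supplies the integrability condition (iv) of \Cref{thm:main:dfk}. A univariate Breiman-type argument (equivalently \Cref{thm:main:dfk} applied to the single row $a_{2\cdot}$) gives $\P(X_2>t)\sim\theta\,t^{-\alpha}\sum_{j=1}^{d}\E[a_{2j}^{\alpha}]$, and dividing the joint tail by the marginal tail produces the conditional probability formula.

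\textbf{$\CoVaR$ via \Cref{thm:bivCoVaRmain}.} Set $h(y):=\overline{\mu}_2(\{(v_1,v_2):v_1>y,\,v_2>1\})$. Under the boundedness assumption on the nonzero entries of $\bA$, $h$ is strictly decreasing and continuous on $(0,\infty)$, with two-sided power-law envelopes
\[
h(y)\sim K_1\,y^{-\alpha}\text{ as }y\to\infty,\qquad h(y)\sim K_2\,y^{-(\alpha_2-\alpha)}\text{ as }y\to 0,
\]
where $\alpha_2-\alpha=\alpha/2$ in (a) and $\alpha_2-\alpha=\alpha d/(2(d+1))$ in (b), and $K_1=\sum_{\ell,j}\E[a_{1\ell}^{\alpha}a_{2j}^{\alpha_2-\alpha}]$, $K_2=\sum_{\ell,j}\E[a_{1\ell}^{\alpha_2-\alpha}a_{2j}^{\alpha}]$ are the expected-value sums appearing in the proposition. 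With $b_2^{\leftarrow}(t)=\theta^{-\alpha_2/\alpha}t^{\alpha_2}$ and $\VaR_{\gamma}(X_2)\sim(\theta\sum_{j}\E[a_{2j}^{\alpha}])^{1/\alpha}\gamma^{-1/\alpha}$, a direct computation shows that for the exponent $g(\gamma)=\gamma^{1-\alpha/\alpha_2}$ (equivalently $\gamma^{1/2}$ in (a), $\gamma^{d/(2(d+1))}$ in (b)), the product $g(\gamma)\,\gamma\, b_2^{\leftarrow}(\VaR_{\gamma}(X_2))$ tends to a finite positive constant $C^{*}=\bigl(\sum_{j}\E[a_{2j}^{\alpha}]\bigr)^{\alpha_2/\alpha}$. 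This places us in case (a) of \Cref{thm:bivCoVaRmain}, yielding $\CoVaR_{\upsilon g(\gamma)\mid\gamma}(X_1\mid X_2)\sim\VaR_{\gamma}(X_2)\cdot h^{-1}(\upsilon C^{*})$. For $\upsilon<\upsilon_1^{*}$ the argument $\upsilon C^{*}$ lies in the small-value regime where $h^{-1}(v)\sim(v/K_1)^{-1/\alpha}$, and for $\upsilon>\upsilon_2^{*}$ it lies in the large-value regime where $h^{-1}(v)\sim(v/K_2)^{-1/(\alpha_2-\alpha)}$. Substituting gives the two displayed $\CoVaR$ expressions, and $\text{ECI}(X_1\mid X_2)=\alpha/(\alpha_2-\alpha)$ evaluates to $2$ in (a) and $2+2/d$ in (b).

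\textbf{Main obstacle.} The principal technical difficulty is ensuring that the regimes $\upsilon<\upsilon_1^{*}$ and $\upsilon>\upsilon_2^{*}$ are non-empty, i.e.\ that $h$ actually enters its two-sided power-law envelopes at deterministic thresholds in $y$. This is precisely what boundedness of the nonzero entries of $\bA$ (both from above and away from zero) delivers, by providing uniform control of the $\min$ and $\max$ inside the expectation defining $h(y)$; in the intermediate range no clean power-law asymptotic for $h^{-1}$ is available. A secondary but delicate task is the combinatorial bookkeeping: identifying the mass of $\mu_2$ on the $\binom{d}{2}$ two-dimensional coordinate faces and matching it with the admissible $(\ell,j)$-pairs for $\bA$ under the disjoint-support constraint; this step transfers the order-statistic exponents of \Cref{prop:pmocmrv} into the correct $\min/\max$ exponents of the final formula.
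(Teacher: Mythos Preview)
Your proposal is correct and follows essentially the same route as the paper: the MRV of $\bX$ is obtained by plugging the explicit $\mu_2$ from \Cref{prop:pmocmrv} into \Cref{exchangable}(c), and the $\CoVaR$ asymptotics come from \Cref{thm:bivCoVaRmain}(a) after observing that $g(\gamma)\gamma\,b_2^{\leftarrow}(\VaR_\gamma(X_2))$ converges to a finite positive constant $C^*$. Your additional explanation of how the boundedness hypothesis on the nonzero entries of $\bA$ forces $h(y)$ to equal an \emph{exact} power law for $y$ sufficiently large (resp.\ small), and hence makes $h^{-1}(\upsilon C^*)$ computable in closed form on two disjoint $\upsilon$-regions, is exactly the mechanism the paper has in mind but does not spell out here (it is worked out in detail for the analogous step in the proof of \Cref{prop:condprobmain}).
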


\begin{remark} $\mbox{}$
\begin{itemize}
\item[(a)] If $\bZ\in\PMOC(\alpha,\theta,\lambda^{\propto})$, then as the number of assets/objects, i.e. $d$, increases, $\text{ECI}(X_1|X_2)=2+2/d$ decreases and hence, the rate of convergence of $\CoVaR_{\gamma,\gamma}(X_1|X_2)$ to infinity as $\gamma\downarrow 0$ becomes slower. In this case, the dependence of $X_1$ and $X_2$ in the tails gets weaker as the network becomes more diversified. It is important to note here that the $\PMOC(\alpha,\theta,\lambda^{\propto})$ model does not have a nested structure in contrast to the $\PMOC(\alpha,\theta,\lambda^{=})$ which is nested. In the $\PMOC(\alpha,\theta,\lambda^{=})$ model, the increase in the value of $d$ has no influence on the ECI value.
\item[(b)] Under the assumption of independent objects  $\text{ECI}(X_1|X_2)=1$  is less than $\text{ECI}(X_1|X_2)=2$ if the objects have a Marshall-Olkin dependence structure with equal parameters (\Cref{prop:condprobMO}(a)) which is again less than the $\text{ECI}(X_1|X_2)=2+2/d$ for the Marshall-Olkin copula with proportional parameters (\Cref{prop:condprobMO}(b)), reflecting that the dependence in the tails gets progressively stronger. 
\item[(c)] Note that the degree of tail dependence plays a role only if $\bA$ contains a column with all entries zero, otherwise the entries of $\bA$ do not influence ECI. In other words, although it is important to know if agents are connected to the same object, the weights or magnitude of the connections are not essential for the value of ECI. However, this is not always the case as the following example of a Gaussian copula shows where the location (index) of the zero column is important as well. Note that if $\bZ\in \PGC(\alpha, \theta, \Sigma)$ and $\Sigma$ is not an equicorrelation matrix, then $\bZ$ is not exchangeable in contrast to the other examples above (assuming identical margins).
\end{itemize}
\end{remark}

\begin{proposition}[Gaussian copula] \label{prop:condprobGC}
Let $\bZ\in \PGC(\alpha, \theta, \Sigma)$ with $\Sigma=(\rho_{\ell j})_{1\leq \ell,j\leq d}$ positive definite.
Suppose $\bA \in \R_+^{2 \times d}$ is a  random matrix  with almost surely no trivial rows, independent of $\bZ$ and
 $\max_{\ell\in \mathbb{I}_d}\P(\min\{a_{1\ell},a_{2\ell}\}>0)=0$.
 Also, define
 \begin{align*}
     \rhom & =\max\left\{\rho_{\ell j}: \ell, j \in \mathbb{I}_d, \ell\neq j  \right\},\\
      \rho^* & =\max\left\{\rho_{\ell j}: \ell, j \in \mathbb{I}_d, \ell\neq j \text{ and }\, \P(\min(a_{1\ell}, a_{2j})>0)>0 \right\}.
   %  \rho^* & =\max\{\rho_{jl}: j,l \in \mathbb{I}_d, j\neq l \; \text{ and }\, \P(\min(a_{1j}, a_{2l}>0)>0}\}
 \end{align*}
 %$$\rhom=\max_{ j, l \in \mathbb{I}_d, j\neq l  }\rho_{jl} \quad \text{ and }\quad  \rho^*=\max_{\genfrac{}{}{0pt}{}{ j, l \in \mathbb{I}_d, j\neq l, }{\P(\min(a_{1j}, a_{2l}>0)>0}} \rho_{jl}.$$
 \begin{itemize}
    \item[(a)] Suppose $\rho^*=\rhom$ and $\E\|\bA\|^{\frac{2\alpha}{1+\rhom}+\epsilon}<\infty$ for some $\epsilon>0$.
    Then we have \linebreak $\bX=\bA\bZ\in\MRV(\alpha_2,b_2,\overline{\mu}_2,\E_2^{(2)})$ with 
    %\marginpar{\VF{I think we have to move the $\theta$ in $b_2$ as I did, bu then delete it then in the constant $C({\rhom, \alpha, \theta})$ (which I haven't done yet) because otherwise the later results are not fitting. The CoVaR does not depend on $\theta$.} \DAS{Agreed. I will check tomorrow.} \VF{I changed it on the way I think it is fitting}}
    \begin{align*}
        & \alpha_2 =\frac{2\alpha}{1+\rhom}, \quad\quad
        b_2^{\leftarrow}(t)= C({\rhom, \alpha}) (\theta t^{-\alpha})^{-\frac{2}{1+\rhom}}(\log t)^{\frac{\rhom}{1+\rhom}} ,\\
    & \overline{\mu}_2((\bx,\infty)) = D({\rhom,\alpha, \bA})(x_1x_2)^{-\frac{\alpha}{1+\rhom}}, \quad \bx\in(\bzero,\binfty),
    %\sum_{\genfrac{}{}{0pt}{}{j,l=1 }{\rho_{jl}=\rhom}}^d\E(a_{1j}a_{2l})^{\frac{\alpha}{1+\rhom}} >0, \quad \bx\in(\bzero,\binfty).
    \end{align*}
    where for $\rho\in(-1,1), \alpha>0, \theta>0$ and $\bA\in \R_+^{2\times d}$, we define
   \begin{equation}
         \begin{aligned}\label{eq:Cgauss}    
    C(\rho, \alpha) & = { ({2\pi})^{-\frac{1}{1+\rho}}} (2\alpha)^{\frac{\rho}{1+\rho}}, \\
    D(\rho, \alpha, \bA) & = \frac{1}{2\pi} \frac{(1+\rho)^{3/2}}{(1-\rho)^{1/2}}\sum_{(\ell,j):\, \rho_{\ell j}=\rho }\E\big[a_{1\ell}^{\frac{\alpha}{1+\rho}}a_{2j}^{\frac{\alpha}{1+\rho}}\big].  
    \end{aligned}
   \end{equation}
 Moreover, as $t\to\infty$,
        \begin{align*}
        &  \P(X_1>x_1t|X_2>x_2t) \\
      &\quad \quad\quad\sim   (\theta t^{-\alpha})^{\frac{1-\rhom}{1+\rhom}}(\log t)^{-\frac{\rhom}{1+\rhom}}x_1^{-\frac{\alpha}{1+\rhom}}x_2^{\frac{\alpha\rhom}{1+\rhom}}\frac{C(\rhom,\alpha)^{-1}D(\rhom,\alpha, \bA)}{\sum\limits_{j=1}^d\E[a_{2j}^{\alpha}]}.
    \end{align*}
     Additionally, with $g(\gamma)=\gamma^{\frac{1-\rhom}{1+\rhom}}(-\alpha^{-1}\log \gamma)^{-\frac{\rhom}{1+\rhom}}$ and $0<\upsilon<\infty$ we have as $\gamma\downarrow 0$,
     \begin{align*}
       \CoVaR_{\upsilon g(\gamma)|\gamma}& (X_1|X_2)  
       \sim \upsilon^{-\frac{1+\rhom}{\alpha}} \frac{\left({C(\rhom,\alpha)^{-1}D(\rhom,\alpha, \bA)}\right)^{\frac{1+\rhom}{\alpha}}}{\Big(\sum\limits_{\ell=1}^d\E\big[ a_{1\ell}^{\alpha} \big]\sum\limits_{j=1}^d\E[a_{2j}^{\alpha}]\Big)^{\frac{1}{\alpha}}}\VaR_{\gamma}(X_1).
      %  & \CoVaR_{\upsilon g(\gamma)|\gamma}(X_1|X_2)  \sim \upsilon^{-\frac{1+\rhom}{\alpha}} \left(\frac{C(\rhom,\alpha,\theta)^{-1}D(\rhom,\alpha, \bA)}{\VF{\left(\sum\limits_{j=1}^d\E[a_{2j}^{\alpha}]\right)^{\frac{2}{1+\rhom}}}}\right)^{\frac{1+\rhom}{\alpha}}\VaR_{\gamma}(X_2).
     \end{align*}
     Finally,  $g(t^{-1})\in \RV_{-\frac{1-\rhom}{1+\rhom}}$ and hence, $\text{ECI}(X_1|X_2)=\frac{1+\rhom}{1-\rhom}$.
     \item[(b)]  Suppose $\rho^*<\rhom$ and $\E\|\bA\|^{\frac{2\alpha}{1+\rho^*}+\epsilon}<\infty$ for some $\epsilon>0$.  Then  $\overline{\mu}_2$
     as defined in \Cref{thm:main:dfk} is identically zero.
     But we have  $\bX=\bA\bZ\in \MRV(\alpha_2^*,b_2^*,\overline{\mu}_2^*,\E_2^{(2)})$ with
        \begin{align*}
        & \alpha_2^* =\frac{2\alpha}{1+\rho^*}, \quad\quad
        b_2^{*\leftarrow}(t)= C({\rho^*, \alpha}) (\theta t^{-\alpha})^{-\frac{2}{1+\rho^*}}(\log t)^{\frac{\rho^*}{1+\rho^*}} ,\\
    & \overline{\mu}_2^*((\bx,\infty)) = D({\rho^*,\alpha, \bA})(x_1x_2)^{-\frac{\alpha}{1+\rho^*}}, \quad \bx\in(\bzero,\binfty),
    \end{align*}
  and, as $t\to\infty$,
        \begin{align*}
        & \P(X_1>x_1t|X_2>x_2t) \\ 
        &\quad \quad\quad\sim   (\theta t^{-\alpha})^{\frac{1-\rho^*}{1+\rho^*}}(\log t)^{-\frac{\rho^*}{1+\rho^*}}x_1^{-\frac{\alpha}{1+\rho^*}}x_2^{\frac{\alpha\rho^*}{1+\rho^*}}\frac{C(\rho^*,\alpha)^{-1}D(\rho^*,\alpha, \bA)}{\sum\limits_{j=1}^d\E[a_{2j}^{\alpha}]},
    \end{align*} 
    where $C(\cdot)$ and $D(\cdot)$ are as defined in \eqref{eq:Cgauss}.\\
     Additionally, with $g(\gamma)=\gamma^{\frac{1-\rho^*}{1+\rho^*}}(-\alpha^{-1}\log \gamma)^{-\frac{\rho^*}{1+\rho^*}}$ and $0<\upsilon<\infty$ we have as $\gamma\downarrow 0$,
     \begin{align*}
        \CoVaR_{\upsilon g(\gamma)|\gamma} & (X_1|X_2)  \sim \upsilon^{-\frac{1+\rho^*}{\alpha}} \frac{\left({C(\rho^*,\alpha)^{-1}D(\rho^*,\alpha, \bA)}\right)^{\frac{1+\rho^*}{\alpha}}} {\Big(\sum\limits_{\ell=1}^d\E\big[ a_{1\ell}^{\alpha} \big]\sum\limits_{j=1}^d\E[a_{2j}^{\alpha}]\Big)^{\frac{1}{\alpha}}} \VaR_{\gamma}(X_1).
     \end{align*}
    Finally,  $g(t^{-1})\in \RV_{-\frac{1-\rho^*}{1+\rho^*}}$ and hence, $\text{ECI}(X_1|X_2)=\frac{1+\rho^*}{1-\rho^*}$.
\end{itemize}
\end{proposition}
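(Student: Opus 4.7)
The plan is to apply \Cref{thm:main:dfk} with $k=2$, use \Cref{exchangable}(c) (the disjoint investment hypothesis $\max_{\ell}\P(\min\{a_{1\ell},a_{2\ell}\}>0)=0$ forces $i_2^*=2$ and selects the relevant formula), evaluate the Gaussian $\mu_2$ pair-by-pair via \Cref{prop:premain}(b), extract $\VaR_\gamma(X_2)$ by a Breiman-type argument on $\E_d^{(1)}$, and finally invoke \Cref{thm:bivCoVaRmain}(a). For part (a), where $\rho^*=\rhom$, the minimizer of $\bone_S^\top\Sigma_S^{-1}\bone_S$ over $|S|=2$ is $\gamma_2=2/(1+\rhom)$ with $|I_2|=2$, so $\alpha_2$ and $b_2^{\leftarrow}$ are as stated. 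By \Cref{prop:supportpgc}, $\mu_2(\Gamma_{\bz_{\{\ell,j\}}})>0$ only when $\rho_{\ell j}=\rhom$; denoting $S=\{\ell,j\}$ for such a pair, one has $I_S=S$, $h_\ell^S=h_j^S=1/(1+\rhom)$, and $\Upsilon_S=(1+\rhom)^{3/2}/[2\pi(1-\rhom)^{1/2}]$. Substituting into \Cref{exchangable}(c), using homogeneity of $\mu_2$ of order $-2\alpha/(1+\rhom)$, and summing over cross-bipartite pairs yields $\overline{\mu}_2((\bx,\infty))=D(\rhom,\alpha,\bA)(x_1x_2)^{-\alpha/(1+\rhom)}$, which is nonzero because $\rho^*=\rhom$ guarantees at least one qualifying pair.

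For part (b), $\rho^*<\rhom$ implies that every pair with $\rho_{\ell j}=\rhom$ satisfies $\P(\min(a_{1\ell},a_{2j})>0)=0$, so the expectations in \Cref{exchangable}(c) vanish and the naive limit measure at rate $b_2^{\leftarrow}$ is identically zero. To recover the true slower rate I invoke the reduction strategy illustrated in \Cref{rem:nullgauss}: set $M_k=\{\ell:\P(a_{k\ell}>0)>0\}$ and $M=M_1\cup M_2$; almost surely $\bA\bZ=\bA_{\cdot,M}\bZ_M$, and $\bZ_M\in\PGC(\alpha,\theta,\Sigma_M)$. The preimage $\bA^{-1}((\bx,\infty))$ only meets coordinate hyperplanes indexed by cross-bipartite pairs $\{\ell,j\}$ with $\ell\in M_1,\,j\in M_2$; intra-bank pairs in $\Sigma_M$ make no contribution there, so the effective minimum of $\bone_S^\top\Sigma_S^{-1}\bone_S$ is attained at $2/(1+\rho^*)$. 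Replaying the part-(a) analysis with $\rho^*$ in place of $\rhom$, and with the strengthened moment bound $\E\|\bA\|^{2\alpha/(1+\rho^*)+\epsilon}<\infty$ to secure uniform integrability, delivers $\alpha_2^*$, $b_2^{*\leftarrow}$ and $\overline{\mu}_2^*$ exactly as stated.

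Conditioning on $\bA$ and applying \Cref{prop:premain}(a) (whose $\mu_1$ is concentrated on the coordinate axes with mass $z_\ell^{-\alpha}$) yields $\P(\sum_j a_{2j}Z_j>t\mid\bA)\sim\theta t^{-\alpha}\sum_j a_{2j}^\alpha$; the bound $\E\|\bA\|^{\alpha+\epsilon}<\infty$ passes the limit through the expectation, so $\VaR_\gamma(X_2)\sim\bigl(\theta\sum_j\E[a_{2j}^\alpha]\bigr)^{1/\alpha}\gamma^{-1/\alpha}$. Dividing the joint tail by the marginal gives the claimed conditional-probability asymptotic. For $\CoVaR$ I apply \Cref{thm:bivCoVaRmain}(a) with $\rho\in\{\rhom,\rho^*\}$, $h(y)=D(\rho,\alpha,\bA)y^{-\alpha/(1+\rho)}$ (hence $r=\infty$), and $g(\gamma)=\gamma^{(1-\rho)/(1+\rho)}(-\alpha^{-1}\log\gamma)^{-\rho/(1+\rho)}$; a direct algebraic check shows the polynomial and logarithmic factors in $g(\gamma)\gamma b_2^{\leftarrow}(\VaR_\gamma(X_2))$ cancel, producing a positive finite limit equal to $C(\rho,\alpha)\bigl(\sum_j\E[a_{2j}^\alpha]\bigr)^{2/(1+\rho)}$ and thereby verifying hypothesis (a) of \Cref{thm:bivCoVaRmain}. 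Inverting $h$ delivers the $\CoVaR$ formula, and $g(t^{-1})\in\RV_{-(1-\rho)/(1+\rho)}$ gives $\text{ECI}(X_1|X_2)=(1+\rho)/(1-\rho)$.

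The central technical difficulty is in part (b), where the natural machinery produces a vanishing limit measure. Locating the correct slower scale requires verifying that intra-bank pairs in $\Sigma_M$ realizing the global maximum $\rhom$ make no contribution to $\bA^{-1}((\bx,\infty))$, so that only cross-bipartite pairs---whose supremum correlation is $\rho^*$---drive the rate. The strengthened moment bound $\E\|\bA\|^{2\alpha/(1+\rho^*)+\epsilon}<\infty$ is precisely what is needed to swap limit and expectation at this slower scale.
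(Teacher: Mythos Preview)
Your part (a) is essentially the paper's proof: identify $\mathcal{S}_2=\{\{\ell,j\}:\rho_{\ell j}=\rhom\}$ via \Cref{prop:premain}(b), evaluate $\Upsilon_S=(1+\rhom)^{3/2}/[2\pi(1-\rhom)^{1/2}]$ for those pairs, substitute into \Cref{exchangable}(c) to obtain $\overline\mu_2$, and finish the conditional probability and $\CoVaR$ with \Cref{thm:bivCoVaRmain}(a).

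Part (b) has a genuine gap. Your reduction to $M=M_1\cup M_2$ with $M_k=\{\ell:\P(a_{k\ell}>0)>0\}$ is valid in the sense that $\bA\bZ=\bA_{\cdot,M}\bZ_M$ a.s., but the reduced vector $\bZ_M\in\PGC(\alpha,\theta,\Sigma_M)$ can still have $\max_{\ell\ne j\in M}\rho_{\ell j}>\rho^*$: any intra-bank pair $\ell,j\in M_1$ (or $\ell,j\in M_2$) with $\rho_{\ell j}>\rho^*$ produces this. Part (a) applied to that reduced model then again returns $\overline\mu_2=0$ at its own $\rhom(\Sigma_M)$ scale, and you are back where you started. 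Your observation that intra-bank pairs do not meet $\bA^{-1}((\bx,\infty))$ is correct, but it does not by itself deliver an MRV statement at the slower $\rho^*$ scale from \Cref{exchangable}(c): that formula hands you the $\mu_2$ of the reduced model, which is supported on \emph{its} maximal-correlation pairs, not on the cross pairs you need. ``Replaying part (a) with $\rho^*$ in place of $\rhom$'' is therefore not justified by the cited tools. The paper takes a different reduction: it sets $M=\{\ell\in\mathbb{I}_d:\rho_{\ell j}\le\rho^*\ \forall j\}$, engineered so that $\max_{\ell\ne j\in M}\rho_{\ell j}=\rho^*$, and invokes \Cref{Remark 4.12} to argue that every column outside this $M$ is a.s.\ zero. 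With that choice the hypothesis of part (a) holds verbatim for the reduced model (now with its $\rhom$ equal to $\rho^*$), and the result follows immediately. The missing idea in your argument is this specific choice of $M$.
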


\begin{remark} \label{Remark 4.12} $\mbox{}$
\begin{itemize}
\item[(a)] Suppose $\rho^*<\rhom$. Then there exists $\ell^*,j^*\in \mathbb{I}$ with $\ell^*\not=j^*$ and $$\rho^*<\rho_{\ell^*m^*}=\rho_{m^*\ell^*}\leq \rhom.$$ By definition of $\rho^*$ we have
\begin{eqnarray*}
    \P(\min(a_{1\ell^*}, a_{2j^*})>0)=0=\P(\min(a_{1j^*}, a_{2\ell^*})>0)
\end{eqnarray*}
resulting in $a_{1\ell^*}=a_{1j^*}=a_{2\ell^*}=a_{2j^*}=0$ a.s. Hence, the $\ell$-th and $j$-th column of \linebreak $\bA\in\R_+^{2\times d}$
are a.s. zero columns.
%    \item[(b)] \VF{Usually we have a higher dimensional model and compare some components\ldots}
\item[(b)] Both values, $\rho^*$ and $\text{ECI}(X_1|X_2)=\frac{1+\rho^*}{1-\rho^*}$, depend on the one hand, on the location of the non-zero entries of $\bA$, and on the other hand, on the dependence structure of the underlying object $\bZ$ modeled by the correlation matrix $\Sigma$. Naturally, the dependence in the network becomes stronger if either the pairwise correlations in $\Sigma$ increases or if the zero components of $\bA$ are replaced by non-zero components, i.e.,  we have more connections between the agents and the objects. Both effects might increase as well the tail dependence of the agents and the ECI.  %although they exhibit asymptotic independence
\item[(c)]  If additional objects are introduced on the market, that are not connected to the agents, they have no influence on the CoVaR behavior. However, if one agent is connected to that object the change in the CoVaR behavior depends on the correlation of this object with the other objects connected to the other agent in the Gaussian copula model.
\end{itemize}
\end{remark}

\begin{remark}
    Note that the limit measures $\overline{\mu}_1$ and $\overline{\mu}_2$ found in Propositions \ref{prop:condprobiid}-\ref{prop:condprobGC} are all non-zero measures and hence, the conditional probabilities computed are asymptotically non-trivial as well.
\end{remark}

%\begin{remark}
%In this section we obtained asymptotic conditional tail probabilities with two portfolios in a bipartite structure. In understanding systemic risk it is often important to understand the risk of one financial entity given that at least one other entity in the system is under stress; alternatively given an entity has high negative returns, what is the effect of this to the system, i.e., how does it affect even the best performing entity? Interesting this can be addressed by using the bivariate structure developed previously. Results on such network and CoVaR measurements follow similar to results n this section, we have detailed this in \Cref{sec:app:morethantwo}.

%\end{remark}

\subsection{{Risk contagion between various aggregates}}
%\subsection{Towards risk \VF{contagion}: joint tail risk of aggregates} 
In \Cref{subsec:bivprob} we obtained asymptotic conditional tail probabilities with two portfolios in a bipartite structure.
For a financial institution with more than two portfolios, it is also interesting to assess systemic risk between the entire system and a part of the system in terms of \emph{aggregate risks}. Given that the aggregate of a few risks, say for a set $T\subseteq\mathbb{I}_q$, given by $\sum_{m\in T} X_m$ are above their $\VaR$, say at level $\gamma$, given a measurable function $g:(0,1)\to (0,1)$,  what is the  $\VaR$ of $\sum_{k\in S} X_k$  for some $S\subseteq \mathbb{I}_q$ at level $\upsilon g(\gamma)$:
    $$\CoVaR_{\upsilon g(\gamma)| \gamma}\Big( \sum_{k\in S} X_k\Big|\sum_{m\in T} X_m\Big).$$
Of course, the following special cases will be of particular interest:
\begin{itemize}
  \item CoVaR of the $k$-th entity at level $\upsilon g(\gamma)$ given the aggregate of the system at level $\gamma$:
     \[\CoVaR_{\upsilon g(\gamma)|\gamma}\Big(X_k\Big|\sum_{m=1}^q X_m\Big), \]
    \item  CoVaR of the aggregate of the system at level $\upsilon g(\gamma)$ given a particular entity $k$ at level $\gamma$:
    \[ \CoVaR_{\upsilon g(\gamma)|\gamma}\Big(\sum_{m=1}^q X_m\Big|X_k\Big).\]
\end{itemize}
Incidentally, the results obtained in the previous sections suffice for computations of both the CoVaR asymptotics as well as for computing ECI. Indeed by defining $\be^S\in \R^q$ as
 $\be^S_S:=\bone_S$ and $\be^S_{S^c}:=\bzero_{S^c}$, which is a vector containing only $1$ and $0$ (similarly  $\be^T\in \R^q$), and finally, by  defining $\bA^*:=(\be^S,\be^T)^\top\bA$, we obtain
 \begin{eqnarray*}
    (Y_1,Y_2)^\top:=\left(\sum_{k\in S} X_k, \sum_{m\in T} X_m\right)^\top =(\be^S,\be^T)^\top\bA\bZ=\bA^*\bZ.
 \end{eqnarray*}
Thus, if $\bA^*$ and $\bZ$ satisfy the assumptions of \Cref{subsec:bivprob}, we can directly apply the associated tail probability results and CoVaR asymptotics. Since
\begin{eqnarray*}
     \max_{\ell\in \mathbb{I}_d}\P(\min\{a_{1\ell}^*,a_{2\ell}^*\}>0)= \max_{\ell\in \mathbb{I}_d}\P\left(\min\left\{\sum_{k\in S}a_{k\ell}^*,\sum_{m\in T}a_{m\ell}^*\right\}>0\right),
\end{eqnarray*}
the essential characteristic  $\max_{\ell\in \mathbb{I}_d}\P(\min\{a_{1\ell}^*,a_{2\ell}^*\}>0)=0$ (or $>0$) for determining the MRV on $\mathbb{E}_2^{(2)}$  and hence, the CoVaR rate and ECI, is equivalent to
\begin{eqnarray*}
    \max_{\ell\in \mathbb{I}_d}\max_{k\in S}\max_{m\in T}\P\left(\min\left\{a_{k\ell},a_{m\ell}\right\}>0\right)=0 \,\, (\text{or} > 0).
\end{eqnarray*}
Thus, subject to the above conditions, the rest of the results follow directly.

%In the setting of \Cref{subsec:bivprob} with $\bX=\bA\bZ$ where    $\bZ\in \R_+^d$ is a completely tail equivalent random vector   and
%$\bZ \in \MRV(\alpha_{i},b_{i}, \mu_{i},\E^{(i)}_d), \forall \, i\in\mathbb{I}_d$, with
% $b_i(t)/b_{i+1}(t)\to \infty$ as $t\to\infty$ for $i=1,\ldots,d-1$ and 
% $\bA \in \R_+^{q \times d}$ is a  random matrix  with almost surely no trivial rows and independent of $\bZ$, this case can be handled as special case of \Cref{subsec:bivprob}. Indeed, define $\be^S\in \R^q$ as%
% $\be^S_S=\bone_S$ and $\be^S_{S^c}=\bzero_{S^c}$ and similarly  $\be^T\in \R^q$.  Defining $\bA^*=(\be^S,\be^T)^\top\bA$ we receive

%\subsection{Towards risk \VF{contagion}: joint tail risk of more than two portfolios}
\subsection{{Risk contagion in more than two portfolios}}\label{subsec:morethantwo} 
%\marginpar{\VF{Shall we move this section to the appendix?}}
 In this section, for a financial system with two or more portfolios, we assess the risk of one financial entity performing poorly given that at least one other entity in the system is under stress; alternatively given an individual entity has high negative returns, what is the effect of this to the entire system? We use the ideas from the results in the bivariate structure developed previously in \Cref{subsec:bivprob} to obtain the asymptotics here. Clearly, the sparsity of the matrix defining the bipartite network given by $\bA$ has a role to play. For this section, we concentrate on $\bZ\in \R_+^d$ which are completely tail equivalent.
\begin{proposition}\label{prop:condprobwithoneandmax}
    Let  $\bZ\in \R_+^d$ be a  completely tail-equivalent random vector and
$\bZ \in \MRV(\alpha_{i},b_{i}, \mu_{i},\E^{(i)}_d), \,\forall\,i\in\mathbb{I}_d$ with
 $b_i(t)/b_{i+1}(t)\to \infty$ as $t\to\infty$ for $i=1,\ldots,d-1$. 
Let $\bA \in \R_+^{q \times d}$ be a random matrix with almost surely no trivial rows and independent of $\bZ$.  Moreover, let $\E||\bA||^{\alpha+\epsilon}<\infty$ where $\alpha:=\alpha_1$. Define $\bX=\bA\bZ$. Then the following holds for a fixed $k\in \mathbb{I}_q$ and $\bY:=(Y_1,Y_2)^\top:=(X_k, \max_{m\in \mathbb{I}_q\setminus\{k\}} X_m )$.
\begin{itemize}
    \item[(a)] $\bY\in\MRV(\alpha,b_1,{\mu}_1^*,\E_2^{(1)})$ where
    \begin{eqnarray*}
            {\mu}_1^*
            ([\bzero,\bx]^c)=\frac{\mu_1([\bzero,\bone]^c)}{d}\sum_{\ell=1 }^d \E\left[%\left(\frac{a_{1\ell}}{x_1}\right)^{\alpha}+\left(\frac{a_{2\ell}}{x_2}\right)^{\alpha}-
            \max\left\{\frac{a_{k\ell}}{x_1}, \max_{m\in \mathbb{I}_q\setminus\{k\}} \left\{\frac{a_{m\ell}}{x_2}\right\}\right\}^{\alpha}%\mathds{1}_{\{a_{1\ell}>0,a_{2\ell}>0\}}
            \right], \quad \bx\in\R_+^2.
        \end{eqnarray*}
    \item[(b)]  If  $\max_{m\in\mathbb{I}_q\setminus\{k\}} \max_{\ell\in\mathbb{I}_d} \P(\min(a_{k\ell}, a_{m\ell})>0)>0,$ 
     then $\bY\in\MRV(\alpha,b_1,\mu_2^*,\E_2^{(2)})$  where
    \begin{eqnarray*}
            \mu_2^*(\left(\bx,\binfty\right))
                =\frac{\mu_1([\bzero,\bone]^c)}{d}\sum_{\ell=1}^d \E\left[\max_{m\in\mathbb{I}_q\setminus\{k\}} \left\{\min\left\{\frac{a_{k\ell}}{x_1},\frac{a_{m\ell}}{x_2}\right\}\right\}^{\alpha}%\mathds{1}_{\{a_{1\ell}>0,a_{2\ell}>0\}}
                \right], \quad \bx\in(\bzero,\binfty).
        \end{eqnarray*}
        Moreover, as $t\to\infty$,
        \begin{align*} 
        \P(Y_1>tx_1|Y_2>tx_2) %& \sim  x_2^{\alpha} \, \frac{\mu_2^*(\left(\bx,\binfty\right))}{{\mu}_1^*            (\R_+\times(x_2,\infty))}  \\
            & \sim  x_2^{\alpha}\,
           \frac{\sum\limits_{\ell=1}^d\E\left[\max_{m\in\mathbb{I}_q\setminus\{k\}} \left\{\min\left\{\frac{a_{k\ell}}{x_1},\frac{a_{m\ell}}{x_2}\right\}\right\}^{\alpha}%\mathds{1}_{\{a_{1\ell}>0,a_{2\ell}>0\}}
                \right]}{\sum\limits_{\ell=1}^d\E\big[\max_{m\in \mathbb{I}_q\setminus \{k\}} a_{m\ell}^{\alpha} \big]}, \\
            %    \intertext{and,}
          \P(Y_2>tx_2|Y_1>tx_1)% & \sim  x_1^{\alpha} \, \frac{\mu_2^*(\left(\bx,\binfty\right))}{{\mu}_1^*            ((x_1,\infty)\times\R_+)}  \\
            & \sim  x_1^{\alpha}\,
           \frac{\sum\limits_{\ell=1}^d\E\left[\max_{m\in\mathbb{I}_q\setminus\{k\}} \left\{\min\left\{\frac{a_{k\ell}}{x_1},\frac{a_{m\ell}}{x_2}\right\}\right\}^{\alpha}%\mathds{1}_{\{a_{1\ell}>0,a_{2\ell}>0\}}
                \right]}{\sum\limits_{\ell=1}^d\E\big[ a_{k\ell}^{\alpha} \big]}.     
        \end{align*}
           Additionally, if the non-zero components of $\bA$ have a bounded support, bounded away from zero, then there exist $0<\upsilon^*_1,\upsilon^*_2<1$ such that for all $0<\upsilon<\upsilon_1^*$ we have as $\gamma\downarrow 0$,
        \begin{eqnarray*}
            \CoVaR_{\upsilon|\gamma}(Y_1|Y_2)\sim \upsilon^{-\frac{1}{\alpha}} %\left({\sum\limits_{\ell=1}^d\E\left[a_{k\ell}\right]^{\alpha}%\mathds{1}_{\{a_{1\ell}>0,a_{2\ell}>0\}}
             %   }\Bigg/{\sum\limits_{\ell=1}^d\E\big[\max_{m\in\mathbb{I}_q\setminus\{k\}}a_{m\ell}^{\alpha} \big]}\right)^{\frac{1}{\alpha}}
             \VaR_{\gamma}(Y_1),
        \end{eqnarray*}
        and for all $0<\upsilon<\upsilon_2^*$ we have as $\gamma\downarrow 0$,
        \begin{eqnarray*}
            \CoVaR_{\upsilon|\gamma}(Y_2|Y_1)\sim \upsilon^{-\frac{1}{\alpha}}%\left({\sum\limits_{\ell=1}^d\E\big[\max_{m\in\mathbb{I}_q\setminus\{k\}}a_{m\ell}^{\alpha} \big]}\Bigg/{\sum\limits_{\ell=1}^d\E\left[a_{k\ell}\right]^{\alpha}%\mathds{1}_{\{a_{1\ell}>0,a_{2\ell}>0\}}
             %   }\right)^{\frac{1}{\alpha}}
            \VaR_{\gamma}(Y_2),
        \end{eqnarray*}
        Finally, $\text{ECI}(Y_1|Y_2)=\text{ECI}(Y_2|Y_1)=\infty$.
\end{itemize}
\end{proposition}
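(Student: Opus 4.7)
The approach is to approximate $\bY$ by the image $\bV=\bB\bZ$ of the auxiliary random matrix $\bB\in\R_+^{2\times d}$ given by $b_{1\ell}:=a_{k\ell}$ and $b_{2\ell}:=\max_{m\in\mathbb{I}_q\setminus\{k\}}a_{m\ell}$. By construction $Y_1\equiv V_1$ and
\[
\max_{\ell\in\mathbb{I}_d}b_{2\ell}Z_\ell\;\le\;Y_2\;\le\;\sum_{\ell\in\mathbb{I}_d}b_{2\ell}Z_\ell=V_2,
\]
while $\E\|\bB\|^{\alpha+\epsilon}<\infty$ and the independence of $\bB$ and $\bZ$ inherit from the corresponding properties of $\bA$. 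Complete tail equivalence of $\bZ$ together with $b_i(t)/b_{i+1}(t)\to\infty$ forces $\mu_1$ to be concentrated on the coordinate axes of $\E_d^{(1)}$ with common axial mass $\mu_1([\bzero,\bone]^c)/d$.

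For part (a), the identity $\{\bY\in[\bzero,\bx]^c\}=\{\bX\in[\bzero,\bz^*]^c\}$ with $z_k^*=x_1$ and $z_m^*=x_2$ for $m\ne k$ reduces the claim to a direct application of the natural $q$-dimensional analogue of \Cref{exchangable}(a) to $\bX=\bA\bZ$; inserting the axial-mass form of $\mu_1$ yields the stated formula for $\mu_1^*$. For part (b), I first apply \Cref{prop:condprobmain}(b) to $\bV=\bB\bZ$: the hypothesis $\max_{m\ne k}\max_\ell\P(\min\{a_{k\ell},a_{m\ell}\}>0)>0$ is equivalent to $\max_\ell\P(\min\{b_{1\ell},b_{2\ell}\}>0)>0$, so $\bV\in\MRV(\alpha,b_1,\overline{\mu}_2^{\bB},\E_2^{(2)})$ with
\[
\overline{\mu}_2^{\bB}((\bx,\binfty))=\frac{\mu_1([\bzero,\bone]^c)}{d}\sum_{\ell=1}^d\E\!\left[\min\!\left(\tfrac{a_{k\ell}}{x_1},\tfrac{\max_{m\ne k}a_{m\ell}}{x_2}\right)^{\!\alpha}\right],
\]
which, via the pointwise identity $\min(a,\max_m b_m)=\max_m\min(a,b_m)$ on $\R_+$ together with the monotonicity of $x\mapsto x^\alpha$, coincides with the asserted $\mu_2^*$. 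To transfer this conclusion back to $\bY$ I show $\P(Y_1>tx_1,Y_2>tx_2)\sim\P(V_1>tx_1,V_2>tx_2)$ as $t\to\infty$: the upper bound is immediate from $Y_2\le V_2$, while the reverse follows from $Y_2\ge\max_{\ell}b_{2\ell}Z_\ell$ combined with the observation that the leading-order contribution to $\P(V_1>tx_1,V_2>tx_2)$ is already captured by the event that a single coordinate of $\bZ$ is extreme (a Breiman-type argument). The conditional-probability formulas then follow by dividing the joint tail by the marginal
\[
\P(Y_2>t)\sim\frac{\mu_1([\bzero,\bone]^c)}{d}\sum_{\ell=1}^d\E\!\left[\max_{m\ne k}a_{m\ell}^{\alpha}\right] b_1^{\leftarrow}(t)^{-1}
\]
and the analogous expression for $\P(Y_1>t)$; the two CoVaR statements are then inherited from \Cref{prop:condprobmain}(b) applied to $\bV$ (with coordinate roles swapped to obtain $\CoVaR(Y_2|Y_1)$), and since the MRV on $\E_2^{(2)}$ has the same index and scaling $b_1$ as on $\E_2^{(1)}$, the definition of ECI gives $\text{ECI}(Y_1|Y_2)=\text{ECI}(Y_2|Y_1)=\infty$.

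The main obstacle is the lower-bound half of this tail equivalence, i.e., showing that the positive gap $V_2-Y_2=\sum_{\ell}b_{2\ell}Z_\ell-\max_{m}\sum_{\ell}a_{m\ell}Z_\ell$ does not spoil the asymptotic. The gap is only non-negligible when at least two coordinates of $\bZ$ are simultaneously of order $t$, an event of probability $O(b_2^{\leftarrow}(t)^{-1})=o(b_1^{\leftarrow}(t)^{-1})$ by the MRV of $\bZ$ on $\E_d^{(2)}$ and the hypothesis $b_1(t)/b_2(t)\to\infty$. A truncation of $\bZ$ combined with the Breiman-style bound supplied by $\E\|\bA\|^{\alpha+\epsilon}<\infty$ (mirroring the estimates in the proof of \Cref{prop:condprobmain}) then disposes of both the sub-extreme bulk contribution and the negligible multi-extreme contribution, completing the argument.
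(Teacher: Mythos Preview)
Your argument is sound, but for part (b) it takes a detour that the paper avoids. The paper's (omitted) proof combines \Cref{thm:main:dfk}—which yields $\bX=\bA\bZ\in\MRV(\alpha,b_1,\overline\mu_1,\E_q^{(1)})$—with the mapping theorem of Lindskog--Resnick--Roy applied directly to the continuous, positively homogeneous map $\psi(\bx)=(x_k,\max_{m\ne k}x_m)$. Since $\psi^{-1}((\bx,\binfty))=\{y_k>x_1\}\cap\bigcup_{m\ne k}\{y_m>x_2\}$ is already bounded away from the origin in $\R_+^q$, the MRV of $\bX$ on $\E_q^{(1)}$ alone captures it, and $\mu_2^*((\bx,\binfty))=\overline\mu_1(\psi^{-1}((\bx,\binfty)))$ drops out of the axial form of $\mu_1$ with no sandwich argument whatsoever; the CoVaR asymptotics then follow from \Cref{thm:bivCoVaRmain}. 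Your route through the auxiliary linear model $\bV=\bB\bZ$ has the virtue of reusing \Cref{prop:condprobmain}(b) wholesale—so the CoVaR and ECI conclusions are immediately inherited—but it obliges you to close the gap $W_2\le Y_2\le V_2$ via the Breiman-type lower bound you flag as the main obstacle; that step is correct but essentially re-proves the one-big-jump principle already buried inside \Cref{thm:main:dfk}. Both routes reach the same destination; the paper's is shorter because it pushes MRV through $\psi$ itself rather than approximating $Y_2$ by a linear surrogate.
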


    The first part of the proof is a direct application of \Cref{thm:main:dfk} and the mapping theorem in \cite[Theorem~2.3]{lindskog:resnick:roy:2014}. The second part of the proof follows then from \Cref{thm:bivCoVaRmain}. Hence, the detailed proof of this proposition is omitted.%,  similarly we skip proofs of further results in this section.
\begin{remark}\label{rem:str}
   
Note that if for fixed $k\in \mathbb{I}_q$ we have 
\begin{align}\label{eq:strA}
\max_{m\in\mathbb{I}_q\setminus\{k\}} \max_{\ell\in\mathbb{I}_d} \P(\min(a_{k\ell}, a_{m\ell})>0)>0,
\end{align}
then the measure $\mu_2^*$ in \Cref{prop:condprobwithoneandmax} is defined via $\mu_1$, the limit measure of $\bZ$ on $\E_d^{(1)}$. On the other hand, if $$\max_{m\in\mathbb{I}_q\setminus\{k\}} \max_{\ell\in\mathbb{I}_d} \P(\min(a_{k\ell}, a_{m\ell})>0)=0,$$ then $\mu_2^*$ will as well involve $\mu_2$, the limit measure of $\bZ$ on $\E_d^{(2)}$; thus, the dependence structure of $\bZ$ plays a role here. %Since the proofs are directly obtained by applying \Cref{thm:main:dfk} with relevant assumptions on $\bZ$, they are omitted here.
The CoVaR and ECI results for $\bZ$ with a i.i.d, $\PGC$ and $ \PMOC$ dependence structure are provided in \Cref{sec:app:morethantwo} and follow a  similar pattern to the results obtained in \Cref{subsec:bivprob}. 

\end{remark}
% For the systemic events discussed above, we have stated a general result on CoVaR and ECI in this section, especially when $\bA$ satisfies \eqref{eq:strA}. When \eqref{eq:strA} is not satisfied we may obtain results  using particular choices of the vector $\bZ$ (i.e., i.i.d, $\PGC, \PMOC$); they follow a  similar pattern to the results obtained in \Cref{subsec:bivprob} and 

\section{Conclusion}\label{sec:concl}
%\textcolor{red}{we have to change the conclusion}

The goal of the paper has been to understand the tail risk behavior in complex financial network models and use this knowledge to provide asymptotic approximations for conditional risk measures like CoVaR. We have used the framework of a bipartite network to assess risk contagion in this regard. In our study, we have managed to accomplish a few things.
\begin{enumerate}[(i)]
%\item We have proposed the notion of \emph{mutual asymptotic independence} to assess tail dependence in multivariate models; this parallels mutual independence vis-a-vis pairwise independence for multivariate random vectors.
    \item  For modeling via bipartite networks, it is natural to assume that various banks will invest in different assets, possibly non-overlapping, yet the exposure of these assets to the market may still make them dependent {(regardless of whether they are asymptotically dependent or independent)}. We have shown that modeling via such distributions still allows us to compute conditional tail probabilities, which erstwhile computations have shown to be negligible. Moreover, the CoVaR measures computed can be drastically different depending on the nature of the behavior of the joint distribution of the assets. 
    %We have computed conditional risk probability and CoVaR in a complex financial network where underlying distribution is heavy-tailed and may exhibit asymptotic independence (either mutual or pairwise). We have shown explicit computation for various different CoVaR measures.
    \item We have proposed the \emph{Extreme CoVaR Index} which measures the {strength of risk} contagion between entities in a system, {this is especially useful when the risks of underlying objects are asymptotically independent}.
\end{enumerate}
%Regarding the notion of \emph{mutual asymptotic independence}, we believe this provides a new pathway for characterizing the extremal dependence for high dimensional problems relating to tail events. 
We have restricted to CoVaR measures but we surmise that other conditional risk measures like Mean Expected Shortfall (MES), Mean Marginal Excess (MME) and Systemic Risk (SRISK) can also be computed under such models; {we have briefly discussed some connections with MME and MES}. We have also particularly focused on Pareto or Pareto-like tails in this paper for convenience, naturally, some of the results can be extended to general regularly varying marginal tails as well. For Gaussian copulas with regularly varying tails, we still need certain restrictions on the marginal tail behavior (see \cite{das:fasen:2024} for details); on the other hand, for Marshall-Olkin copulas, assuming tail-equivalent regularly varying marginals will lead to similar results. 
In this paper, we focused on these particular copula models because of their flexibility and inherent connections with risk, reliability and specifically systemic risk.  Clearly, our results need not be restricted to these copulas only. Although bivariate copula families are usually popular, \cite[Chapter 4]{joe:2014} lists multiple extensions of bivariate copulas to general high dimensions. We believe many such copulas can be explored for creating models with particular levels of asymptotic independence as necessitated by the context.
%Moreover, we have only investigated a few popular and useful dependence structures like Gaussian, and Marshall-Olkin; there are other Archimedean copulas that can be explored using the methodology in this paper allowing for a rich class of models for such complex systems.
Finally, we have left model calibration and statistical estimation to be pursued in future work.

{
\subsection*{Acknowledgement}
We would like to thank the Editor, the Associate Editor and both referees for giving us valuable and informative comments which helped us in improving the quality of the paper.

%\newpage

\bibliographystyle{imsart-nameyear}
\bibliography{bibfilenew}

%\newpage

\appendix

\section{Proofs of Section~\ref{sec:mrvwithasyind}} \label{se2:proofs}

For the proof of \Cref{prop:bi} we use the following auxiliary result given in \cite[Lemma A.2]{das:fasen:2024b}
\begin{lemma} \label{lem:aux1}

Let $\Sigma\in\R^{d\times d}$ be positive definite and $\gamma(\Sigma)$, $I(\Sigma)$ be defined as in \Cref{lem:quadprog}.
\begin{enumerate}
    \item[(a)] Suppose $\Sigma^{-1}\bone>\bzero$. Then for any $S\subseteq \mathbb{I}_d$ with  $S\not= \mathbb{I}_d$, the following inequality holds:
    \begin{eqnarray*}
        \gamma(\Sigma)>\gamma(\Sigma_{S}).
    \end{eqnarray*}
    \item[(b)] Suppose $\Sigma^{-1}\bone\ngtr\bzero$. Then $I(\Sigma)\neq \mathbb{I}_d$ and for any set $  S\neq \mathbb{I}_d$ with $I(\Sigma)\subseteq S\subseteq \mathbb{I}_d$ we have
    \begin{eqnarray*}
        \gamma(\Sigma)=\gamma(\Sigma_{S}).
    \end{eqnarray*}
    For $  S\subseteq  \mathbb{I}_d$ with $S^c\cap I(\Sigma)\not=\emptyset $ we have  $I(\Sigma)=I(\Sigma_S)$ and
    \begin{eqnarray*}
        \gamma(\Sigma)>\gamma(\Sigma_{S}).
    \end{eqnarray*}
\end{enumerate}
\end{lemma}

\begin{proof}[Proof of \Cref{prop:bi}]
%From \Cref{prop:premain} we have
Note that
\begin{eqnarray*}
\gamma_i&=&\min_{S\subseteq \mathbb{I}_d, |S| \ge i} \min_{\bz_S \ge \bone_S} \bz_S^{\top} \Sigma_S^{-1} \bz_S
\end{eqnarray*}
by definition (see \Cref{prop:premain}).
Hence, there exists a set $S\subseteq \mathbb{I}_d$ with $ |S| \ge i$
and \linebreak $\gamma_i= \min_{\bz_S \ge \bone_S} \bz_S^{\top} \Sigma_S^{-1} \bz_S$. However, due to \Cref{lem:aux1} there exists as well a set $M\subseteq S$ with $|M|=|S|-1\geq i-1$ and
\begin{eqnarray*}
    \gamma_i=\gamma(\Sigma_S)>\gamma(\Sigma_M)\geq\gamma_{i-1}
\end{eqnarray*}
and finally, $\alpha_i=\alpha\gamma_i>\alpha\gamma_{i-1}=\alpha_{i-1}$ and $b_{i}(t)/b_{i-1}(t)\to\infty$ as $t\to\infty$ for $i=2,\ldots,d$.
\end{proof}

\begin{proof}[Proof of \Cref{prop:supportpgc}] $\mbox{}$\\
    (a) \, Suppose that for all $S\subseteq \mathbb{I}_d$ with $|S|=i$ we have $\Sigma^{-1}_S\bone_S >\bzero_S$ and $\bone_S^{\top}\Sigma_S^{-1}\bone_S=\gamma_i$. Then $I(\Sigma_S)=I_S=S$ and
    $\gamma(\Sigma_S)
    =\bone_S^{\top}\Sigma_S^{-1}\bone_S=\gamma_i$. Thus, 
    $\mathcal{S}_i=\{S\subseteq \mathbb{I}_d:|S|=i\}$ and the statement follows directly from \Cref{prop:premain}. \\
    (b) \, Suppose that for  some $S\subseteq \mathbb{I}_d$ with $|S|=i$,  $\Sigma^{-1}_S\bone_S >\bzero_S$ and $\gamma_i\not=\bone_S^{\top}\Sigma^{-1}_S\bone_S$. Then $I_S=S$ but $\gamma_i\not=\bone_{I_S}^{\top}\Sigma^{-1}_{I_S}\bone_{I_S}$ giving $S\notin\mathcal{S}_i$ and finally, the statement follows again from \Cref{prop:premain}.
\end{proof}

\section{Proofs of Section~\ref{sec:systemicrisk}}\label{sec:proofofsec5}

%We prove first \Cref{thm:bivCoVaRmain}, which requires the following Lemma.

 \begin{proof}[Proof of \Cref{thm:bivCoVaRmain}] ~\\
(a) \, \textbf{Case 1.} Suppose $F_1=F_2$. Then 
$\VaR_{\gamma}(Y_1)=\VaR_{\gamma}(Y_2)$ and $\bY=\bY'\in\MRV(\alpha_i,b_i,\mu_i,\E_2^{(i)})$ for $i=1,2$.
    %Then \marginpar{\VF{at the moment with equal margins}.}
 %$Y_1, Y_2$, a canonical choice for $b_1(t)=\overline{F}_1^{\leftarrow}(1/t)=\VaR_{1/t}(Y_2)$    
% To begin with assume $\VaR_{\gamma}(Y_1)=b_1(\gamma^{-1}), 0<\gamma<1$, and by complete tail independence of $Y_1$ and $Y_2$ we have $ \Var_{\gamma}(Y_1)\sim\VaR_{\gamma}(Y_1) = b_1(\gamma^{-1})$ as $\gamma\downarrow 0$. Note that this is the canonical choice of $b_1$ with identical marginals.
    For $0<\gamma<1$ and $0<\upsilon<1$,  we receive   
    \begin{align*}
        \CoVaR_{\upsilon g(\gamma)| \gamma}(Y_1|Y_2) & = \inf \{y\in \R: \P(Y_1>y|Y_2>\VaR_{\gamma}(Y_2))\leq \upsilon g(\gamma)\}\\
                             & = \VaR_{\gamma}(Y_1)\inf \{y\in \R: \P(Y_1>y\VaR_{\gamma}(Y_1)|Y_2>\VaR_{\gamma}(Y_1))\leq \upsilon g(\gamma)\}\\
                             & = \VaR_{\gamma}(Y_1) \inf\left\{y\in\R: h_\gamma(y)\leq \upsilon  g(\gamma)\gamma b_2^{\leftarrow}(\VaR_{\gamma}(Y_1))\right\}\\
                             &=\VaR_{\gamma}(Y_1)h_{\gamma}^{\leftarrow}(\upsilon  g(\gamma)\gamma b_2^{\leftarrow}(\VaR_{\gamma}(Y_1)).
    \end{align*}  
    Due to our assumption, there exist constants $0<a_1<a_2<r$  such that
    \begin{eqnarray*}
        a_1<\liminf_{\gamma\downarrow 0}g(\gamma)\gamma b_2^{\leftarrow}(\VaR_{\gamma}(Y_1))
        \leq \liminf_{\gamma\downarrow 0}g(\gamma)\gamma b_2^{\leftarrow}(\VaR_{\gamma}(Y_1))<a_2,
    \end{eqnarray*}
    and hence, there exists a $\gamma_0\in(0,1)$ such that
    \begin{eqnarray*}
        g(\gamma)\gamma b_2^{\leftarrow}(\VaR_{\gamma}(Y_1))\in\left[\frac{a_1}{2},\frac{a_2+\min(r,a_2)}{2}\right], \quad \forall\,\, 0<\gamma<\gamma_0.
    \end{eqnarray*}
    Due to \Cref{Lemma C.1}, $h_\gamma^{\leftarrow}(y)/h^{-1}(y)\to 1$  uniformly  as $\gamma\downarrow 0$ on $[a_1/2,(a_2+\min(r,a_2))/2]$ and thus,
    \begin{eqnarray*}
        \lim_{\gamma\downarrow 0}\frac{h_{\gamma}^{\leftarrow}(\upsilon  g(\gamma)\gamma b_2^{\leftarrow}(\VaR_{\gamma}(Y_1))}{h^{-1}(\upsilon  g(\gamma)\gamma b_2^{\leftarrow}(\VaR_{\gamma}(Y_1))}=1.
    \end{eqnarray*}
   Therefore, we have
    \begin{eqnarray*}
       \lim_{\gamma\downarrow 0} \frac{\CoVaR_{\upsilon g(\gamma)| \gamma}(Y_1|Y_2)}{\VaR_{\gamma}(Y_1)h^{-1}(\upsilon  g(\gamma)\gamma b_2^{\leftarrow}(\VaR_{\gamma}(Y_1))}&=&
       \lim_{\gamma\downarrow 0} \frac{h_{\gamma}^{\leftarrow}(\upsilon  g(\gamma)\gamma b_2^{\leftarrow}(\VaR_{\gamma}(Y_1))}{h^{-1}(\upsilon  g(\gamma)\gamma b_2^{\leftarrow}(\VaR_{\gamma}(Y_1))}=1.
    \end{eqnarray*}
    \textbf{Case 2.} Suppose $F_2$ is arbitrary. Define 
    $\bY':=(Y_1',Y_2'):=(Y_1,F_1^{\leftarrow}\circ F_2(Y_2))$; by assumption we have $\bY' \in \MRV(\alpha_i,b_i,\mu_i,\E_2^{(i)})$, $i=1,2$, and $Y_1'\eqd Y_2'$. Since $F_1^{\leftarrow}\circ F_2$ is an increasing function, we have
    \begin{eqnarray*}
        \CoVaR_{\upsilon g(\gamma)| \gamma}(Y_1|Y_2)
        =\CoVaR_{\upsilon g(\gamma)| \gamma}(Y_1'|Y_2')
    \end{eqnarray*}
   and we can directly apply Case 1 to $\CoVaR_{\upsilon g(\gamma)| \gamma}(Y_1'|Y_2')$ and obtain the statement.\\
    (b,c) \, By the assumption that $h_\gamma^{\leftarrow}(v)/h^{-1}(v)$ converges uniformly on appropriate intervals, the proof follows analogously as the proof of (a).
\end{proof}

\begin{proof}[Proof of \Cref{Lemma C.1}]~\\
From $(Y_1,F_1^{\leftarrow}\circ F_2(Y_2))\in\MRV(\alpha_1,b_1,\mu_1,\E_2^{(1)})\cap\MRV(\alpha_2,b_2,\mu_2,\E_2^{(2)})$  and the continuity of $h$, it follows that $\VaR_{\gamma}(Y_1)\to \infty$ as $\gamma\downarrow 0$ and
for any $y>0$,
\begin{eqnarray*} %\label{eqada}
h_\gamma(y)\to h(y), \quad \gamma\downarrow 0.
\end{eqnarray*}
Since $h$ is continuous and decreasing we know  from \cite[Lemma 1.1.1]{dehaan:ferreira:2006} (taking $-h$, which is increasing)
that for any $v\in(0,r)$,
\begin{eqnarray} \label{eqada}
h_\gamma^{\leftarrow}(v)\to h^{-1}(v), \quad \gamma\downarrow 0.
\end{eqnarray}
Note that $h^{-1}$ is a strictly decreasing, continuous function due to (i) such that 
\begin{eqnarray*}  
    0<h^{-1}(a_2)=\inf_{v\in[a_1,a_2]}h^{-1}(v)\leq \sup_{v\in[a_1,a_2]}h^{-1}(v)=h^{-1}(a_1)<\infty.
\end{eqnarray*}
Therefore, it is sufficient to prove that
\begin{eqnarray} \label{b1}
    \sup_{v\in[a_1,a_2]}\left|h_\gamma^{\leftarrow}(v)-h^{-1}(v)\right|\to 0, \quad \gamma\downarrow 0,
\end{eqnarray}
or equivalently with $v=w^{-1}$,
\begin{eqnarray*}
    \sup_{w\in[a_2^{-1},a_1^{-1}]}\left|h_\gamma^{\leftarrow}(w^{-1})-h^{-1}(w^{-1})\right|\to 0, \quad\gamma\downarrow 0.
\end{eqnarray*}
Define
\begin{eqnarray*}
    F_\gamma(w)=\frac{h_\gamma^{\leftarrow}(w^{-1})}{h_\gamma^{\leftarrow}(a_1)}
    \quad \text{ and } \quad F(w)=\frac{h^{-1}(w^{-1})}{h^{-1}(a_1)}  \quad \text{ for } w\in[a_2^{-1},a_1^{-1}],
\end{eqnarray*}
with $F_\gamma(w)=F(w)=1$ for $w> a_1^{-1}$ and  $F_\gamma(w)=F(w)=0$ for $w< a_2^{-1}$. Then $F_\gamma$ and $F$ are distribution functions (in particular $F_\gamma$ is right continuous with left limits since $h_\gamma^{\leftarrow}$ is left continuous with right limits) and due to \eqref{eqada} we have as well $\lim_{\gamma\downarrow 0}F_\gamma(w)=F(w)$ for every $w\in\R$. Since $F$ is continuous, Polya's Theorem \cite{polya:1920} actually gives the uniform convergence
\begin{eqnarray*}
    \lim_{\gamma\downarrow 0}\sup_{v\in[a_1,a_2]}\left|\frac{h_\gamma^{\leftarrow}(v)}{h_\gamma^{\leftarrow}(a_1)}-\frac{h^{-1}(v)}{h^{-1}(a_1)}\right|=\lim_{\gamma\downarrow 0}\sup_{w\in[a_2^{-1},a_1^{-1}]}|F_\gamma(w)-F(w)|=0.
\end{eqnarray*}
Finally,
\begin{eqnarray*}
    \lefteqn{\sup_{v\in[a_1,a_2]}\left|h_\gamma^{\leftarrow}(v)-h^{-1}(v)\right|}\\
    &&\quad\leq 
    \sup_{v\in[a_1,a_2]}\left|h_\gamma^{\leftarrow}(v)-h^{-1}(v)\frac{h_\gamma^{\leftarrow}(a_1)}{h^{-1}(a_1)}\right|
    +\sup_{v\in[a_1,a_2]}\left|h^{-1}(v)\frac{h_\gamma^{\leftarrow}(a_1)}{h^{-1}(a_1)}-h^{-1}(v)\right|\\
    &&\quad= 
    h_{\gamma}^{\leftarrow}(a_1)\sup_{v\in[a_1,a_2]}\left|\frac{h_\gamma^{\leftarrow}(v)}{h_\gamma^{\leftarrow}(a_1)}-\frac{h^{-1}(v)}{h^{-1}(a_1)}\right|
    +\left|\frac{h_\gamma^{\leftarrow}(a_1)}{h^{-1}(a_1)}-1\right|\sup_{v\in [a_1,a_2]}|h^{-1}(v)|\\
    &&\quad\stackrel{\gamma\downarrow 0}{\to}0,
\end{eqnarray*}
which gives the statement.
\end{proof}

%\newpage
\begin{comment}

\begin{lemma}\label{lem:homprop}
    Let bivariate random vector $\bX\in \MRV(\alpha_i,b_i, \mu_i,\E_2^{(i)}), i=1,2$ with $\lim_{t\to\infty} b_1(t)/b_{2}(t)=\infty$ and $X_1$ and $X_2$ are (completely) tail equivalent. Then
    \[\lim_{x\downarrow 0}\mu_2((1,\infty)\times(x,\infty)) = \lim_{x\downarrow 0}\mu_2((x,\infty)\times(1,\infty)) =\infty.\]
\end{lemma}
\marginpar{\VF{The assumption of asymptotic independence is essential. Otherwise it does not hold.}}
\begin{proof}
 By homogeneity property of $\mu_1, \mu_2$ we have that $\mu_i(tA_i)=t^{-\alpha_i}\mu_1(A_i), i=1,2$ for any set $A_i\subset \mathcal{B}(\E_2^{(i)})$; see \cite[Theorem 3.1]{lindskog:resnick:roy:2014}.
 \DAS{to be done} 
% Clearly $g(x)= \mu_2((x,\infty)\times(1,\infty))>0, x>0$ is a decreasing function of $x$ (since $\mu_2$ is a measure). Suppose $\lim_{x\downarrow0} g(x) =c <\infty$. 
 %This means
 %\begin{align}
 %  \lim_{t\to\infty} t\P\left(\frac{\bX}{b_2(t)}\in (0,\infty)\times(1,\infty)\right)  & = \mu_2((0,\infty)\times(1,\infty))=c.
 % \end{align}
 \end{proof}

\end{comment}

\begin{proof}[Proof of \Cref{corollary 4.4}] $\mbox{}$\\ 
(a)(i)  First of all,
\begin{align*}
    h_{\gamma}(y) &:= b_2^{\leftarrow}(\VaR_{\gamma}(Y_1))\P\left(Y_1>y\VaR_{\gamma}(Y_1),Y_2>\VaR_{\gamma}(Y_1))\right)
\end{align*}
and due to \Cref{prop:pmocmrv} we have 
 $h(y)=y^{-\alpha}$ for $y\geq 1$ and $h^{-1}(v)=v^{-1/\alpha}$ for $v\in\left(0,1\right]$. Define $\eta=\frac{1}{2}$. Since 
 $$b_2^{\leftarrow}(t)=\theta^{-(\eta+1)}t^{(\eta +1)\alpha}\sim (\ov F_1(t)\ov F_2(t)^\eta)^{-1}, \quad t\to\infty,$$ we take w.l.o.g. $b_2^{\leftarrow}(t)=(\ov F_1(t)\ov F_2(t)^\eta)^{-1}$.
 Then  we have for $y\geq 1$,
\begin{eqnarray*}
    b_2^{\leftarrow}(t)\P(Y_1>yt,Y_2>t)
    &=&b_2^{\leftarrow}(t)\widehat C_{\Lambda}^{\text{MO}}(\ov F_1(yt),\ov F_2(t))
    =b_2^{\leftarrow}(t)\ov F_1(yt)\ov F_2(t)^{\eta}\\
    &=&\frac{\ov F_1(yt)}{\ov F_1(t)}=:\widetilde h_t(y).
\end{eqnarray*}
Moreover,
\begin{eqnarray*}
   \frac{ \widetilde h_t^{\leftarrow}(v)}{v^{-\frac{1}{\alpha}}}=\frac{\ov F_1^{\leftarrow}(v\ov F_1(t))}{v^{-\frac{1}{\alpha}}t}=\frac{\ov F_1^{\leftarrow}(v\ov F_1(t))}{v^{-\frac{1}{\alpha}}\ov F_1^{\leftarrow}(\ov F_1(t))}\frac{\ov F_1^{\leftarrow}(\ov F_1(t))}{t}.
\end{eqnarray*}
The first factor converges uniformly to $1$ on $\left(0,1\right]$ as $t\to\infty$ by our assumption, and the second factor converges to $1$ as $t\to\infty$ as well. Hence, $\widetilde h_t^{\leftarrow}(v)/h^{-1}(v)$ converges uniformly to $1$ on $\left(0,1\right]$ as $t\to\infty$. Finally, $h_\gamma^{\leftarrow}(v)/h^{-1}(v)$ converges uniformly to $1$ on $\left(0,1\right]$ as $\gamma\downarrow 0$ such that the assumption in \Cref{thm:bivCoVaRmain}(b) is satisfied. 
%Thus,
%\begin{eqnarray*}
%    b_2^{\leftarrow}(t)\P(Y_1>yt,Y_2>t)y^{\alpha}
%    &=&\frac{\ov F_1(yt)}{\theta (ty)^{-\alpha}}\cdot \left(\frac{\ov F_2(t)}{\theta t^{-\alpha}}\right)^{\eta},
%\end{eqnarray*}
%which converges uniformly to $1$, since the first expression converges uniformly to $1$ by assumption and the second converges as well to $1$, since the tails of $F_j$ are tail equivalent to Pareto tails. But then as well $h_\gamma(y)/h(y)$ converges uniformly to $1$ on $\left[1,\infty\right)$.
Since $\upsilon\gamma^{\beta+1} b_2^{\leftarrow}(\VaR_{\gamma}(Y_1))\sim \upsilon\gamma^{\beta-\eta}$, we have  $h^{-1}(\upsilon\gamma^{\beta+1} b_2^{\leftarrow}(\VaR_{\gamma}(Y_1))\sim \upsilon^{-\frac{1}{\alpha}}\gamma^{-\frac{\beta-\eta}{\alpha}}$ as $\gamma\downarrow 0$. The final statement is then an application of \Cref{thm:bivCoVaRmain}(b).\\
(a)(ii) \, For $0<y\leq 1$ we have $h(y)=y^{-\alpha\eta}$
and for $v\in\left[1,\infty\right)$ we have $h^{-1}(v)=v^{-\frac{1}{\alpha\eta}}$. Furthermore,
\begin{eqnarray*}
 b_2^{\leftarrow}(t)\P(Y_1>yt,Y_2>t)
    &=&b_2^{\leftarrow}(t)\ov F_1(yt)^\eta\ov F_2(t).
\end{eqnarray*}
The rest of the proof follows analogously as the proof of  (i) using \Cref{thm:bivCoVaRmain}(c).\\
(b) \, The proof of (b) is analogous to that of part (a) by taking $\eta=1/3$.
\end{proof}

\begin{proof}[Proof of \Cref{CoVaR Gaussian}] 
In the bivariate case we have $I_2=\{1,2\}$, $\gamma_2=\frac{2}{1+\rho}$, $\alpha_2=\frac{2\alpha}{1+\rho}$ and $h_s^S= \frac{1}{1+\rho}$ for $S=\{1,2\}$; cf. \cite[Example 3.8]{das:fasen:2024}. Define 
\begin{eqnarray*}
    b_2^{\leftarrow}(t)&=&(\theta t^{-\alpha})^{-\frac{2}{1+\rho}}(2\alpha \log(t))^{\frac{\rho}{1+\rho}}(2\pi)^{\frac{\rho}{1+\rho}}\frac{(1-\rho)^{\frac{1}{2}}}{(1+\rho)^{\frac{3}{2}}}
    ,\\
    \mu_2((\bz,\infty))&=&(z_1z_2)^{-\frac{\alpha}{1+\rho}}, \quad \bz=(z_1,z_2)^\top\in\R_+^2,\\
    h^{-1}(v)&=&v^{-\frac{1+\rho}{\alpha}}, \quad v>0.
\end{eqnarray*}

A conclusion of \Cref{prop:premain} is that $(Y_1,Y_2)^\top\in \MRV(\alpha_2,b_2,\mu_2,\mathbb{E}^{(2)}_2)$ (note that the constant $ \Upsilon_{\{1,2\}}$ is moved to $b_2$).
The proof follows now by applying the forms of $ b_2,  \mu_2$   to \Cref{thm:bivCoVaRmain}(a).
\end{proof}

\section{Proofs of Section~\ref{sec:bipnet}} \label{Appendix D}

\begin{proof}[Proof of \Cref{thm:main:dfk}]
Define $b_i^*(t)=F_{Z_{(i)}}^{\leftarrow}\left(1-1/t\right)$. Then
\begin{eqnarray}
    \lim_{t\to\infty}\frac{b_i^{\leftarrow}(t)}{b_i^{*\leftarrow}(t)}
    &=& \lim_{t\to\infty} b_i^{\leftarrow}(t)\P(Z_{(i)}>t) \nonumber \\
    &=&\mu_i(\{\bz\in\R_+^d:\, z_{(i)}>1\})=:c_i\in(0,\infty). \label{eq3}
\end{eqnarray}
Thus,
$\bZ \in \MRV(\alpha_{i},b_{i}, \mu_{i},\E^{(i)}_d)$ is equivalent to
$\bZ \in \MRV(\alpha_{i},b_{i}^*, \mu_{i}^*,\E^{(i)}_d)$ with \linebreak 
$\mu_i^*=\frac{1}{c_i}\mu_i$. Furthermore, due to \eqref{eq3}
we have as well $b_i^*(t)/b_{i+1}^*(t)\to\infty$ as $t\to\infty$, $i=1,\ldots, d-1$. Thus, (a) is a consequence of \cite[Theorem 3.4]{das:fasen:kluppelberg:2022} and (b) of
\cite[Proposition~3.2]{das:fasen:kluppelberg:2022} in combination with \eqref{eq3} and $\bZ \in \MRV(\alpha_{i},b_{i}^*, \mu_{i}^*,\E^{(i)}_d)$ for all $i\in \mathbb{I}_d$.
\end{proof}

\begin{proof}[Proof of \Cref{exchangable}] $\mbox{}$\\
(a) \, First of all, $\tau_{(1,1)}(\bA)<\infty$ and $\tau_{(1,2)}(\bA)=\infty$ a.s. such that $i_1^*=1$ and $\P(\Omega_1^{(1)})=1$.
Since $[\bzero,\bx]^c\in\mathcal{B}(\mathbb{E}_2^{(1)})$ we obtain by the definition of $\overline{\mu}_1$ that
\begin{eqnarray} \label{eq1}
    \overline{\mu}_1([\bzero,\bx]^c)=\E(\mu_1(\bA^{-1}([\bzero,\bx]^c))).
\end{eqnarray}
Furthermore, from \cite[Remark 7]{das:fasen:kluppelberg:2022}, we already know that due to
$\bZ \in \MRV(\alpha_{i},b_{i}, \mu_{i},\E^{(i)}_d)$, $i=1,2$, with $b_1(t)/b_{2}(t)\to \infty$
the support of $\mu_1$ is restricted to
\begin{eqnarray} \label{support}
    \left\{\bz\in\R_+^d:\, z_{(2)}=0\right\}\backslash\{\bzero\}
    =\bigcup_{\ell=1}^d\left\{\bz\in\R_+^d:\,z_\ell>0, z_j=0 ,\,\forall\,j\in\mathbb{I}_d\backslash\{\ell\}\right\}=:\bigcup_{\ell=1}^d\mathbb{T}_\ell,\quad\quad\quad
\end{eqnarray}
which is a disjoint union. Together with \eqref{eq1} this implies that
\begin{eqnarray*}
    \overline{\mu}_1([0,\bx]^c)
    &=&\E(\mu_1(\bA^{-1}([0,\bx]^c)))= \sum_{\ell=1}^d\E(\mu_1(\bA^{-1}([0,\bx]^c)\cap \mathbb{T}_\ell)\\
    &=&\sum_{\ell=1 }^d \E\left[\mu_1\left(\left\{\bz\in\R_+^d:a_{1\ell}z_\ell>x_1 \text{ or }a_{2\ell}z_\ell>x_2\right\}\right)\right]\\
    &=&\sum_{\ell=1 }^d \E\left[\mu_1\left(\left\{\bz\in\R_+^d:\,\max\left(\frac{a_{1\ell}}{x_1},\frac{a_{2\ell}}{x_2}\right)z_\ell>1\right\}\right)\right].
\end{eqnarray*}
(b) \, The set $(\bx,\infty)\in\mathcal{B}(\mathbb{E}_2^{(2)})$. When $k=2$ we have
\begin{eqnarray*}
    \Omega_1^{(2)}=\left\{\min_{\ell\in\mathbb{I}_d}\{a_{1\ell},a_{2\ell}\}>0\right\}
    \quad \text{ and }  \quad
    \Omega_2^{(2)}=\left\{\min_{\ell\in\mathbb{I}_d}\{a_{1\ell},a_{2\ell}\}=0\right\}.
\end{eqnarray*}
Now by our assumption $\P(\Omega_1^{(2)})>0$ and hence,  $i_2^*=1$. A consequence of the definition of $\overline{\mu}_2$ and $(\bx,\binfty)\in\mathcal{B}(\mathbb{E}_2^{(2)})$ is then
\begin{eqnarray*}
    \overline{\mu}_2((\bx,\binfty))&=&
    \E(\mu_1(\bA^{-1}((\bx,\binfty))\cap \Omega_1^{(2)}))\\
    &=&\sum_{\ell=1}^d
    \E\left[\mu_1\left(\left\{\bz\in\R_+^d:a_{1\ell}z_\ell>x_1,a_{2\ell}z_\ell>x_2\right\}\cap \Omega_1^{(2)}\right)\right]\\
    &=&\sum_{\ell=1 }^d \E\left[\mu_1\left(\left\{\bz\in\R_+^d:\,\min\left(\frac{a_{1\ell}}{x_1},\frac{a_{2\ell}}{x_2}\right)z_\ell>1\right\}\cap \Omega_1^{(2)}\right)\right]\\
    &=&\sum_{\ell=1 }^d \E\left[\mu_1\left(\left\{\bz\in\R_+^d:\,\min\left(\frac{a_{1\ell}}{x_1},\frac{a_{2\ell}}{x_2}\right)z_\ell>1\right\}\right)\right].
\end{eqnarray*}
\item[(c)] By assumption we have $\P(\Omega_1^{(2)})=0$ such that $i_2^*=2$
and $\P(\Omega_2^{(2)})=1$. Hence, the definition of $\overline{\mu}_2$ implies that
\begin{eqnarray*}
     \overline{\mu}_2((\bx,\binfty))
    =\E(\mu_2(\bA^{-1}((\bx,\binfty))\cap \Omega_2^{(2)}))
    =\E(\mu_2(\bA^{-1}((\bx,\binfty)))).
\end{eqnarray*}
Again from \cite[Remark 7]{das:fasen:kluppelberg:2022}, we already know that
the support of $\mu_2$ is restricted to
\begin{eqnarray} \label{support2}
    \lefteqn{\hspace*{-1cm}\left\{\bz\in\R_+^d:\, z_{(3)}=0\right\}\Big\backslash
    \left\{\bz\in\R_+^d:\, z_{(2)}=0\right\}} \nonumber \\
    &&=\bigcup_{1\leq \ell< j\leq d}\left\{\bz\in\R_+^d:\,z_\ell>0, z_j>0, z_m=0,\, \forall\, m\in\mathbb{I}_d\backslash\{\ell, j\}\right\} \nonumber \\
    &&=:\bigcup_{1\leq \ell<j\leq d}\mathbb{T}_{\ell,j}, \nonumber
\end{eqnarray}
such that
\begin{eqnarray*}
     \overline{\mu}_2((\bx,\binfty))
       &=&\sum_{1\leq \ell<j\leq d}\E(\mu_2(\bA^{-1}((\bx,\binfty))\cap \mathbb{T}_{\ell, j}))\\
       &=&\sum_{\genfrac{}{}{0pt}{}{\ell, j=1}{\ell\neq j}}^d
        \E\left[\mu_2\left(\left\{\bz\in\R_+^d:
        \frac{a_{1\ell}}{x_1}z_{\ell}>1,
        \frac{a_{2j}}{x_2}z_j>1 \right\}\right)\right]
\end{eqnarray*}
completing the proof.
\end{proof}

\begin{proof}[Proof of \Cref{prop:condprobmain}]
Since $Z_1,\ldots,Z_d$ are completely tail equivalent, we have
\begin{eqnarray*}
        \mu_{1}(\{\bz\in\R_+^d: z_\ell>1\})
        =\mu_{1}(\{\bz\in\R_+^d: z_1>1\}), \quad \ell \in\mathbb{I}_d,
\end{eqnarray*}
and by  the choice of $b_1$
\begin{eqnarray*}
    \mu_1([\bzero,\bone]^c)&=&\mu_1\left([\bzero,\bone]^c\cap \bigcup_{\ell=1}^d\mathbb{T}_\ell\right)=\sum_{\ell=1}^d\mu_{1}(\{\bz\in\R_+^d: z_\ell>1\})
    =d\mu_{1}(\{\bz\in\R_+^d: z_1>1\}).
\end{eqnarray*}
This implies
\begin{eqnarray} \label{eq2b}
    \mu_{1}(\{\bz\in\R_+^d: z_\ell>1\})=\frac{\mu_1([\bzero,\bone]^c)}{d},\quad \ell \in\mathbb{I}_d.
\end{eqnarray}
Furthermore, $\mu_1$ is homogeneous of order $-\alpha$  such that
together with \Cref{exchangable}(a) we receive
\begin{eqnarray*}
            \overline{\mu}_1([\bzero,\bx]^c)&=&\sum_{\ell=1 }^d \E\left[\mu_1\left(\left\{\bz\in\R_+^d:\,\max\left(\frac{a_{1\ell}}{x_1},\frac{a_{2\ell}}{x_2}\right)z_\ell>1\right\}\right)\right]\\
            &=&\sum_{\ell=1 }^d \E\left[\max\left(\frac{a_{1\ell}}{x_1},\frac{a_{2\ell}}{x_2}\right)^\alpha\right]
            \mu_1\left(\left\{\bz\in\R_+^d:\, z_\ell>1\right\}\right)\\
           % &=&\mu_1\left(\left\{\bz\in\R_+^d:\, z_1>1\right\}\right)\sum_{j=1 }^d \E\left[\max\left(\frac{a_{1j}}{x_1},\frac{a_{2j}}{x_2}\right)^\alpha\right]\\
            &=&\frac{\mu_1([\bzero,\bone]^c)}{d}\sum_{\ell=1 }^d \E\left[\max\left\{\frac{a_{1\ell}}{x_1},\frac{a_{2\ell}}{x_2}\right\}^\alpha\right].
        \end{eqnarray*}
Finally, the almost surely non-zero rows of $\bA$ imply that the right-hand side above is strictly positive, and hence, $\overline\mu_1$ is a non-null measure. From $i_1^*=1$ and \Cref{thm:main:dfk} we then conclude
that  $(X_1,X_2)^\top\in\MRV(\alpha_1,b_1,\overline{\mu}_1,\E_2^{(1)})$.\\
(b) From \Cref{exchangable}(b) we already know that $i_2^*=2$
and
\begin{eqnarray*}
            \overline{\mu}_2(\left(\bx,\binfty\right))
                =\sum_{\ell=1 }^d \E\left[\mu_1\left(\left\{\bz\in\R_+^d:\,\min\left(\frac{a_{1\ell}}{x_1},\frac{a_{2\ell}}{x_2}\right)z_\ell>1\right\}\right)\right].
\end{eqnarray*}
Again the homogeneity of $\mu_1$ of order $-\alpha$
and \eqref{eq2b} gives
\begin{eqnarray*}
    \overline\mu_2(\left(\bx,\binfty\right))
                =\frac{\mu_1([\bzero,\bone]^c)}{d}\sum_{\ell=1}^d \E\left[\min\left\{\frac{a_{1\ell}}{x_1},\frac{a_{2\ell}}{x_2}\right\}^{\alpha}%\mathds{1}_{\{a_{1j}>0,a_{2j}>0\}}
                \right],
\end{eqnarray*}
which is strictly positive because of $\max_{\ell\in \mathbb{I}_d}\P(\min\{a_{1\ell},a_{2\ell}\}>0)>0$. Then a consequence of \Cref{thm:main:dfk}
is $(X_1,X_2)^\top\in\MRV(\alpha_1,b_1,\overline{\mu}_2,\E_2^{(2)})$. Finally,
\begin{eqnarray*}
    \P(X_1>tx_1|X_2>tx_2) =\frac{\P(X_1>tx_1,X_2>tx_2)}{\P(X_2>tx_2)}
    &\sim&\frac{\overline\mu_2((\bx,\infty))}{\overline\mu_1( \left[0,\infty\right)\times (x_2,\infty))}\\
    &=&x_2^{\alpha}
            \frac{\sum\limits_{\ell=1}^d\E\left[\min\left\{\frac{a_{1\ell}}{x_1},\frac{a_{2\ell}}{x_2}\right\}^{\alpha}\right]}{\sum\limits_{\ell=1}^d\E\left[
            a_{2\ell}^{\alpha}
            \right]},
\end{eqnarray*}
as $t\to\infty$ by the former results. The asymptotic behavior of the $\CoVaR$ is then a consequence of \Cref{thm:bivCoVaRmain}(a) and \eqref{def: h}.
\begin{comment}
 Indeed,
%\begin{eqnarray*}
%   \textcolor{red}{ h(y)=\frac{\mu_1([\bzero,\bone]^c)}{d}\sum_{\ell=1}^d \E\left[\min\left\{\frac{a_{1\ell}}{y},a_{2\ell}\right\}^{\alpha}
%                \right].}
%\end{eqnarray*}
by assumption, the non-zero components of $\bA$ have bounded support, bounded away from zero. Consequently there exists a $y_0>0$ such that
\begin{eqnarray*}
    h(y)=y^{-\alpha}\frac{\mu_1([\bzero,\bone]^c)}{d}\sum_{\ell=1}^d \E\left[a_{1\ell}^{\alpha}\right]
                , \quad \forall\, y\geq y_0,
\end{eqnarray*}
and $h^{-1}:(h({y_0}),\infty)\to(0,{y_0})$ is
\begin{eqnarray*}
    h^{-1}(v)=v^{-\frac{1}{\alpha}}\left[\frac{\mu_1([\bzero,\bone]^c)}{d}\sum_{\ell=1}^d \E\left[a_{1\ell}^{\alpha}\right]\right]^{\frac{1}{\alpha}}.
\end{eqnarray*}
Finally, by \Cref{thm:bivCoVaRmain}(a) as $\gamma\downarrow 0$,
\begin{eqnarray*}
        \CoVaR_{\upsilon|\gamma}(X_1|X_2)&\sim&{\VaR_{\gamma}(X_1) h^{-1}(\upsilon \gamma b_2^{\leftarrow}(\VaR_{\gamma}(X_1))}
    \sim \upsilon^{-\frac{1}{\alpha}} \VaR_{\gamma}(X_1).%\left({\sum\limits_{\ell=1}^d\E\big[ a_{1\ell}^{\alpha} \big]}\Bigg/{\sum\limits_{j=1}^d\E\big[ a_{2j}^{\alpha} \big]}\right)^{\frac{1}{\alpha}}.
\end{eqnarray*}
%we receive the last statement.
\end{comment}
\end{proof}
         
\begin{proof}[Proof of \Cref{prop:condprobiid}]
%\VF{As in \citet[Example 2.1]{das:fasen:kluppelberg:2022}} (who assumed Pareto tails) we can conclude that
First, note that
\begin{eqnarray*}
   \mu_2(\{\bv\in\R_+^d:v_\ell>z_\ell, v_j>z_j\})=%\VF{\frac{1}{\left(d\atop 2\right)}}
        (z_\ell z_j)^{-\alpha}, \quad \bz\in(\bzero,\binfty).
\end{eqnarray*}
Plugging this in \Cref{exchangable}(c) results in
    \begin{eqnarray*}
        \overline{\mu}_2((\bx,\infty))&=&\sum_{\ell,j=1}^d
        \E\left[\mu_2\left(\left\{\bz\in\R_+^d:
        \frac{a_{1{\ell}}}{x_1}z_{\ell}>1,
        \frac{a_{2j}}{x_2}z_j>1 \right\}\right)\right]%\\
       % &=&%\frac{1}{\binom{d}{2}}
       % \sum_{\genfrac{}{}{0pt}{}{j,\ell=1}{j\not=\ell}}^d \E\left[\left(\frac{a_{1j}}{x_1} \frac{a_{2\ell}}{x_2}\right)^{\alpha}\right]
       =%\frac{1}{\binom{d}{2}}
        \sum_{\ell,j=1}^d \E\left[\left(\frac{a_{1\ell}}{x_1} \frac{a_{2j}}{x_2}\right)^{\alpha}\right],
    \end{eqnarray*}
where we used in the last step the assumption $\max_{\ell\in \mathbb{I}_d}\P(\min\{a_{1\ell},a_{2\ell}\}>0)=0$.  Furthermore, $b_1^{\leftarrow}(t)=1/\ov F_\alpha(t)$ and $b_2^{\leftarrow}(t)=1/\ov F_\alpha(t)^2$ such that as $t\to\infty$,
\begin{eqnarray*}
    \P(X_1>tx_1|X_2>tx_2)&=&\frac{b_1^{\leftarrow}(t)}{b_2^{\leftarrow}(t)}\frac{b_2^{\leftarrow}(t)\P(X_1>tx_1,X_2>tx_2)}{b_1^{\leftarrow}(t)\P(X_2>tx_2)}\\
    &\sim& \ov F_\alpha(t)\frac{\ov\mu_2(\bx,\infty)}{\ov \mu_1((x_1,\infty)\times \R_+)}\\
    &=&  \ov F_\alpha(t)x_2^{\alpha}
            \frac{\sum\limits_{\ell, j=1 }^d \E\left[\left(\frac{a_{1\ell}}{x_1} \frac{a_{2j}}{x_2}\right)^{\alpha} \right]}{\sum\limits_{j=1}^d\E\left[
            a_{2j}^{\alpha}
            \right]}.
\end{eqnarray*}
Again the asymptotic behavior of the $\CoVaR$ follows then from \Cref{thm:bivCoVaRmain}(a) and \eqref{def: h} with similar arguments as in \Cref{prop:condprobmain}.
\end{proof}

\begin{proof}[Proof of \Cref{prop:condprobMO}]
Similarly, as in  \Cref{prop:condprobiid}, the proof of the MRV is a combination of \Cref{exchangable}(c) and \Cref{prop:pmocmrv}, and the asymptotic behavior of the CoVaR can be derived by \Cref{thm:bivCoVaRmain}(a).
\end{proof}

\begin{proof}[Proof of \Cref{prop:condprobGC}] 
 For any set $S=\{\ell,j\}\subseteq \mathbb{I}_d$ with $|S|=2$ we have
\begin{eqnarray*}
    \Sigma_S^{-1}=\frac{1}{1-\rho_{\ell j}^2}\left(\begin{array}{cc}
        1 & -\rho_{j\ell}\\ -\rho_{\ell j} & 1
    \end{array}\right),
\end{eqnarray*}
$\Sigma_S^{-1}\bone_S>\bzero_S$ and $\bone_S^\top\Sigma_S^{-1}\bone_S=\frac{2}{1+\rho_{\ell j}}$. Thus, the set $\mathcal{S}_2$ as defined in \Cref{prop:premain}
is equal to
\begin{eqnarray*}
    \mathcal{S}_2=\left\{\{\ell, j\}\subseteq \mathbb{I}_d:\frac{2}{1+\rho_{\ell j}}=\frac{2}{1+\rhom}\right\}
        =\{\{\ell, j\}\subseteq \mathbb{I}_d: \rho_{\ell j}=\rhom\},
\end{eqnarray*}
$|I(\Sigma_S)|=|I_S|=|S|=2$ and $\gamma_S=\frac{2}{1+\rhom}$ for any $S\in \mathcal{S}_2$. Finally, with $\alpha_2$, $b_2$ as given above, $z_{\ell}, z_j>0$, and
\begin{eqnarray*}
    \mu_2(\{\bv\in\R_+^d: v_{\ell}>z_{\ell},v_j>z_j\})=\left\{
    \begin{array}{ll}
             \frac{1}{2\pi} \frac{(1+\rhom)^{3/2}}{(1-\rhom)^{1/2}}(z_{\ell}z_j)^{-\frac{\alpha}{1+\rho_{\ell j}}},\quad\quad & \text{ if }\rho_{\ell j}=\rhom,\\[2mm]
            0, & \text{ otherwise},
    \end{array}
    \right.
\end{eqnarray*}
we have due to \Cref{prop:premain} that $\bZ\in\MRV(\alpha_2,b_2,\mu_2,\mathbb{E}_d^{(2)})$. %Note that in the bivariate case $\Upsilon_S$ does not depend on $S$ and is therefore included in the expression of $b_2$ in contrast to the original version of \Cref{prop:premain} (cf. proof of  \Cref{CoVaR Gaussian}).
The representation of $\mu_2$ and \Cref{exchangable}(c)
imply that for $\bx\in(\bzero,\binfty)$, we have
\begin{eqnarray} \label{eq4}
 \overline{\mu}_2((\bx,\infty))=  \frac{1}{2\pi} \frac{(1+\rhom)^{3/2}}{(1-\rhom)^{1/2}}\sum_{(\ell, j):\,\rho_{\ell j}=\rhom}\E\left[a_{1\ell}^{\frac{\alpha}{1+\rhom}}a_{2j}^{\frac{\alpha}{1+\rhom}}\right](x_1x_2)^{-\frac{\alpha}{1+\rhom}}.
    \end{eqnarray}
(a) \, If $\rho^*=\rhom$, then of course $\overline{\mu}_2((\bx,\infty))>0$ and from \Cref{thm:main:dfk}, it follows that $(X_1,X_2)^\top\in\MRV(\alpha_2,b_2,\overline{\mu}_2,\E_2^{(2)})$. Finally, the asymptotic behavior of the  conditional probability can be calculated as in the previous statements and the $\CoVaR$ asymptotics follow from \Cref{thm:bivCoVaRmain}(a).\\
(b) \,  If $\rho^*<\rhom$, then \eqref{eq4}
results in $\overline{\mu}_2((\bx,\infty))=0$. Then define
$${M}:=\{\ell\in\mathbb{I}_d: \rho_{\ell j}\leq \rho^* \,\, \forall\,\, j\in\mathbb{I}_d\}.$$
For any $\ell\in\mathbb{I}_d\backslash M$ there exists an $j\in\mathbb{I}_d$, $j\not=\ell$ with $ \rho_{\ell j}>\rho^*$  and hence, \Cref{Remark 4.12} gives that the $\ell$-th column of $\bA$ is a.s. a zero column. Therefore, we define $\bA_{\{1,2\},M}$ by deleting the $\ell$-th columns in $\bA$ for all $\ell\in\mathbb{I}_d\backslash M$ and similarly, we define $\bZ_{M}$. Then $$\bX=\bA\bZ=\bA_{\{1,2\},M}\bZ_{M}.$$ But $\bZ_{M}\in\PGC(\alpha, \theta, \Sigma_{M})$ and $\rho^*=\max_{\ell, j\in M, \ell\neq j} \rho_{\ell j}$. Hence, the model $\bX=\bA_{\{1,2\},M}\bZ_{M}$ satisfies the assumption of (a) and an application of (a) gives the statement.
\end{proof}

\section{Risk contagion with more than two portfolios}\label{sec:app:morethantwo} 
%\marginpar{\VF{Shall we move this section to the appendix?}}
In \Cref{subsec:morethantwo} we obtained asymptotic conditional tail probabilities and CoVaR asymptotics comparing the risk of high negative returns for one single entity in the system vs. the worst (or at least one other entity in the entire system) having poor returns for a general regularly varying underlying distribution $\bZ$. In this section, we detail results for particular choices of $\bZ$.
\begin{proposition}[i.i.d. case] \label{prop:condproboneandmaxiid}
Let $\bZ\in\R_+^d$ be a random vector with i.i.d. components $Z_1,\ldots,Z_d$ with distribution function $F_\alpha$ where $\ov F_\alpha \in\RV_{-\alpha}$, $\alpha>0$, $b_1(t)=F_{\alpha}^{\leftarrow}\left(1-1/t\right)$  and $b_i^\leftarrow(t)=b_1^{\leftarrow}(t)^{i}$.
Further, let $\bA \in \R_+^{q \times d}$ be a  random matrix  with almost surely no trivial rows, independent of $\bZ$ and
 suppose for fixed $k\in \mathbb{I}_q$ we have $\max_{m\in\mathbb{I}_q\setminus\{k\}} \max_{\ell\in\mathbb{I}_d} \P(\min(a_{k\ell}, a_{m\ell})>0)=0$ and  $\E\|\bA\|^{2\alpha+\epsilon}<\infty$ for some $\epsilon>0$. Define $\bX=\bA\bZ$. Then $\bY=(Y_1,Y_2)^\top=(X_k, \max_{m\in \mathbb{I}_q\setminus\{k\}} X_m )^\top \in\MRV(2\alpha,b_2,{\mu}_2^*,\E_2^{(2)})$ where
    \begin{eqnarray*}
        {\mu}_2^*((\bx,\infty))=%\frac{1}{\binom{d}{2}}
        (x_1x_2)^{-\alpha}\sum\limits_{\ell,j=1}^d  \E\Big[a_{k\ell}^{\alpha} \max_{m\in\mathbb{I}_q\setminus\{k\}} \left\{a_{mj}^{\alpha}\right\}\Big], \quad \bx\in(\bzero,\binfty).
    \end{eqnarray*}
Moreover, as $t\to\infty$,
        \begin{align*}
            \P(Y_1>tx_1|Y_2>tx_2)&\sim  (b_1^{\leftarrow}(t))^{-1}\, x_1^{-\alpha} \,\frac{\sum\limits_{\ell,j=1}^d  \E\Big[a_{kj}^{\alpha} \max_{m\in\mathbb{I}_q\setminus\{k\}} \left\{a_{m\ell}^{\alpha}\right\}\Big]}{\sum\limits_{\ell=1}^d\E\big[\max_{m\in \mathbb{I}_q\setminus \{k\}} a_{m\ell}^{\alpha} \big]}, \\ %\intertext{and,}
            \P(Y_2>tx_2|Y_1>tx_1)&\sim  (b_1^{\leftarrow}(t))^{-1}\, x_2^{-\alpha}\, \frac{\sum\limits_{\ell,j=1}^d  \E\Big[a_{kj}^{\alpha} \max_{m\in\mathbb{I}_q\setminus\{k\}} \left\{a_{m \ell}^{\alpha}\right\}\Big]}{ \sum\limits_{j=1}^d\E\big[ a_{k j}^{\alpha} \big]}.
        \end{align*}
        Additionally,  for $0<\upsilon<1$ we have as $\gamma\downarrow 0$,
        \begin{eqnarray*}
            \CoVaR_{\upsilon\gamma|\gamma}(Y_1|Y_2)&\sim& \upsilon^{-\frac{1}{\alpha}} \left(\frac{\sum\limits_{\ell,j=1}^d  \E\Big[a_{kj}^{\alpha} \max_{m\in\mathbb{I}_q\setminus\{k\}} \left\{a_{m\ell}^{\alpha}\right\}\Big]}{\left(\sum\limits_{j=1}^d  \E\Big[a_{kj}^{\alpha}\big]\sum\limits_{\ell=1}^d\E\big[\max_{m\in \mathbb{I}_q\setminus \{k\}} a_{m\ell}^{\alpha} \big]\right)}\right)^{\frac{1}{\alpha}}\VaR_{\gamma}(Y_1),\\
        \CoVaR_{\upsilon\gamma|\gamma}(Y_2|Y_1)&\sim& \upsilon^{-\frac{1}{\alpha}} \left(\frac{\sum\limits_{\ell,j=1}^d  \E\Big[a_{k j}^{\alpha} \max_{m\in\mathbb{I}_q\setminus\{k\}} \left\{a_{m \ell}^{\alpha}\right\}\Big]}{ \sum\limits_{j=1}^d\E\big[ a_{k j}^{\alpha} \big]\sum\limits_{\ell=1}^d\E\big[\max_{m\in \mathbb{I}_q\setminus \{k\}} a_{m\ell}^{\alpha} \big]}\right)^{\frac{1}{\alpha}}\VaR_{\gamma}(Y_2).
        \end{eqnarray*}
         Finally,  $\text{ECI}(Y_1|Y_2)=\text{ECI}(Y_2|Y_1)=1$.
\end{proposition}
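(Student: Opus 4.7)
The plan is to proceed in parallel with \Cref{prop:condprobiid,prop:condprobwithoneandmax} by combining \Cref{thm:main:dfk} with the mapping theorem of \citet[Theorem~2.3]{lindskog:resnick:roy:2014} for the continuous map $T:\R_+^q\to\R_+^2$, $T(\bx)=(x_k,\max_{m\in\mathbb{I}_q\setminus\{k\}}x_m)$, since $\bY=T(\bA\bZ)$. In the i.i.d. Pareto-tailed set-up we have $\bZ\in\MRV(i\alpha,b_i,\mu_i,\E_d^{(i)})$ for every $i\in\mathbb{I}_d$, with $\mu_i$ concentrated on the $i$-dimensional coordinate faces and $\mu_i(\{\bz:z_{\ell_1}>u_1,\ldots,z_{\ell_i}>u_i\})=\prod_{r=1}^{i}u_r^{-\alpha}$ for distinct $\ell_1,\ldots,\ell_i$. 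First I would verify the index $i_2^*=2$ (in the notation of \Cref{thm:main:dfk} with $k=2$) for the tail set $C=T^{-1}((\bx,\binfty))=\{\bu\in\R_+^q:u_k>x_1,\,\max_{m\neq k}u_m>x_2\}\in\mathcal{B}(\E_q^{(2)})$. As in the derivation of \Cref{exchangable}(c), this reduces to checking $\bA^{-1}(C)\cap\mathrm{supp}(\mu_1)=\emptyset$ a.s., which is immediate from $\max_{m\neq k}\max_\ell\P(\min(a_{k\ell},a_{m\ell})>0)=0$: no single large $z_\ell$ can inflate both $a_k^{\top}\bz$ and $\max_{m\neq k}a_m^{\top}\bz$.

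Next I would compute $\mu_2^*((\bx,\binfty))=\E[\mu_2(\bA^{-1}(C))]$ by decomposing over the disjoint supports $\mathbb{T}_{\ell,j}=\{\bz\in\R_+^d:z_\ell,z_j>0,\,z_m=0\,\forall m\notin\{\ell,j\}\}$, $\ell\neq j$, where $\mu_2$ lives. On $\mathbb{T}_{\ell,j}$ we have $a_r^{\top}\bz=a_{r\ell}z_\ell+a_{rj}z_j$ for every row $r$; the structural hypothesis forces $a_{m\ell}=0$ a.s. for all $m\neq k$ whenever $a_{k\ell}>0$, and symmetrically for $j$. A short case analysis on $(a_{k\ell},a_{kj})$ shows that the only non-vanishing contribution is the ``disjoint rows'' case, reducing $\bA^{-1}(C)\cap\mathbb{T}_{\ell,j}$ to the rectangle $\{z_\ell>x_1/a_{k\ell},\,z_j>x_2/\max_{m\neq k}a_{mj}\}$. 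The product form of $\mu_2$ on $\mathbb{T}_{\ell,j}$ together with Fubini then yields a contribution of $(x_1x_2)^{-\alpha}\E[a_{k\ell}^\alpha(\max_{m\neq k}a_{mj})^\alpha]$; summing over $\ell,j\in\mathbb{I}_d$ (the $\ell=j$ terms vanish by the same hypothesis) gives the stated formula.

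To finish, the marginal tails follow from \Cref{thm:main:dfk}(a) on $\E_q^{(1)}$ applied to $\bX=\bA\bZ$ (Breiman-type computation as in \Cref{exchangable}(a)) and then pushing forward under $T$: $\P(Y_1>tx_1)\sim (b_1^\leftarrow(t))^{-1}x_1^{-\alpha}\sum_\ell\E[a_{k\ell}^\alpha]$ and $\P(Y_2>tx_2)\sim (b_1^\leftarrow(t))^{-1}x_2^{-\alpha}\sum_\ell\E[\max_{m\neq k}a_{m\ell}^\alpha]$. Dividing the joint tail by each marginal yields the conditional probability asymptotics. For CoVaR I would invoke \Cref{thm:bivCoVaRmain}(a) with $g(\gamma)=\gamma$: since $b_2^\leftarrow(t)=(b_1^\leftarrow(t))^2$ and $b_1^\leftarrow(\VaR_\gamma(Y_2))\sim \gamma^{-1}\sum_\ell\E[\max_{m\neq k}a_{m\ell}^\alpha]$, the quantity $g(\gamma)\gamma\,b_2^\leftarrow(\VaR_\gamma(Y_2))$ converges to the positive constant $(\sum_\ell\E[\max_{m\neq k}a_{m\ell}^\alpha])^2$, so \eqref{cond:bivcova} holds for every $\upsilon\in(0,1)$; the pure power form $h(y)=C_h y^{-\alpha}$ gives $h^{-1}(v)=(C_h/v)^{1/\alpha}$, from which the stated rate follows, and $g(t^{-1})=t^{-1}\in\RV_{-1}$ yields $\text{ECI}(Y_1|Y_2)=1$. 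The symmetric case $\CoVaR_{\upsilon\gamma|\gamma}(Y_2|Y_1)$ is identical after swapping the roles of $Y_1$ and $Y_2$. The main technical obstacle is that $T$ is non-linear, so \Cref{thm:main:dfk} does not apply directly to $\bY$; threading everything through the continuous mapping theorem and then exploiting the hypothesis on $\bA$ to push the preimage $\bA^{-1}(T^{-1}((\bx,\binfty)))$ onto the asymptotically-independent part of $\R_+^d$ is where the work sits.
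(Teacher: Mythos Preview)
Your proposal is correct and follows essentially the same route the paper outlines: the paper itself states (for the parent \Cref{prop:condprobwithoneandmax}) that the first part is a direct application of \Cref{thm:main:dfk} together with the mapping theorem of \citet[Theorem~2.3]{lindskog:resnick:roy:2014}, and that the CoVaR part follows from \Cref{thm:bivCoVaRmain}; for the present i.i.d.\ variant it simply says the argument proceeds analogously to \Cref{prop:condprobiid}. Your detailed expansion---identifying $i_2^*=2$, decomposing $\bA^{-1}(T^{-1}((\bx,\binfty)))$ over the two-dimensional faces $\mathbb{T}_{\ell,j}$, using the sparsity hypothesis to reduce each piece to a rectangle, and then invoking \Cref{thm:bivCoVaRmain}(a) with $g(\gamma)=\gamma$---is exactly how one would fill in that sketch.
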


\begin{proposition}[Marshall-Olkin dependence] \label{prop:condproboneandmaxMO}
Let $\bZ \in \PMOC(\alpha,\theta,\Lambda)$
and \linebreak $\bA \in \R_+^{q \times d}$ be a  random matrix  with almost surely no trivial rows, independent of $\bZ$ and for fixed $k\in \mathbb{I}_q$ we have
$\max_{m\in\mathbb{I}_q\setminus\{k\}} \max_{\ell\in\mathbb{I}_d} \P(\min(a_{k\ell}, a_{m\ell})>0)=0$.
Further, let $\bX=\bA\bZ$ and $\bY=(Y_1,Y_2)^\top=(X_k, \max_{m\in \mathbb{I}_q\setminus\{k\}} X_m )^\top$.
\begin{itemize}
    \item[(a)] Suppose $\bZ\in\PMOC(\alpha,\theta,\lambda^{=})$   and  $\E\|\bA\|^{\frac{3\alpha}{2}+\epsilon}<\infty$ for some $\epsilon>0$. Then \linebreak $\bY \in\MRV(\alpha_2,b_2,{\mu}_2^*,\E_2^{(2)})$  where $\alpha_2=
    \frac{3\alpha}{2}$, {$b_2(t)=\theta^{\frac{1}{\alpha}} t^{\frac{2}{3\alpha}}$} and for $\bx\in(\bzero,\binfty)$, we have
    \begin{eqnarray*}
        {\mu}_2^*((\bx,\infty))=%\frac{1}{\binom{d}{2}}
       \sum\limits_{\ell,j=1}^d\ \E\left[\min\left\{\frac{a_{k\ell}}{x_1}, \max_{m\in \mathbb{I}_q\setminus\{k\}} \left\{\frac{a_{m j}}{x_2}\right\} \right\}^{\alpha}\max\left\{\frac{a_{k\ell }}{x_1},  \max_{m\in \mathbb{I}_q\setminus\{k\}} \left\{\frac{a_{mj}}{x_2}\right\}\right\}^{\frac{\alpha}{2}}\right].
    \end{eqnarray*}
    Moreover, as $t\to\infty$,
        \begin{align*}
        \P(Y_1>tx_1|Y_2>tx_2) & \sim (\theta t^{-\alpha})^{\frac{1}{2}}x_2^{\alpha}\,\mu_2^*((\bx,\infty))\,\Big(\sum\limits_{j=1}^d\E\big[\max_{m\in \mathbb{I}_q\setminus \{k\}} a_{m j}^{\alpha} \big]\Big)^{-1}, \\%\intertext{and,}
         \P(Y_2>tx_2|Y_1>tx_1) & \sim (\theta t^{-\alpha})^{\frac{1}{2}}x_1^{\alpha}\,\mu_2^*((\bx,\infty))\,\Big( \sum\limits_{\ell=1}^d\E\big[ a_{k\ell}^{\alpha} \big]\Big)^{-1}.
        \end{align*}
     Additionally, if the non-zero components of $\bA$ have bounded support, bounded away from zero, then there exists $0<\upsilon^*_1<\upsilon^*_2<\infty$ such that for all $0<\upsilon<\upsilon_1^*$, as $\gamma\downarrow 0$ we have
        \begin{eqnarray*}
            \CoVaR_{\upsilon\gamma^{\frac{1}{2}}|\gamma}(Y_1|Y_2)\sim \upsilon^{-\frac{1}{\alpha}} \left(\frac{ \sum\limits_{\ell,j=1}^d  \E\left[a_{k\ell}^\alpha \max_{m\in \mathbb{I}_q\setminus\{k\}} a_{m j}^{\alpha/2}\right]}{\sum\limits_{\ell=1}^d\E\big[ a_{k\ell}^{\alpha} \big]{\left(\sum\limits_{j=1}^d\E\big[\max_{m\in \mathbb{I}_q\setminus \{k\}} a_{m j}^{\alpha}\big]\right)^{\frac{1}{2}}}}\right)^{\frac{1}{\alpha}}\VaR_{\gamma}(Y_1),
        \end{eqnarray*} 
        and for all $\upsilon_2^*<\upsilon<\infty$, as $\gamma\downarrow 0$, we have 
        \begin{eqnarray*}
            \CoVaR_{\upsilon\gamma^{\frac{1}{2}}|\gamma}(Y_1|Y_2)\sim \upsilon^{-\frac{2}{\alpha}} \left(\frac{ \sum\limits_{\ell,j=1}^d  \E\left[a_{k\ell}^{\alpha/2} \max_{m\in \mathbb{I}_q\setminus\{k\}} a_{m j}^{\alpha}\right]}{\left(\sum\limits_{\ell=1}^d\E\big[ a_{k\ell}^{\alpha} \big]\right)^{\frac{1}{2}}{\left(\sum\limits_{j=1}^d\E\big[\max_{m\in \mathbb{I}_q\setminus \{k\}} a_{m j}^{\alpha}\big]\right)}}\right)^{\frac{2}{\alpha}}\VaR_{\gamma}(Y_1).
        \end{eqnarray*} 
       Finally,  $\text{ECI}(Y_1|Y_2)=\text{ECI}(Y_2|Y_1)=2$.   
    \item[(b)] Suppose $\bZ\in\PMOC(\alpha,\theta,\lambda^{\propto})$
    and    $\E\|\bA\|^{\alpha\frac{3d+2}{2(d+1)}+\epsilon}<\infty$ for some $\epsilon>0$. Then  \linebreak $\bY\in\MRV({\alpha}_2,{b}_2,{\mu}_2^*,\E_2^{(2)})$ where ${\alpha}_2=
    \alpha\frac{3d+2}{2(d+1)}$, {${b}_2(t)=\theta^{\frac{1}{\alpha}} t^{\frac{2(d+1)}{(3d+2)\alpha}}$}and for $\bx\in(\bzero,\binfty)$, we have %{$b_2^*=(d\theta t)^{1/\widetilde{\alpha}_2}$ and}
    \begin{eqnarray*}
       {\mu}_2^*((\bx,\infty))=%\frac{1}{\binom{d}{2}}
        \sum\limits_{\ell,j=1}^d \E\left[\min\left\{\frac{a_{k\ell}}{x_1}, \max_{m\in \mathbb{I}_q\setminus\{k\}} \left\{\frac{a_{m j}}{x_2}\right\} \right\}^{\alpha}\max\left\{\frac{a_{k\ell}}{x_1},  \max_{m\in \mathbb{I}_q\setminus\{k\}} \left\{\frac{a_{m j }}{x_2}\right\}\right\}^{\alpha\frac{d}{2(d+1)}}\right],
    \end{eqnarray*}
    Moreover, as $t\to\infty$,
         \begin{align*}
        \P(Y_1>tx_1|Y_2>tx_2)&\sim (\theta t^{-\alpha})^{\frac{d}{2(d+1)}}x_2^{\alpha}\,{\mu}_2^*((\bx,\infty))\,\Big(\sum\limits_{j =1}^d\E\big[\max_{m\in \mathbb{I}_q\setminus \{k\}} a_{m j }^{\alpha} \big]\Big)^{-1},\\
         \P(Y_2>tx_2|Y_1>tx_1) & \sim (\theta t^{-\alpha})^{\frac{d}{2(d+1)}}x_1^{\alpha}\,{\mu}_2^*((\bx,\infty))\,\Big( \sum\limits_{\ell=1}^d\E\big[ a_{k\ell}^{\alpha} \big]\Big)^{-1}.
        \end{align*}
   \end{itemize}
    Additionally, if the non-zero components of $\bA$ have bounded support, bounded away from zero, then there exists $0<\upsilon^*_1<\upsilon^*_2<\infty$ such that for all $0<\upsilon<\upsilon_1^*$, as $\gamma\downarrow 0$, we have 
        \begin{eqnarray*}
            \CoVaR_{\upsilon\gamma^{\frac{d}{2(d+1)}}|\gamma}(Y_1|Y_2)\sim \upsilon^{-\frac{1}{\alpha}}\left(\frac{ \sum\limits_{\ell,j=1}^d  \E\left[a_{k\ell}^\alpha \max_{m\in \mathbb{I}_q\setminus\{k\}} a_{m j}^{\frac{\alpha d}{2(d+1)}}\right]}{\sum\limits_{\ell=1}^d\E\big[ a_{k\ell}^{\alpha} \big]{\left(\sum\limits_{j=1}^d\E\big[\max_{m\in \mathbb{I}_q\setminus \{k\}} a_{m j}^{\alpha}\big]\right)^\frac{d}{2(d+1)}}}\right)^{\frac{1}{\alpha}}\VaR_{\gamma}(Y_1),
        \end{eqnarray*} 
        and for all $\upsilon_2^*<\upsilon<\infty$, as $\gamma\downarrow 0$, we have 
        \begin{eqnarray*}
            \CoVaR_{\upsilon\gamma^{\frac{d}{2(d+1)}}|\gamma}(Y_1|Y_2)\sim \upsilon^{-\frac{2(d+1)}{d\alpha}} \left(\frac{ \sum\limits_{\ell,j=1}^d  \E\left[a_{k\ell}^{\frac{\alpha d}{2(d+1)}} \max_{m\in \mathbb{I}_q\setminus\{k\}} a_{m j}^{\alpha}\right]}{\left(\sum\limits_{\ell=1}^d\E\big[ a_{k\ell}^{\alpha} \big]\right)^{\frac{1}{2(d+1)}}{\sum\limits_{j=1}^d\E\big[\max_{m\in \mathbb{I}_q\setminus \{k\}} a_{m j}^{\alpha}\big]}}\right)^{\frac{2(d+1)}{d\alpha}}\VaR_{\gamma}(Y_1).
        \end{eqnarray*} 
       Finally, $\text{ECI}(X_1|X_2)=\text{ECI}(Y_2|Y_1)=2+\frac2d$.    
\end{proposition}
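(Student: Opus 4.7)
The plan is to follow the strategy used in the proof of \Cref{prop:condprobMO}, combining it with the projection technique underlying \Cref{prop:condprobwithoneandmax}. I focus on part (a); part (b) is entirely analogous. First, observe that $\bY = \phi(\bA\bZ)$ where $\phi:\R_+^q \to \R_+^2$ is the continuous map $\phi(\bx) = (x_k, \max_{m\neq k} x_m)$. For any rectangular set $(\bc,\binfty) = (c_1,\infty)\times(c_2,\infty) \subset \E_2^{(2)}$, the pre-image
\[ C := \phi^{-1}((\bc,\binfty)) = \bigcup_{m\neq k}\{\bx \in \R_+^q : x_k>c_1,\, x_m>c_2\} \]
is a Borel subset of $\E_q^{(2)}$, being a finite union of rectangular sets. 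Thus $\P(\bY \in t(\bc,\binfty)) = \P(\bA\bZ \in tC)$, and everything reduces to applying \Cref{thm:main:dfk} to $\bA\bZ$ with this set $C$.

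For the MRV of $\bY$, \Cref{prop:pmocmrv}(i) gives $\bZ \in \MRV(\alpha_i,b_i,\mu_i,\E_d^{(i)})$ for every $i$ with $b_i(t)/b_{i+1}(t)\to\infty$, while the moment assumption $\E\|\bA\|^{3\alpha/2+\epsilon}<\infty$ dominates all needed $\alpha_i+\delta$ for $i=1,2$. The hypothesis $\max_{m\neq k}\max_\ell \P(\min(a_{k\ell},a_{m\ell})>0)=0$ means that rows $k$ and $m\neq k$ of $\bA$ almost surely have disjoint column supports, so any $\bz$ with a single positive coordinate can push up at most one of $(\bA\bz)_k$ and $\max_{m\neq k}(\bA\bz)_m$, never both. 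This forces $\P(\Omega_1^{(2)}(\bA))=0$ and hence $i_2^*=2$, with rate $\alpha_2 = 3\alpha/2$ and $b_2(t) = \theta^{1/\alpha} t^{2/(3\alpha)}$ inherited from $\mu_2$.

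Next I compute $\mu_2^*((\bc,\binfty)) = \bE_2^{(2)}[\mu_2(\bA^{-1}(C))]$. By \cite[Remark 7]{das:fasen:kluppelberg:2022}, $\mu_2$ is supported on the union of the $2$-dimensional coordinate planes $T_{\ell,j} = \{\bz\in\R_+^d: z_\ell,z_j>0,\, z_r=0\,\forall r\neq \ell,j\}$, and by \Cref{prop:pmocmrv}(i),
\[ \mu_2(\{\bz\in T_{\ell,j}: z_\ell>u,\, z_j>v\}) = \min(u,v)^{-\alpha}\max(u,v)^{-\alpha/2}. \]
Writing $C = \bigcup_{m\neq k}C_m$ with $C_m = \{x_k>c_1,\,x_m>c_2\}$, and using the disjoint-support condition (if $a_{k\ell}>0$ then $a_{m\ell}=0$ for all $m\neq k$), the event $\bA\bz\in C$ restricted to $T_{\ell,j}$ collapses to the union of $\{a_{k\ell}z_\ell>c_1,\,\max_{m\neq k}a_{mj}z_j>c_2\}$ and its swap $\ell\leftrightarrow j$; these are disjoint, since each forces a different column-to-row-$k$ assignment. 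Summing the $\mu_2$-mass over all $T_{\ell,j}$ and taking expectation over $\bA$ yields the ordered double sum defining $\mu_2^*$. Pairwise intersections $C_{m_1}\cap C_{m_2}$ would force three disjoint column supports and thus live in $\E_d^{(3)}$, contributing at the faster rate $b_3^{\leftarrow}$ and vanishing under $b_2^{\leftarrow}$-scaling.

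Finally, the conditional-probability formulas follow by combining the MRV of $\bY$ on $\E_2^{(2)}$ with the MRV of $Y_1$ and $Y_2$ individually on $\E_2^{(1)}$ (obtained from \Cref{exchangable}(a) applied to the projections $\bx\mapsto x_k$ and $\bx\mapsto \max_{m\neq k}x_m$). For the CoVaR asymptotics, apply \Cref{thm:bivCoVaRmain}(a) with $h(y) = \mu_2^*((y,\infty)\times(1,\infty))$. Under the bounded-support assumption on the non-zero entries of $\bA$, $h$ is a piecewise power function with a transition at some threshold $y_0$: for $y\geq y_0$ one has $\min=a_{k\ell}/y$, giving $h(y)\propto y^{-\alpha}$, and for $y\leq y_0$ one has $\max=a_{k\ell}/y$, giving $h(y)\propto y^{-\alpha/2}$. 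Inverting produces the two regimes of $\upsilon$ in the statement; the uniform convergence $h_\gamma^{\leftarrow}/h^{-1}\to 1$ on a bounded interval around $g(\gamma)\gamma b_2^{\leftarrow}(\VaR_\gamma(Y_2))$ follows as in the proof of \Cref{corollary 4.4}. Part (b) follows identically with $\mu_2$ from \Cref{prop:pmocmrv}(ii), substituting exponent $\alpha d/(2(d+1))$ for $\alpha/2$ throughout. The main obstacle is the bookkeeping in the third paragraph: verifying that only the "cross" cases (one column in row $k$'s support, the other in some row $m\neq k$'s support) contribute to $\mu_2(\bA^{-1}(C)\cap T_{\ell,j})$, and that the higher-order intersections from the union representation of $C$ are negligible. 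The disjoint-support condition is precisely what makes both of these steps tractable.
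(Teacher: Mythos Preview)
Your proposal is correct and follows essentially the same route as the paper. The paper does not spell out a proof for this proposition; it simply notes (after \Cref{prop:condprobwithoneandmax} and in \Cref{rem:str}) that the results ``follow a similar pattern to the results obtained in \Cref{subsec:bivprob}'', i.e., one combines the mapping/projection argument of \Cref{prop:condprobwithoneandmax} with the computation in the proof of \Cref{prop:condprobMO} (which in turn is \Cref{exchangable}(c) together with \Cref{prop:pmocmrv}), and then invokes \Cref{thm:bivCoVaRmain}(a) for the CoVaR asymptotics --- precisely what you do.

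One minor remark: your paragraph on ``pairwise intersections $C_{m_1}\cap C_{m_2}$'' is superfluous. Once you restrict to a plane $T_{\ell,j}$ with $a_{k\ell}>0$ and $a_{kj}=0$ (the only non-trivial case, as you identify), the disjoint-support assumption forces $(\bA\bz)_m=a_{mj}z_j$ for every $m\neq k$, so the union $\bigcup_{m\neq k}\{a_{mj}z_j>c_2\}$ is \emph{exactly} $\{\max_{m\neq k}a_{mj}\cdot z_j>c_2\}$ with no inclusion--exclusion correction needed. There is no leakage to $\E_d^{(3)}$ to worry about at this stage; the set is already a single rectangle on $T_{\ell,j}$.
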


\begin{proposition}[Gaussian copula] \label{prop:condproboneandmaxGC}
Let $\bZ\in \PGC(\alpha, \theta, \Sigma)$ with $\Sigma=(\rho_{\ell j})_{1\leq \ell,j\leq d}$ positive definite.
Suppose $\bA \in \R_+^{q \times d}$ is a  random matrix  with almost surely no trivial rows, independent of $\bZ$, and for fixed $k\in \mathbb{I}_q$ we have
$$\max_{m\in\mathbb{I}_q\setminus\{k\}} \max_{\ell\in\mathbb{I}_d} \P(\min(a_{k\ell}, a_{m\ell})>0)=0.$$
 Also, define
 \begin{align*}
     %\rhom & =\max\left\{\rho_{\ell j}: \ell, j \in \mathbb{I}_d, \ell\neq j  \right\},\\
      \rho^* & =\max\left\{\rho_{\ell j}: \ell,j \in \mathbb{I}_d, \ell\neq j \text{ and }\, \max_{m\in \mathbb{I}_q\setminus\{k\}}\P(\min(a_{k\ell}, a_{mj})>0)>0 \right\}.
   %  \tilde{\rho} & =\max\{\rho_{jl}: j,l \in \mathbb{I}_d, j\neq l \; \text{ and }\, \P(\min(a_{1j}, a_{2l}>0)>0}\}
 \end{align*}
    Suppose 
   { $\E\|\bA\|^{\frac{2\alpha}{1+\rho^*}+\epsilon}<\infty$} for some $\epsilon>0$ and let $\bX=\bA\bZ$. 
    Then we have \linebreak $\bY=(Y_1,Y_2)^\top=(X_k, \max_{m\in \mathbb{I}_q\setminus\{k\}} X_m )^\top\in\MRV(\alpha_2^*,b_2^*,{\mu}_2^*,\E_2^{(2)})$ with
    \begin{align*}
        & \alpha_2^* =\frac{2\alpha}{1+\rho^*}, \quad\quad
        b_2^{*\leftarrow}(t)= C^*({\rho^*, \alpha}) (\theta t^{-\alpha})^{-\frac{2}{1+\rho^*}}(\log t)^{\frac{\rho^*}{1+\rho^*}} ,\\
    & {\mu}_2^*((\bx,\infty)) = D^*({\rho^*,\alpha, \bA})(x_1x_2)^{-\frac{\alpha}{1+\rho^*}}, \quad \bx\in(\bzero,\binfty),
    %\sum_{\genfrac{}{}{0pt}{}{j,l=1 }{\rho_{jl}=\rho^*}}^d\E(a_{1j}a_{2l})^{\frac{\alpha}{1+\rho^*}} >0, \quad \bx\in(\bzero,\binfty).
    \end{align*}
    where for $\rho\in(-1,1), \alpha>0, \theta>0$ and $\bA\in \R_+^{2\times d}$, we define
    \begin{eqnarray*}
    C^*(\rho, \alpha) & = &{ ({2\pi})^{-\frac{1}{1+\rho}}} (2\alpha)^{\frac{\rho}{1+\rho}}, \label{eq:Cgaussoneandmax}\\
    D^*(\rho, \theta, \bA) & = &\frac{1}{2\pi} \frac{(1+\rho)^{3/2}}{(1-\rho)^{1/2}}\sum_{(\ell,j):\, \rho_{\ell j}=\rho }\E\left[a_{k\ell}^{\frac{\alpha}{1+\rho}}  \max_{m\in \mathbb{I}_q\setminus\{k\}} \{a_{m j}\}^{\frac{\alpha}{1+\rho}}\right].\label{eq:Dgaussoneandmax}        
    \end{eqnarray*}
Moreover, as $t\to\infty$,
        \begin{align*}
          \P(Y_1>x_1t|Y_2>x_2t) & \sim   (\theta t^{-\alpha})^{\frac{1-\rho^*}{1+\rho^*}}{(\log t)^{-\frac{\rho}{1+\rho}}}x_1^{-\frac{\alpha}{1+\rho^*}}x_2^{\frac{\alpha\rho^*}{1+\rho^*}}\frac{C^*(\rho^*,\alpha)^{-1}D^*(\rho^*,\alpha, \bA)}{\sum\limits_{j=1}^d\E\big[\max_{m\in \mathbb{I}_q\setminus \{k\}} a_{m j}^{\alpha} \big]}, \\ % \intertext{and,}
         \P(Y_2>tx_2|Y_1>tx_1) & \sim (\theta t^{-\alpha})^{\frac{1-\rho^*}{1+\rho^*}}{(\log t)}^{-\frac{\rho}{1+\rho}}x_1^{\frac{\alpha\rho^*}{1+\rho^*}}x_2^{-\frac{\alpha}{1+\rho^*}}\frac{C^*(\rho^*,\alpha)^{-1}D^*(\rho^*,\alpha, \bA)}{\sum\limits_{\ell=1}^d\E\big[ a_{k\ell}^{\alpha} \big]}.
        \end{align*}
    Additionally, with $g(\gamma)=\gamma^{\frac{1-\rho^*}{1+\rho^*}}(-\alpha^{-1}\log \gamma)^{-\frac{\rho^*}{1+\rho^*}}$ and $0<\upsilon< 1$ we have as $\gamma\downarrow 0$,
     \begin{align*}
     \CoVaR_{\upsilon g(\gamma)|\gamma}(Y_1|Y_2) & \sim \upsilon^{-\frac{1+\rho^*}{\alpha}} \frac{\left(C^*(\rho^*,\alpha)^{-1}D^*(\rho^*,\alpha, \bA)\right)^{\frac{1+\rho^*}{\alpha}}}{\left(\sum\limits_{\ell=1}^d\E\big[ a_{k\ell}^{\alpha} \big]\sum\limits_{j=1}^d\E\big[\max_{m\in \mathbb{I}_q\setminus \{k\}} a_{mj}^{\alpha} \big]\right)^{\frac{1}{\alpha}}}\VaR_{\gamma}(Y_1),
        %\CoVaR_{\upsilon g(\gamma)|\gamma}(Y_1|Y_2) & \sim \upsilon^{-\frac{1+\rho^*}{\alpha}} \left(\frac{C^*(\rho^*,\alpha,\theta)^{-1}D^*(\rho^*,\alpha, \bA)}{{\left(\sum\limits_{\ell=1}^d\E\big[\max_{m\in \mathbb{I}_q\setminus \{k\}} a_{m\ell}^{\alpha} \big]\right)^\frac{2}{1+\rho^*}}}\right)^{\frac{1+\rho^*}{\alpha}}\VaR_{\gamma}(Y_2),
        \intertext{and}
         \CoVaR_{\upsilon g(\gamma)|\gamma}(Y_2|Y_1) & \sim \upsilon^{-\frac{1+\rho^*}{\alpha}}\frac{\left(C^*(\rho^*,\alpha)^{-1}D^*(\rho^*,\alpha, \bA)\right)^{\frac{1+\rho^*}{\alpha}}}{{\left(\sum\limits_{\ell=1}^d\E\big[ a_{k\ell}^{\alpha}\big]\sum\limits_{j=1}^d\E\big[\max_{m\in \mathbb{I}_q\setminus \{k\}} a_{m j}^{\alpha} \big] \big]\right)^\frac{1}{\alpha}}}\VaR_{\gamma}(Y_2).
         %\CoVaR_{\upsilon g(\gamma)|\gamma}(Y_2|Y_1) & \sim \upsilon^{-\frac{1+\rho^*}{\alpha}} \left(\frac{C^*(\rho^*,\alpha,\theta)^{-1}D^*(\rho^*,\alpha, \bA)}{{\left(\sum\limits_{\ell=1}^d\E\big[ a_{k\ell}^{\alpha} \big]\right)^\frac{2}{1+\rho^*}}}\right)^{\frac{1+\rho^*}{\alpha}}\VaR_{\gamma}(Y_1).
     \end{align*}
      {Finally, $g(t^{-1})\in \RV_{-\frac{1-\rho^*}{1+\rho^*}}$ and hence, } $\text{ECI}(Y_1|Y_2)=\text{ECI}(Y_2|Y_1)=\frac{1+\rho^*}{1-\rho^*}$.
\end{proposition}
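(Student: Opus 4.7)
The plan is to proceed in parallel with the proof of Proposition~\ref{prop:condprobGC}, adapting it to the setting where one coordinate of $\bY$ is a maximum rather than a single entity. First, I would observe that the map $T:\R_+^q\to\R_+^2$ given by $T(\bx)=(x_k,\max_{m\in\mathbb{I}_q\setminus\{k\}}x_m)$ is continuous and positively $1$-homogeneous, so $\bY=T(\bA\bZ)$ will inherit multivariate regular variation from $\bA\bZ$ via the mapping theorem in \citet[Theorem 2.3]{lindskog:resnick:roy:2014}, provided we can identify the preimage $(T\circ\bA)^{-1}((\bx,\binfty))$ and apply \Cref{thm:main:dfk} to it. This reduces the problem to characterising the contribution of $\mu_2$ (from \Cref{prop:premain}) through $\bA$ along this preimage.

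Second, as in \Cref{prop:condprobGC}(b), the MRV limit measure $\mu_2$ of $\bZ$ on $\E_d^{(2)}$ is supported only on the $(\ell,j)$ coordinate planes with $\rho_{\ell j}=\rhom$. Since the second coordinate of $\bY$ ranges over all $m\ne k$, the relevant maximum correlation is $\rho^*$, defined via pairs for which $a_{k\ell}$ and some $a_{mj}$ (with $m\ne k$) can be simultaneously positive. If $\rho^*<\rhom$, the same reduction argument as in the proof of \Cref{prop:condprobGC}(b) applies: those columns $\ell$ of $\bA$ for which $\rho_{\ell j}>\rho^*$ for some $j$ must be a.s.\ zero in row $k$ and in every row $m\ne k$ under our assumption; hence they can be deleted, yielding a reduced model $\bZ_M\in \PGC(\alpha,\theta,\Sigma_M)$ with $\rho^*=\max_{\ell,j\in M,\ell\ne j}\rho_{\ell j}$, so WLOG we may assume $\rho^*=\rhom$ from the outset.

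Third, I would compute $\mu_2^*((\bx,\binfty))$ directly. Writing
\[
(T\circ\bA)^{-1}((\bx,\binfty))=\bigcup_{m\ne k}\bigl\{\bz\in\R_+^d: (\bA\bz)_k>x_1,\,(\bA\bz)_m>x_2\bigr\},
\]
and intersecting with the support of $\mu_2$ (the union of coordinate planes $\mathbb{T}_{\ell,j}=\{z_\ell,z_j>0,\,z_r=0\,\forall r\ne\ell,j\}$), I would apply \Cref{thm:main:dfk} together with the explicit form of $\mu_2$ on each $\mathbb{T}_{\ell,j}$ from the proof of \Cref{prop:condprobGC}. On $\mathbb{T}_{\ell,j}$ the linear functionals $(\bA\bz)_k=a_{k\ell}z_\ell+a_{kj}z_j$ reduce to $a_{k\ell}z_\ell$ (resp.\ $a_{kj}z_j$) after using the assumption $\max_{m\ne k,\ell}\P(\min\{a_{k\ell},a_{m\ell}\}>0)=0$, which forces either $a_{k\ell}$ or the full column of indices $\{a_{m\ell}:m\ne k\}$ to vanish. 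A Breiman-type argument then yields the stated moment expression $D^*(\rho^*,\alpha,\bA)(x_1x_2)^{-\alpha/(1+\rho^*)}$; the moment condition $\E\|\bA\|^{2\alpha/(1+\rho^*)+\epsilon}<\infty$ is exactly the hypothesis of \Cref{thm:main:dfk}(iv) for $i_k^*=2$.

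The main obstacle will be carefully handling the maximum operation inside the union above: unlike in \Cref{prop:condprobGC}, the event $\bigcup_{m\ne k}\{(\bA\bz)_m>x_2\}$ is not a single linear tail set but a union of $q-1$ such sets, and one must verify that at leading order the probabilities of pairwise intersections over distinct $m$'s contribute nothing extra. This follows because such triple extremes would require three or more components of $\bZ$ to be simultaneously large, which lives in $\E_d^{(3)}$ and is therefore of smaller order (since $b_2(t)/b_3(t)\to\infty$ by \Cref{prop:bi}); an inclusion-exclusion argument combined with \Cref{prop:pmocmrv}-style support considerations for $\mu_2$ concludes this step. Once $\mu_2^*$ is identified, the MRV of $Y_1,Y_2$ individually on $\E_2^{(1)}$ follows from a standard Breiman argument using the MRV of $\bZ$ on $\E_d^{(1)}$, giving the denominators $\sum_\ell\E[a_{k\ell}^\alpha]$ and $\sum_\ell\E[\max_{m\ne k}a_{m\ell}^\alpha]$ in the conditional probabilities. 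Finally, the CoVaR asymptotics and the equalities $\text{ECI}(Y_1|Y_2)=\text{ECI}(Y_2|Y_1)=(1+\rho^*)/(1-\rho^*)$ follow from \Cref{thm:bivCoVaRmain}(a) by inverting $h(y)=y^{-\alpha/(1+\rho^*)}$ exactly as in \Cref{CoVaR Gaussian} and the proof of \Cref{prop:condprobGC}.
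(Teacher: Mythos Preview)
Your proposal is correct and follows essentially the same route the paper takes: the paper states (after \Cref{prop:condprobwithoneandmax}) that the argument is a direct application of \Cref{thm:main:dfk} together with the mapping theorem of \citet[Theorem~2.3]{lindskog:resnick:roy:2014}, and that the specific dependence cases in \Cref{sec:app:morethantwo} ``follow a similar pattern'' to \Cref{subsec:bivprob}; your write-up fleshes out precisely this pattern, including the column-deletion reduction from \Cref{prop:condprobGC}(b) to pass from $\rhom$ to $\rho^*$, and the final appeal to \Cref{thm:bivCoVaRmain}(a) for CoVaR. One small simplification: on each coordinate plane $\mathbb{T}_{\ell,j}$ the disjointness hypothesis forces $(\bA\bz)_m=a_{mj}z_j$ whenever $a_{k\ell}>0$, so the union over $m\ne k$ collapses to $\{\max_{m\ne k}a_{mj}\cdot z_j>x_2\}$ and no inclusion--exclusion is actually needed.
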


\end{document}